\newcommand{\GF}{{\sf GF}$(2)$\xspace}
\newcommand{\rank}{{\rm rank}\xspace}
\newcommand{\GFrank}{{\sf GF}$(2)$-{\rm rank}\xspace}
\newcommand{\hdist}{d_H}
\newcommand{\specialcell}[2][c]{%
  \begin{tabular}[#1]{@{}c@{}}#2\end{tabular}}
\DeclareMathOperator{\operatorClassNP}{{\sf NP}}
\newcommand{\classNP}{\ensuremath{\operatorClassNP}}
\DeclareMathOperator{\operatorClassCoNP}{{\sf coNP}}
\newcommand{\classCoNP}{\ensuremath{\operatorClassCoNP}}
\DeclareMathOperator{\operatorClassFPT}{{\sf FPT}\xspace}
\newcommand{\classFPT}{\ensuremath{\operatorClassFPT}\xspace}
\DeclareMathOperator{\operatorClassW}{{\sf W}}
\newcommand{\classW}[1]{\ensuremath{\operatorClassW[#1]}}
\newcommand{\bfA}{\mathbf{A}} 
\newcommand{\bfB}{\mathbf{B}} 
\newcommand{\bfD}{\mathbf{D}} 
\newcommand{\bfP}{\mathbf{P}} 
\newcommand{\bfU}{\mathbf{U}}
\newcommand{\bfV}{\mathbf{V}}  
\newcommand{\bfa}{\mathbf{a}} 
\newcommand{\bfb}{\mathbf{b}} 
\newcommand{\bfc}{\mathbf{c}} 
\newcommand{\bfd}{\mathbf{d}} 
\newcommand{\bfe}{\mathbf{e}}
\newcommand{\bfs}{\mathbf{s}} 
\newcommand{\bfS}{\mathbf{S}} 
\newcommand{\bfx}{\mathbf{x}} 
\newcommand{\bfX}{\mathbf{X}} 
\newcommand{\bfY}{\mathbf{Y}} 
\newcommand{\bfy}{\mathbf{y}}
\newcommand{\Oh}{\mathcal{O}}
\newtheorem{theorem}{Theorem}
\newtheorem{lemma}{Lemma}
\newtheorem{claim}{Claim}[section]
\newtheorem{definition}{Definition}
\newtheorem{observation}{Observation}
\newtheorem{proposition}{Proposition}
\newtheorem{redrule}{Reduction Rule}
\newtheorem{reduction}{Reduction Rule}[section]
\newcommand{\yes}{{yes}}
\newcommand{\no}{{no}}
\newcommand{\yesinstance}{\yes-instance\xspace}
\newcommand{\noinstance}{\no-instance\xspace}
\newcommand{\yesinstances}{\yes-instances\xspace}
\newcommand{\pname}{\textsc}
\newcommand{\ProblemFormat}[1]{\pname{#1}}
\newcommand{\ProblemIndex}[1]{\index{problem!\ProblemFormat{#1}}}
\newcommand{\ProblemName}[1]{\ProblemFormat{#1}\ProblemIndex{#1}{}\xspace}
\newcommand{\probClust}{\ProblemName{Binary $r$-Means}}
\newcommand{\probAtMostClust}{\ProblemName{Binary $r$-Means}}
\newcommand{\probClustSelect}{\ProblemName{Cluster Selection}}
\newcommand{\probPClust}{\ProblemName{$\bfP$-Matrix Approximation}}
\newcommand{\probBFact}{\ProblemName{Low Boolean-Rank Approximation}}
\newcommand{\probFact}{\ProblemName{Low GF(2)-Rank Approximation}}
\newcommand{\probPClustExt}{\ProblemName{Extendable $\bfP$-Matrix Approximation}}
\newcommand{\probCons}{\ProblemName{Consensus String with Outliers}}
\newlength{\RoundedBoxWidth}
\newsavebox{\GrayRoundedBox}
\newenvironment{GrayBox}[1]%
   {\setlength{\RoundedBoxWidth}{.93\textwidth}
    \def\boxheading{#1}
    \begin{lrbox}{\GrayRoundedBox}
       \begin{minipage}{\RoundedBoxWidth}}%
   {   \end{minipage}
    \end{lrbox}
    \begin{center}
    \begin{tikzpicture}%
       \node(Text)[draw=black!20,fill=white,rounded corners,%
             inner sep=2ex,text width=\RoundedBoxWidth]%
             {\usebox{\GrayRoundedBox}};
        \coordinate(x) at (current bounding box.north west);
        \node [draw=white,rectangle,inner sep=3pt,anchor=north west,fill=white] 
        at ($(x)+(6pt,.75em)$) {\boxheading};
    \end{tikzpicture}
    \end{center}}     
\newenvironment{defproblemx}[2][]{\noindent\ignorespaces%
                                \FrameSep=6pt%
                                \parindent=0pt%
                \vspace*{-1.5em}
                \ifthenelse{\isempty{#1}}{%
                  \begin{GrayBox}{\textsc{#2}}%
                }{%
                  \begin{GrayBox}{\textsc{#2} parameterized by~{#1}}%
                }
                \begin{tabular*}{\textwidth}{@{\hspace{.1em}} >{\itshape} p{1.8cm} p{0.8\textwidth} @{}}%
            }{
                \end{tabular*}%
                \end{GrayBox}%
                \ignorespacesafterend
            }
\newcommand{\defproblema}[3]{
  \begin{defproblemx}{#1}
    Input:  & #2 \\
    Task: & #3
  \end{defproblemx}
}%
\begin{document}

\title{Parameterized  Low-Rank Binary Matrix Approximation \thanks{The research leading to these results have  been supported by the Research Council of Norway via the projects ``CLASSIS'' and ``MULTIVAL".}
}

\author{
Fedor V. Fomin\thanks{
Department of Informatics, University of Bergen, Norway.} \addtocounter{footnote}{-1}
\and
Petr A. Golovach\footnotemark{} \addtocounter{footnote}{-1}
\and 
Fahad Panolan\footnotemark{}
}

\date{}

\maketitle

\begin{abstract}
We provide a number of algorithmic results for  the following family of problems: For a given binary $m\times n$ matrix $\bfA$ and integer $k$, decide whether there is  a ``simple'' binary matrix $\bfB$ which differs  from $\bfA$ in at most $k$ entries. For an integer $r$, the 
  ``simplicity'' of $\bfB$  is characterized  as follows. 
\begin{itemize}
\item 
\probClust: Matrix $\bfB$ has   at most $r$ different columns.  This problem is known to be NP-complete already for $r=2$. We show that the problem is solvable in time $2^{\Oh(k\log k)}\cdot(nm)^{\Oh(1)}$ and thus is fixed-parameter tractable parameterized by $k$. We prove that the problem admits a polynomial kernel when parameterized by $r$ and $k$ but 
it has no polynomial kernel when parameterized by $k$ only unless $\classNP\subseteq\classCoNP/{\rm poly}$.
 We also complement these result by showing that when being parameterized by $r$ and $k$, the problem admits an algorithm of running time 
$ 2^{\Oh(r\cdot \sqrt{k\log{(k+r)}})}(nm)^{\Oh(1)}$, which is subexponential in $k$ for $r\in \Oh(k^{1/2 -\varepsilon})$ for any $\varepsilon>0$.  

\item 
\probFact: Matrix $\bfB$  is of   \GF-rank  at most $r$. This problem is known to be NP-complete already for $r=1$. It also known to be W[1]-hard when parameterized by $k$.
Interestingly, when parameterized by $r$ and $k$, the problem is not only fixed-parameter tractable, but it is solvable in  time   $ 2^{\Oh(r^{ 3/2}\cdot \sqrt{k\log{k}})}(nm)^{\Oh(1)}$, which is subexponential in $k$.

\item 
\probBFact: Matrix $\bfB$ is of 
Boolean rank at most $r$. 
The problem  is known to be NP-complete  for $k=0$ as well as  for $r=1$. 
We show that it is solvable in subexponential in $k$ time $2^{\Oh(r2^r\cdot \sqrt{k\log k})}(nm)^{\Oh(1)}$.    
\end{itemize}
\end{abstract}

\section{Introduction}\label{sec:intro}
In this paper we consider the following generic problem. Given a binary  $m\times n$ matrix, that is a matrix with entries from domain $\{0,1\}$, 
\begin{equation*}\bfA=
\begin{pmatrix}
a_{11}&a_{12}& \ldots& a_{1n}\\
a_{21}&a_{21}& \ldots& a_{2n}\\
\vdots & \vdots & \ddots & \vdots\\ 
a_{m1}&a_{m2} &\ldots &a_{mn}
\end{pmatrix}=(a_{ij})\in \{0,1\}^{m\times n}, 
\end{equation*}
the task is to find a ``simple'' binary  $m\times n$ matrix $\bfB$  which  approximates $\bfA$ subject to some specified constrains. One of the   most widely studied error measures is the \emph{Frobenius norm}, which for a matrix $\bfA$ is defined as
\begin{equation*}
\|\bfA\|_F=\sqrt{\sum_{i=1}^m\sum_{j=1}^n|a_{ij}|^2}.
\end{equation*}
Here the sums are taken over $\mathbb{R}$. Then for a given nonnegative integer $k$, we want to decide whether there is a matrix $\bfB$ with certain properties such that 
\[\|\bfA-\bfB\|_F^2\leq k.\]

We consider the binary matrix approximation problems when for a given integer $r$,  the  approximation binary matrix $\bfB$ 
\begin{itemize}
\item[(A1)] has at most $r$ pairwise-distinct columns, 
\item[(A2)] is of   \GF-rank  at most $r$, and 
\item[(A3)] is of Boolean rank at most $r$.
\end{itemize}
Each of these variants is very well-studied. Before defining each of the problems formally and providing an overview of the relevant results, the following observation is in order. Since we approximate a binary matrix by a binary matrix, in this case minimizing the Frobenius norm of $\bfA-\bfB$ is equivalent to minimizing the $\ell_0$-norm of $\bfA-\bfB$,  where  the measure $\|\bfA\|_0$ is the number of non-zero entries of matrix $\bfA$.
We also will be using another  equivalent way of measuring the quality of approximation of a binary matrix $\bfA$ by a binary matrix $\bfB$ by taking the sum of the Hamming distances between their columns. 
Let us recall that the \emph{Hamming distance} between two vectors $\bfx, \bfy\in\{0,1\}^m$, where $\bfx=(x_1,\ldots,x_m)^\intercal$ and $\bfy=(y_1,\ldots,y_m)^\intercal$, is $\hdist(\bfx,\bfy)=\sum_{i=1}^m |x_i-y_i|$ or, in   words, the number of positions $i\in\{1,\ldots,m\}$ where $x_i$ and $y_i$ differ. 
Then for   binary 
$m\times n$ matrix $\bfA$   with columns $\bfa^1,\ldots,\bfa^n$ and  matrix $\bfB$   with columns $\bfb^1,\ldots,\bfb^n$, we define 
\begin{equation*}
d_H(\bfA,\bfB)= \sum_{i=1}^n d_H(\bfa^i, \bfb^i).
\end{equation*}
 In other words, $d_H(\bfA,\bfB)$ is the number of positions with different entries in matrices $\bfA$ and $\bfB$. 
Then we  have the following.
\begin{equation}\label{eqn:FrHam}
\|\bfA-\bfB\|_F^2= \|\bfA-\bfB\|_0=d_H(\bfA,\bfB)=\sum_{i=1}^n d_H(\bfa^i, \bfb^i).
\end{equation}

\medskip\noindent\textbf{Problem (A1): \probClust.}
By \eqref{eqn:FrHam}, the problem  of approximating a binary $m\times n$ matrix $\bfA$ by a binary  $m\times n$ matrix $\bfB$ with at most $r$ different columns (problem (A1)) is equivalent to the following clustering problem. For given a set of $n$ binary $m$-dimensional vectors $\bfa^1,\ldots,\bfa^n$ (which constitute the columns of matrix $\bfA$) and a positive integer $r$,  \probClust aims to partition the vectors in at most $r$ clusters,  as to minimize the sum of within-clusters sums of Hamming distances to their binary means. More formally,

\defproblema{\probAtMostClust}%
{An $m\times n$ matrix $\bfA$ with columns $(\bfa^1,\ldots,\bfa^n)$, a positive integer $r$ and a nonnegative integer $k$.}%
{Decide whether there is 
a positive integer $r'\leq r$, a partition $\{I_1,\ldots,I_{r'}\}$ of $\{1,\ldots,n\}$ and vectors $\bfc^1,\ldots,\bfc^{r'}\in\{0,1\}^m$ such that 
$$\sum_{i=1}^{r'}\sum_{j\in I_i}\hdist(\bfc^i,\bfa^j)\leq k?$$
}

To see the equivalence of \probAtMostClust and problem (A1), it is sufficient to observe that the pairwise different columns of an approximate matrix $\bfB$ such that $\|\bfA-\bfB\|_0\leq k$ can be used as vectors  $\bfc^1,\ldots,\bfc^{r'}$, $r'\leq r$.
As far as the mean vectors are selected, a partition  of columns of $\bfA$ can be obtained by assigning each column-vector $\bfa^i$ to its closest mean vector  $\bfc^j$ (ties breaking arbitrarily).  Then for such clustering the total sum of distances from vectors within cluster to their centers  does not exceed $k$. Similarly, solution to \probAtMostClust can be used as columns (with possible repetitions) of matrix $\bfB$ such that   $\|\bfA-\bfB\|_0\leq k$. For that  we put $\bfb^i=\bfc^j$, where  $\bfc^j$ is the closest  vector to $\bfa^i$.

This problem was introduced by Kleinberg, Papadimitriou,  and Raghavan \cite{KleinbergPR04} as one of the examples of segmentation problems.  Approximation algorithms for optimization versions of this  problem  were given by 
Alon and Sudakov \cite{AlonS99} and Ostrovsky  and Rabani \cite{OstrovskyR02},  who referred to it as clustering in the Hamming cube. 
In bioinformatics, the case when $r=2$   is known  under the name \textsc{Binary-Constructive-MEC} (Minimum  Error  Correction) and was studied as a model for the 
\textsc{Single Individual
Haplotyping} problem \cite{CilibrasiIK07}. 
Miettinen et al. \cite{MiettinenMGDM08} studied this problem under the name
 \textsc{Discrete Basis Partitioning Problem}.

\probAtMostClust can be seen as  a discrete variant of the well-known \textsc{$k$-Means Clustering}.   (Since in problems (A2) and (A3) we use   $r$ for the rank of the approximation matrix, we also use $r$ in (A1) to denote the number of clusters which is commonly denoted by $k$ in the literature on means clustering.) This problem has been studied thoroughly, particularly in the areas of computational geometry and machine learning.  We refer to \cite{DBLP:journals/jacm/AgarwalHV04,DBLP:conf/stoc/BadoiuHI02,KumarSS10}  for further references to the works  on \textsc{$k$-Means Clustering}.

\medskip\noindent\textbf{Problem (A2): \probFact.}  
Let  $\bfA$ be a $m\times n$ binary matrix.  
In this case we view the elements of   $\bfA$ as elements of \GF, the Galois field of two elements. Then the \GFrank of $\bfA$ is the minimum $r$ such that  $\bfA =\bfU \cdot \bfV$, where  $\bfU$ is $m\times r$ and $ \bfV$ is $r\times n$ binary matrices, and arithmetic operations are over \GF. Equivalently, this is the minimum number of binary vectors, such that every column (row)  of $\bfA$ is a linear combination (over \GF) of 
 these vectors. 
Then (A2) is the following problem.

\defproblema{\probFact}%
{An $m\times n$-matrix $\bfA$ over \GF, a positive integer $r$ and a nonnegative integer $k$.}%
{Decide whether there is a binary  $m\times n$-matrix $\bfB$   with \GFrank$(\bfB)\leq r$ such that $\|\bfA-\bfB\|_F^2\leq k$.
}
\probFact  arises naturally in applications involving binary data sets and 
serves as an important tool in dimension reduction for high-dimensional data sets with binary attributes, see  \cite{DanHJWZ15,Jiang2014,GutchGYT12,Koyuturk2003,PainskyRF16,Shen2009,Yeredor11} for further references and  numerous applications of the problem.

\probFact can be rephrased as a special variant (over \GF)   of the problem finding the rigidity of a matrix. 
(For a target rank $r$, the {\em rigidity} of a matrix $A$ over a field $\mathbb{F}$ is the minimum Hamming distance between $A$ and a matrix of rank at most $r$.) Rigidity is a classical concept in Computational Complexity Theory studied due to its connections with lower bounds for arithmetic circuits \cite{grigoRigid,grigoRigidTra,Valiant77,raz89}. We refer 
to~\cite{Lokam2009} for an extensive  survey on this topic. 
 
 \probFact is also a special case of a general class of problems approximating a matrix by a matrix with a small non-negative rank.  Already \textsc{Non-negative Matrix Factorization} (NMF) is a nontrivial problem and it appears in many settings.   In particular, in machine learning,  approximation by a low non-negative rank matrix  has gained extreme popularity after  the influential article in Nature by  Lee and Seung  \cite{lee1999learning}.   NMF is an ubiquitous problem and besides machine learning, it has been independently introduced and studied in combinatorial optimization \cite{FioriniMPTW15,Yannakakis91}, and communication complexity   \cite{AhoUY83,LovaszS88}. An extended overview of  applications of NMF in statistics, quantum mechanics, biology, economics, and chemometrics,  can be found in the work of Cohen and Rothblum  \cite{cohen1993nonnegative} and  recent books \cite{cichocki2009nonnegative,naiknon,book:1292686}. 

\medskip\noindent\textbf{Problem (A3): \probBFact.} 
 Let $\bfA$ be a binary $m\times n$ matrix. 
 This time we view the elements of $\bfA$ as \emph{Boolean} variables. 
The \emph{Boolean rank} of $\bfA$ is the minimum $r$ such that $\bfA=\bfU\wedge \bfV$ for a Boolean $m\times r$ matrix $\bfU$ and a Boolean $r\times n$ matrix $\bfV$, where the product is Boolean, that is,  the logical $\wedge$ plays the role of multiplication and $\vee$ the role of sum. Here  $0\wedge 0=0$, $0 \wedge 1=0$, $1\wedge 1=1$ , $0\vee0=0$, $0\vee1=1$, and  $1\vee 1=1$.  
Thus the  matrix product is over the Boolean semi-ring $({0, 1}, \wedge, \vee)$. This can be equivalently expressed
as the  normal matrix product with addition defined as $1 + 1 =1$. Binary matrices equipped with such algebra are called \emph{Boolean
matrices}. 
Equivalently, $\bfA =(a_{ij})\in \{0,1\}^{m\times n}$ has the Boolean rank $1$ if $\bfA=\bfx^\intercal\wedge \bfy$, where $\bfx =(x_1, x_2, \dots, x_m)\in\{0,1\}^m$ and $\bfy=(y_1, y_2, \dots, y_n)\in\{0,1\}^n$ are nonzero vectors and the product is  Boolean,
that is,   $a_{ij}=x_i \wedge y_j$. 
Then the Boolean rank of $\bfA$ is the minimum integer $r$ such that $\bfA=\bf\bfa^{(1)}\vee\cdots\vee \bf\bfa^{(r)}$,  where $\bf\bfa^{(1)},\ldots, \bf\bfa^{(r)}$  are matrices of Boolean rank 1; 
zero matrix is the unique matrix with the Boolean rank $0$. Then \probBFact is defined as follows.

\defproblema{\probBFact}%
{A Boolean $m\times n$ matrix $\bfA$, a positive integer $r$ and a nonnegative integer $k$.}%
{Decide whether there is a Boolean $m\times n$ matrix $\bfB$ of Boolean rank at most $r$ such that  $d_H(\bfA,\bfB)\leq k$. 
}

For $r=1$ \probBFact coincides with \probFact but for $r>1$ these are different problems.
 
 Boolean low-rank approximation  has  attracted much attention, especially in the data mining and knowledge discovery communities. 
In data mining, matrix decompositions are often used to produce concise representations of data. 
Since much of the real data such as word-document data  is binary or even Boolean in nature, 
Boolean low-rank approximation could provide a deeper insight into 
the semantics associated with the original matrix. There is a big body of work done on \probBFact, see e.g.  \cite{Bartl2010,BelohlavekV10,DanHJWZ15,LuVAH12,MiettinenMGDM08,DBLP:conf/kdd/MiettinenV11,DBLP:conf/icde/Vaidya12}. In the literature the problem appears under different names like \textsc{Discrete Basis Problem} \cite{MiettinenMGDM08} or 
\textsc{Minimal Noise Role Mining Problem} \cite{VaidyaAG07,LuVAH12,Mitra:2016}.

\medskip\noindent\textbf{\probPClust.}  While at   first glance  \probFact and \probBFact look very similar, algorithmically the latter problem is  more challenging. The fact that \GF is a field allows to play with different equivalent definitions of rank like row rank and column ranks. We exploit this strongly in our algorithm for \probFact.  For \probBFact the matrix product is over the Boolean semi-ring and nice properties of the \GF-$\rank$ cannot be used 
 here  (see, e.g.~\cite{Guterman08}). Our algorithm for  \probBFact is based on solving an auxiliary \probPClust problem, where the task is  to approximate a matrix $\bfA$ by a matrix $\bfB$ whose block structure is defined  by a given pattern matrix $\bfP$. It appears, that \probPClust is also an interesting problem on its own. 
 
More formally, 
  let $\bfP=(p_{ij})\in \{0,1\}^{p\times q}$ be a binary $p\times q$ matrix.  
   We say that a binary $m\times n$  matrix 
    $\bfB=(b_{ij})\in \{0,1\}^{m\times n}$ is a \emph{$\bfP$-matrix} if there is a partition $\{I_1,\ldots,I_p\}$ of $\{1,\ldots,m\}$ and a partition $\{J_1,\ldots,J_q\}$ of $\{1,\ldots,n\}$ such that for every $i\in\{1,\ldots,p\}$, $j\in\{1,\ldots,q\}$,  $s\in I_i$ and $t\in J_j$, $b_{st}=p_{ij}$. In words,   the columns and rows of $\bfB$ can be permuted such that the block structure  of the resulting matrix is defined by $\bfP$.

\defproblema{\probPClust}%
{An $m\times n$ binary matrix $\bfA$, a pattern  binary matrix $\bfP$ and a nonnegative integer $k$.}%
{Decide whether there is an $m\times n$ $\bfP$-matrix $\bfB$ such that $\|\bfA-\bfB\|_F^2\leq k$.
}

 The notion of $\bfP$-matrix was implicitly defined by Wulff et al. \cite{WulffUB13} as an auxiliary tool for their approximation algorithm for the related monochromatic biclustering problem.  \probPClust is also closely related to the problems arising in 
tiling transaction databases (i.e., binary matrices), where the task is to find a  tiling covers of a given binary matrix with a small number of submatrices full of 1s, see
\cite{DBLP:conf/dis/GeertsGM04}.

  Since \probFact remains  \classNP-complete for $r=1$
  ~\cite{GillisV15}, we have that  \probPClust is \classNP-complete already  for very simple pattern matrix $P=
\begin{pmatrix}
0&0\\
0&1
\end{pmatrix}.
$

\subsection{Related work}In this subsection we give an overview of previous related algorithmic and complexity results for problems (A1)--(A3), as well as related problems. Since each of the problems has many practical applications, there is a tremendous amount of literature on heuristics and implementations.   In this overview we concentrate on known results about algorithms with proven guarantee, with emphasis on parameterized complexity.

\medskip\noindent\textbf{Problem (A1): \probClust.}
\probAtMostClust is trivially solvable in polynomial time for $r=1$, and as was shown by Feige in~\cite{Feige14b}, is \classNP-complete for every  $r\geq 2$.

  PTAS (polynomial time approximation scheme) for optimization variants of  \probClust were developed in  
\cite{AlonS99,OstrovskyR02}.  Approximation algorithms for  more general \textsc{$k$-Means Clustering} is  a thoroughly studied topic
\cite{DBLP:journals/jacm/AgarwalHV04,DBLP:conf/stoc/BadoiuHI02,KumarSS10}.
Inaba et al.~\cite{inaba1994applications} have shown that  the general  
  \textsc{$k$-Means Clustering}  is solvable in time 
  $n^{mr +1}$ (here $n$ is the number of vectors, $m$ is the dimension and $r$ the number of required clusters). We are not aware of any, except the trivial brute-force, exact algorithm for \probClust prior to our work.

\medskip\noindent\textbf{Problem (A2): \probFact.} 
When low-rank approximation matrix $\bfB$ is not required to be binary, then 
the optimal Frobenius norm rank-$r$ approximation of (not necessarily binary) matrix $\bfA$ can be efficiently found via the singular value decomposition
(SVD). This is an extremely well-studied problem and we refer to  surveys for an overview of algorithms for  low rank approximation \cite{RavindrV08,MahonyM11,Woodr14}.  
However, SVD does not guarantee to find an optimal solution in the case when additional structural constrains on  the low-rank approximation matrix $\bfB$ (like being non-negative or  binary) are imposed.

 In fact, most of these constrained variants of low-rank approximation are NP-hard. In particular, 
Gillis and Vavasis ~\cite{GillisV15} and Dan et al. \cite{DanHJWZ15} have shown that \probFact is \classNP-complete  for every $r\geq1$.
Approximation algorithms for the optimization version of  \probBFact were considered in 
 \cite{DBLP:conf/icdm/JiangH13a,Jiang2014,DanHJWZ15,Koyuturk2003,Shen2009,BringmannKW17} among others.

 Most of the known results about the parameterized complexity of the problem follows from the results for \textsc{Matrix Rigidity}.
 Fomin et al.  have proved in \cite{FominLMSZ17}  that  for  every finite field,  
  and  in particular   \GF, \textsc{Matrix Rigidity}
over a finite field  is  $\classW1$-hard being parameterized by $k$. This implies that \probFact is  $\classW1$-hard when parameterized by $k$.  However, when parameterized by $k$ and $r$, the problem becomes fixed-parameter tractable. For  \probFact,  the algorithm from \cite{FominLMSZ17} runs in time 
$2^{\Oh(f(r) \sqrt{k} \log k)} (nm)^{\Oh(1)}$, where $f$ is some function of $r$. While the function $f(r)$ is not specified in 
 \cite{FominLMSZ17}, the algorithm in  \cite{FominLMSZ17} invokes enumeration of all $2^r \times 2^r$ binary matrices of rank $r$, and thus the running time is at least double-exponential in $r$.

 Meesum, Misra, and Saurabh \cite{MeesumMS},  and Meesum and Saurabh \cite{MeesumS16} considered parameterized algorithms for related problems about editing of the adjacencies of a graph (or directed graph) targeting a graph with adjacency matrix of small rank.

\medskip\noindent\textbf{Problem (A3): \probBFact.}  It follows from the rank definitions that a matrix is of Boolean rank $r=1$ if and only if its \GF-rank is $1$.  Thus by the results of Gillis and Vavasis ~\cite{GillisV15} and Dan et al. \cite{DanHJWZ15} \probBFact is \classNP-complete already for $r=1$. 
Lu et al. \cite{DBLP:conf/icde/LuVA08} gave a formulation of  \probBFact as an integer programming problem with exponential number of variables 
and constraints.

While computing  \GF-rank (or rank over any other field) of a matrix can be performed in polynomial time, 
deciding whether the Boolean rank of a given matrix is at most $r$ is already an \classNP-complete problem. 
Thus \probBFact is \classNP-complete already for $k=0$.
This follows from the well-known relation between the Boolean rank and covering edges of a bipartite graph by bicliques \cite{GregoryPJL91}. 
Let us briefly describe this equivalence. For Boolean matrix $\bfA$, let $G_A$ be the corresponding bipartite graph, i.e. the bipartite graph whose biadjacency matrix is $\bfA$. By the 
  equivalent   definition  of the Boolean rank,   $\bfA$ has Boolean rank $r$ if and only if it is the logical disjunction  of $r$ Boolean matrices of rank $1$.  But for every bipartite graph whose biadjacency matrix is a Boolean matrix of rank at most $1$, its edges can be covered by at most one biclique (complete bipartite graph). Thus deciding whether a matrix is of Boolean rank $r$ is exactly the same as deciding whether edges of a bipartite graph can be covered by at most $r$ bicliques. The latter \textsc{Biclique Cover} problem is known to be \classNP-complete~\cite{Orlin77}. 
 \textsc{Biclique Cover} is solvable in time $2^{2^{\Oh(r)}}(nm)^{\Oh(1)}$ ~\cite{GrammGHN08}
  and unless Exponential Time Hypothesis (ETH) fails, it cannot be solved in time $2^{2^{o(r)}}(nm)^{\Oh(1)}$
  \cite{ChandranIK16}.

  For the special case  $r=1$  \probBFact  and $k \leq \|A\|_0 /240$,  Bringmann, Kolev and Woodruff gave an exact algorithm of running time $2^{k/\sqrt{\|A\|_0 }} (nm)^{\Oh(1)}$ \cite{BringmannKW17}. (Let us remind that the $\ell_0$-norm  of a matrix is the number of its non-zero entries.) 
More generally, exact algorithms for NMF were studied by Cohen and Rothblum  in \cite{cohen1993nonnegative}.  Arora et al.  \cite{AroraGKM12} and Moitra~\cite{Moitra16}, who showed that for a fixed value of $r$, NMF is solvable in polynomial time. Related are also the works of  Razenshteyn et al.~\cite{RazenshteynSW16}  on weighted low-rank approximation,  Clarkson and Woodruff~\cite{ClarksonW15}  on robust subspace approximation, and  Basu et al.~ \cite{BasuDL16} on PSD factorization.

Observe that all the  problems studied in this paper could be seen as matrix editing problems. For \probClust, we can assume that $r\leq n$ as otherwise we have a trivial NO-instance. Then the problem asks whether it is possible to edit at most $k$ entries of the input matrix, that is, replace some $0$s by $1$s and  some $1$s by $0$s, in such a way that the obtained matrix has at most $r$ pairwise-distinct columns. Respectively, \probFact asks whether  it is possible to edit at most $k$ entries of the input matrix to obtain a matrix of rank at most $r$.  In \probPClust, we ask whether we can edit at most $k$ elements to obtain a $\bfP$-matrix. A lot of work in graph algorithms has been done on graph editing problems, in particular parameterized subexponential time algorithms were developed for a number of problems, including various cluster editing problems \cite{drange2015fastarxiv,fomin2011subexponential}.

\subsection{Our results and methods}
We study the parameterized complexity of \probClust, \probFact and  \probBFact.  We refer to the recent books of Cygan et al.~\cite{CyganFKLMPPS15} and  Downey and Fellows~\cite{DowneyF13} for  the introduction  to Parameterized Algorithms and Complexity. Our results are summarized in Table~\ref{tabl:compl}.

\begin{table}[ht]
\begin{center}
{\small
\begin{tabular}{|l|c|c|c|}
\hline
& $k$  & $r$ & $k+r$\\
\hline
\probAtMostClust 
&  \specialcell{$2^{\Oh(k\log k)}$,  Thm~\ref{thm:fpt-clust-k}\\ No poly-kernel, Thm~\ref{thm:no-kernel}}  &  \classNP-c  for $r\geq 2$ \cite{Feige14b}&\specialcell{$2^{\Oh(r\cdot \sqrt{k\log{(k+r)}})}$, Thm~\ref{thm:clust-subexp} \\ Poly-kernel, Thm~\ref{thm:kernel}}
\\
\hline
 \textsc{\GF  Approx }  &\classW1-hard \cite{FominLMSZ17} &  \classNP-c  for $r\geq 1$ \cite{GillisV15,DanHJWZ15} &$2^{\Oh(r^{3/2}\cdot \sqrt{k\log{k}})}$, Thm~\ref{thm:fact-subexp} \\
\hline
 \textsc{Boolean Approx }
 & \classNP-c  for $k=0$   \cite{Orlin77} &  \classNP-c  for $r\geq 1$ \cite{GillisV15,DanHJWZ15} &$2^{\Oh(r2^r\cdot \sqrt{k\log k})}$,  Thm~\ref{cor:brank}\\
\hline
                         \end{tabular}
}
\caption{Parameterized complexity of low-rank approximation.  
\textsc{\GF  Approx } stands for \probFact and   \textsc{Bool Approx} for  \probBFact.  We omit  the polynomial factor $(nm)^{\Oh(1)}$ in running times. } \label{tabl:compl}
\end{center}
\end{table}

Our first main result concerns \probClust. We show (Theorem~\ref{thm:fpt-clust-k}) that the problem is solvable in time $2^{\Oh(k\log k)}\cdot(nm)^{\Oh(1)}$. Therefore,  \probClust is \classFPT parameterized by $k$. Since \probFact parameterized by $k$ is \classW1-hard and \probBFact is \classNP-complete for any fixed $k\geq 0$, we find  Theorem~\ref{thm:fpt-clust-k} quite surprising. The proof of Theorem~\ref{thm:fpt-clust-k} is based on a fundamental result of Marx~\cite{Marx08} about the complexity of a problem on strings, namely \textsc{Consensus Patterns}.  
We solve \probClust by constructing a two-stage  FPT Turing  reduction to \textsc{Consensus Patterns}. First, we use   the  color coding technique of Alon, Yuster, and Zwick from~\cite{AlonYZ95}
to reduce \probClust to some special auxiliary problem and then show that this problem can be reduced to \textsc{Consensus Patterns}, and this allows us to apply  the algorithm of    Marx~\cite{Marx08}. We also prove (Theorem~\ref{thm:kernel})  that \probClust admits a polynomial kernel when parameterized by $r$ and $k$. That is, we give a polynomial time algorithm that for a given instance of \probClust  outputs an equivalent instance with $\Oh(k^2+kr)$ columns and $\Oh(k^3+kr) $ rows. 
For parameterization by $k$ only, we show in Theorem~\ref{thm:no-kernel} that  \probClust   has no polynomial kernel    unless $\classNP\subseteq\classCoNP/{\rm poly}$, the standard complexity assumption.

Our second main result concerns \probBFact. As we mentioned above, the problem is NP-complete for $k=0$, as well as for  for $r=1$, and hence  is intractable being parameterized by $k$ or by $r$ only. 
 On the other hand, a simpler \probFact is not only \classFPT parameterized by $k+r$, by~\cite{FominLMSZ17} it is solvable in  time $2^{\Oh(f(r) \sqrt{k} \log k)} (nm)^{\Oh(1)}$, where $f$ is some  function of $r$, and thus is \emph{subexponential} in $k$. It is natural to ask whether a similar complexity behavior could be expected for  \probBFact. Our second main result, Theorem~\ref{cor:brank}, shows that this is indeed the case: \probBFact is solvable in time $2^{\Oh(r2^r\cdot \sqrt{k\log k})}(nm)^{\Oh(1)}$. The proof of this theorem is technical and consists of several steps. 
We first develop a subexponential algorithm for solving auxiliary \probPClust, and then  construct an FPT Turing reduction from \probBFact to \probPClust. 

Let us note that due to the relation of Boolean rank  computation to   \textsc{Biclique Cover}, the result of   \cite{ChandranIK16} implies that 
unless Exponential Time Hypothesis (ETH) fails, \probBFact  cannot be solved in time $2^{2^{o(r)}}f(k)(nm)^{\Oh(1)}$ for any function $f$. Thus the dependence in $r$ in our algorithm cannot be improved significantly unless ETH fails.

Interestingly, the technique developed for solving  \probPClust can be used to obtain algorithms of running times
$ 2^{\Oh(r\cdot \sqrt{k\log{(k+r)}})}(nm)^{\Oh(1)}$ for 
\probClust and
$ 2^{\Oh(r^{ 3/2}\cdot \sqrt{k\log{k}})}(nm)^{\Oh(1)}$  for  \probFact (Theorems~\ref{thm:clust-subexp} and \ref{thm:fact-subexp} 
 respectively). For  \probClust,  Theorems~\ref{thm:clust-subexp}
   provides much better running time than Theorem~\ref{thm:fpt-clust-k} for values of $r\in o((k\log{k})^{1/2})$. 

For \probFact, comparing Theorem~\ref{thm:fact-subexp}
 and the running time $2^{\Oh(f(r) \sqrt{k} \log k)} (nm)^{\Oh(1)}$ from \cite{FominLMSZ17},  let us note that Theorem~\ref{thm:fact-subexp}
  not only slightly improves the exponential dependence in  $k$ by $\sqrt{\log{k}}$; it also drastically improves the exponential dependence in $r$,  from $2^{2^r}$ to   $2^{r^{ 3/2}}$.

The remaining part of the paper is organized as follows. 
In Section~\ref{sec:prelim} we introduce basic notations and obtain some auxiliary results. In Section~\ref{sec:par-k} we show that \probAtMostClust is \classFPT when parameterized by $k$ only.
In Section~\ref{sec:kernel} we discuss kernelization for \probAtMostClust.
 In Section~\ref{sec:subexp} we construct \classFPT algorithms for \probAtMostClust and  \probFact parameterized by $k$ and $r$ that are subexponential in $k$. 
  In Section~\ref{sec:p-clust} we give a subexponential algorithm for \probBFact. We conclude our paper is Section~\ref{sec:conclusion} by stating some open problems.

\section{Preliminaries}\label{sec:prelim}
In this section we introduce the terminology used throughout the paper and obtain some properties of the solutions to our problems. 

All matrices and vectors considered in this paper are assumed to be $(0,1)$-matrices and vectors respectively unless explicitly specified otherwise.  
Let  $\bfA =(a_{ij})\in \{0,1\}^{m\times n}$  be an $m\times n$-matrix. Thus $a_{ij}$,    $i\in\{1,\ldots,m\}$ and $j\in\{1,\ldots,n\}$, are  the elements of $\bfA$. For $I\subseteq\{1,\ldots,m\}$ and $J\subseteq\{1,\ldots,n\}$, we denote by $\bfA[I,J]$ the $|I|\times |J|$-submatrix of $\bfA$ with the elements $a_{ij}$ where $i\in I$ and $j\in J$. We say that two matrices $\bfA$ and $\bfB$ are \emph{isomorphic} if $\bfB$ can be obtained from $\bfA$ by permutations of rows and columns. We use ``$+$'' and ``$\sum$'' to denote sums and summations over $\mathbb{R}$, and we use ``$\oplus$'' and 
``$\bigoplus$'' for 
sums and 
summations over \GF.

We also consider string of symbols.
For two strings $a$ and $b$, we denote by $ab$ their \emph{concatenation}. For a positive integer $k$, $a^k$ denotes the concatenation of $k$ copies of $a$; $a^0$ is assumed to be the empty string.
Let $a=a_1\cdots a_\ell$ be a string over an alphabet $\Sigma$. Recall that a string $b$ is said to be a \emph{substring} of $a$ if $b=a_{h}a_{h+1}\cdots a_t$ for some $1\leq h\leq t\leq\ell$; we write that $b=a[h..t]$ in this case.
Let $a=a_1\cdots a_\ell$ and $b=b_1\cdots b_\ell$ be strings of the same length $\ell$ over  $\Sigma$. Similar to the the definition of Hamming distance between two $(0,1)$-vectors, the \emph{Hamming distance} $\hdist(a,b)$ between two strings is defined as the number of position $i\in\{1,\ldots,\ell\}$ where the strings differ. 
We would like to mention that for Hamming distance (for vectors and strings), the {\em triangular inequality} holds. That is,  
for any three strings $a,b,c$ of length $n$ each, $\hdist(a,c)\leq \hdist(a,b)+\hdist(b,c)$.

\subsection{Properties of  \probAtMostClust}\label{propMOSTCLUST}
Let $(\bfA,r,k)$ be an instance of \probAtMostClust where $\bfA$ is a matrix with columns $(\bfa^1,\ldots,\bfa^n)$.   
We say that a partition $\{I_1,\ldots,I_{r'}\}$ of $\{1,\ldots,n\}$ for $r'\leq r$ is a \emph{solution} for $(\bfA,r,k)$ if there are vectors $\bfc^1,\ldots,\bfc^{r'}\in\{0,1\}^m$ such that $\sum_{i=1}^{r'}\sum_{j\in I_i}\hdist(\bfc^i,\bfa^j)\leq k$. We say that each $I_i$ or, equivalently, the multiset  of columns $\{\bfa^j\mid j\in I_i\}$ (some columns could be the same) is a \emph{cluster} and call $\bfc^i$ the \emph{mean} of the cluster. 
Observe that given a cluster $I\subseteq \{1,\ldots,n\}$, one can easily compute an optimal mean $\bfc=(c_1,\ldots,c_m)^\intercal$ as follows. Let $\bfa^j=(a_{1j},\ldots,a_{mj})^{\intercal}$ for $j\in\{1,\ldots,n\}$. For each $i\in \{1,\ldots,m\}$, consider the multiset $S_i=\{a_{ij}\mid j\in I\}$ and put $c_i=0$ or $c_i=1$ according to the \emph{majority} of elements in $S_i$, that is, $c_i=0$ if at least half of the elements in $S_i$ are $0$s and $c_i=1$ otherwise. We refer to this construction of $\bfc$ as the \emph{majority rule}.

In the opposite direction, given a set of means $\bfc^1,\ldots,\bfc^{r'}$, we can construct clusters $\{I_1,\ldots,I_{r'}\}$ as follows: for each column $\bfa^j$, find the closest $\bfc^i$, $i\in\{1,\ldots,r'\}$, such that $\hdist(\bfc^i,\bfa^j)$ is minimum and assign  $j$ to $I_i$. Note that this procedure does not guarantee that all clusters are nonempty but we can simply delete empty clusters.  Hence, we can define a solution as a set of means $C=\{\bfc^1,\ldots,\bfc^{r'}\}$. These arguments also imply the following observation.

\begin{observation}\label{obs:eq}
The task of \probAtMostClust can equivalently be stated as follows: decide whether there exist a positive integer $r'\leq r$  and vectors $\bfc^1,\ldots,\bfc^{r'}\in\{0,1\}^m$ such that 
$\sum_{i=1}^n\min\{\hdist(\bfc^j,\bfa^i)\mid 1\leq j\leq r'\}\leq k$.
\end{observation}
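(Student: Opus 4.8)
The plan is to prove the equivalence by showing that the two decision formulations have exactly the same yes-instances, reusing the same family of means $\bfc^1,\ldots,\bfc^{r'}$ in both directions. The crux is the elementary fact that, for a fixed set of means, the quantity $\sum_{i=1}^n\min\{\hdist(\bfc^j,\bfa^i)\mid 1\le j\le r'\}$ is precisely the cost of the assignment that sends every column to a closest mean, and that this nearest-mean assignment is the cheapest among all ways of distributing the columns among those means. Both directions of the equivalence then follow by comparing an arbitrary partition against this optimal one.

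First I would handle the implication from the problem statement to the reformulation. Suppose $\{I_1,\ldots,I_{r'}\}$ and $\bfc^1,\ldots,\bfc^{r'}$ witness a yes-instance, so $\sum_{s=1}^{r'}\sum_{t\in I_s}\hdist(\bfc^s,\bfa^t)\le k$. Keeping the same means, fix any column $\bfa^t$ and let $s$ be the index with $t\in I_s$. Since the minimum is taken over a set containing the choice $j=s$, we have $\min\{\hdist(\bfc^j,\bfa^t)\mid 1\le j\le r'\}\le\hdist(\bfc^s,\bfa^t)$. Summing this inequality over all $t\in\{1,\ldots,n\}$ and regrouping the right-hand side by clusters gives
\[
\sum_{t=1}^n\min\{\hdist(\bfc^j,\bfa^t)\mid 1\le j\le r'\}\le\sum_{s=1}^{r'}\sum_{t\in I_s}\hdist(\bfc^s,\bfa^t)\le k,
\]
so the very same means witness a yes-instance of the reformulated task.

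For the converse, suppose $\bfc^1,\ldots,\bfc^{r'}$ satisfy $\sum_{t=1}^n\min\{\hdist(\bfc^j,\bfa^t)\mid 1\le j\le r'\}\le k$. I would construct a partition exactly as described in the paragraph preceding the statement: assign each index $t\in\{1,\ldots,n\}$ to a class $I_s$ for which $\hdist(\bfc^s,\bfa^t)$ attains the minimum over $1\le j\le r'$, breaking ties arbitrarily. Each index is placed in exactly one class, so $\{I_1,\ldots,I_{r'}\}$ is a partition, and by the choice of $s$ the per-cluster costs sum back to the per-column minima, whence $\sum_{s=1}^{r'}\sum_{t\in I_s}\hdist(\bfc^s,\bfa^t)=\sum_{t=1}^n\min\{\hdist(\bfc^j,\bfa^t)\mid 1\le j\le r'\}\le k$. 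If some classes are empty I would simply delete them together with their means, which leaves the cost unchanged and only decreases the number of parts to some $r''\le r'\le r$; the surviving nonempty classes and means form a solution in the original sense.

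I do not expect any genuine obstacle here: both inclusions are immediate once one observes that nearest-mean assignment simultaneously realizes the per-column minimum and produces a legitimate partition of $\{1,\ldots,n\}$. The only bookkeeping worth stating explicitly is that passing between the two formulations never increases $r'$ and that empty clusters are discarded, so the bound $r'\le r$ is preserved throughout.
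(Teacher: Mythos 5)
Your proposal is correct and follows essentially the same route as the paper, which justifies this observation by exactly the two arguments you spell out: keeping the means and noting that the per-column minimum lower-bounds the cost of any assignment, and conversely building the partition by nearest-mean assignment with empty clusters discarded. No gaps.
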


\begin{definition}[Initial cluster and regular partition]
Let $\bfA$ be an  $m\times n$-matrix  with  columns $\bfa^1,\ldots,\bfa^n$. An \emph{initial cluster}  is an inclusion maximal set $I\subseteq\{1,\ldots,n\}$ such that all the columns in the multiset $\{\bfa^j\mid j\in I\}$ are equal.  

We say that a partition $\{I_1,\ldots,I_{r'}\}$ of the columns of matrix $\bfA$ 
 is \emph{regular} if for every initial cluster $I$, there is $i\in\{1,\ldots,r'\}$ such that $I\subseteq I_i$.
 \end{definition}
By the definition of the regular partition, every initial cluster of $\bfA$ is in some set $I_i$ but the set $I_i$ may contain many  initial clusters.

\begin{lemma}\label{lem:init}
Let $(\bfA,r,k)$ be a \yesinstance of \probAtMostClust. Then there is a solution $\{I_1,\ldots,I_{r'}\}$, $r'\leq r$  
which is regular (i.e, for any initial cluster $I$ of $\bfA$, there is $i\in \{1,\ldots,r'\}$ such that $I\subseteq I_i$). 
\end{lemma}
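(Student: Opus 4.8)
The plan is to take an arbitrary solution and re-derive its cluster assignment from the means alone, forcing equal columns into a common part without increasing the cost. Suppose $(\bfA,r,k)$ is a \yesinstance and fix a solution. By the discussion preceding Observation~\ref{obs:eq}, this solution can be described purely by a set of means $\{\bfc^1,\dots,\bfc^{r'}\}$ with $r'\le r$, for which we may assume only that $\sum_{i=1}^n\min\{\hdist(\bfc^j,\bfa^i)\mid 1\le j\le r'\}\le k$. The means need not be optimal for the parts I will build; I only use that this quantity is at most $k$.

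Next I would reassign the columns to these means by the nearest-mean rule, but with a fixed deterministic tie-break. For each $i\in\{1,\dots,n\}$ let $\sigma(i)$ be the \emph{smallest} index $j$ attaining $\min\{\hdist(\bfc^j,\bfa^i)\mid 1\le j\le r'\}$, and place $i$ into the part $I_{\sigma(i)}$. Deleting the parts that turn out empty yields a partition $\{I_1,\dots,I_{r''}\}$ with $r''\le r'\le r$. Keeping the same means $\bfc^j$, the cost of this partition is exactly $\sum_{i=1}^n\hdist(\bfc^{\sigma(i)},\bfa^i)=\sum_{i=1}^n\min\{\hdist(\bfc^j,\bfa^i)\mid 1\le j\le r'\}\le k$, so the new partition is again a solution.

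It remains to verify regularity, which is where the deterministic tie-break does its work. If $i$ and $i'$ lie in a common initial cluster then $\bfa^i=\bfa^{i'}$, so the whole distance profile $(\hdist(\bfc^1,\bfa^i),\dots,\hdist(\bfc^{r'},\bfa^i))$ coincides with that of $\bfa^{i'}$; since $\sigma$ depends on the column only through this profile (and a rule that is a function of the profile alone), we get $\sigma(i)=\sigma(i')$, so $i$ and $i'$ land in the same part. Hence every initial cluster $I$ is contained in a single $I_i$, i.e.\ the partition is regular, as required.

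The only real subtlety is the tie-breaking step: two equal columns could a priori be split between two equidistant means, so I must fix a rule (smallest index, say) that depends on the column alone; once that is done regularity is immediate. Everything else is just the monotonicity of the nearest-mean reassignment already used to justify Observation~\ref{obs:eq}, together with the fact that deleting empty parts can only decrease $r'$.
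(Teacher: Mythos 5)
Your proof is correct and follows essentially the same route as the paper's: describe the solution by its means and reassign every column to a nearest mean, which automatically keeps equal columns together. The paper's own proof is just a terser version of this; your explicit deterministic tie-break makes precise the point the paper leaves implicit.
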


\begin{proof}
Let $\bfa^1,\ldots,\bfa^n$ be the columns of $\bfA$.
By Observation~\ref{obs:eq}, there are vectors $\bfc^1,\ldots,\bfc^{r'}$ for some $r'\leq r$ such that 
$\sum_{i=1}^{n}\min\{\hdist(\bfc^j,\bfa^i)~|~1\leq j\leq r'\}\leq k$. Once we have the vectors $\bfc^1,\ldots,\bfc^{r'}$, a solution 
can be obtained by assigning each vector $\bfa^i$ to a closest vector in $\{\bfc^1,\ldots,\bfc^{r'}\}$. 
This implies the conclusion of the lemma.   
\end{proof}

\subsection{Properties of  \probFact}
For \probFact, we need the following folklore observation. We provide a proof for completeness.

\begin{observation}\label{obs:rank}
Let $\bfA$ be a matrix over \GF{} with $\rank(\bfA)\leq r$. Then $\bfA$ has at most $2^r$ pairwise-distinct columns and at most $2^r$ pairwise-distinct rows.
\end{observation}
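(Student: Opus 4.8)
The plan is to prove both statements by a counting argument on the column space (and, by symmetry, the row space). The key observation is that if $\rank(\bfA)\leq r$ over \GF, then the column space of $\bfA$, viewed as a subspace of $\{0,1\}^m$ over \GF, has dimension at most $r$. A subspace of dimension $d$ over \GF{} contains exactly $2^d$ vectors, so the column space contains at most $2^r$ distinct vectors. Since every column of $\bfA$ is an element of this column space, there can be at most $2^r$ pairwise-distinct columns.

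First I would make precise the notion of column space. Let $\bfa^1,\ldots,\bfa^n$ be the columns of $\bfA$. The \GF-rank of $\bfA$ equals the dimension of the linear span (over \GF) of $\{\bfa^1,\ldots,\bfa^n\}$ inside $\{0,1\}^m$; this is the standard fact that column rank equals the rank of the matrix, and it holds over any field, \GF{} in particular. Denote this span by $V=\spanS(\bfa^1,\ldots,\bfa^n)$, so $\dim V \leq r$.

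Next I would carry out the counting step. Since $V$ is a vector space over \GF{} of dimension $d\leq r$, it has a basis $\bfv^1,\ldots,\bfv^d$, and every element of $V$ is uniquely written as $\bigoplus_{i=1}^d \lambda_i \bfv^i$ with each $\lambda_i\in\{0,1\}$. Hence $|V|=2^d\leq 2^r$. Every column $\bfa^j$ lies in $V$, so the number of pairwise-distinct columns is at most $|V|\leq 2^r$. This gives the bound on distinct columns.

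For the statement about rows, I would simply invoke that the row rank of a matrix equals its column rank over any field, so $\rank(\bfA^\intercal)=\rank(\bfA)\leq r$; applying the column argument to $\bfA^\intercal$ yields at most $2^r$ pairwise-distinct rows. I do not expect any serious obstacle here: the entire argument rests on elementary linear algebra over \GF, and the only point needing a word of care is the identification of \GF-rank with the dimension of the column span (and the row/column rank symmetry), both of which are standard facts about matrices over a field that I would state rather than reprove.
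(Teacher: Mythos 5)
Your proof is correct and follows essentially the same route as the paper: both arguments take a basis of the column space over \GF, note that every column is a sum of a subset of basis vectors (equivalently, an element of a space of size at most $2^r$), and count, then handle rows by the symmetry of row and column rank. No issues.
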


\begin{proof}
We show the claim for columns; the proof for the rows is similar in arguments to that of the case of columns. Assume that $\rank(A)=r$ and let $\bfe_1,\ldots,\bfe_r$ be a basis of the column space of $\bfA$. Then  every column $\bfa^i$ of $\bfA$ is a linear combination of $\bfe_1,\ldots,\bfe_r$. Since $\bfA$ is a matrix over \GF, it implies that for every columns $\bfa^i$, there is $I\subseteq\{1,\ldots,r\}$ such that $\bfa^i=\bigoplus_{j\in I}\bfe_j$. As the number of distinct subsets of $\{1,\ldots,r\}$ is $2^r$, the claim follows.
\end{proof}

By making use of  Observation~\ref{obs:rank},  we can reformulate  \probFact as follows: given an $m\times n$ matrix $\bfA$ over \GF{} with the columns $\bfa^1,\ldots,\bfa^n$, a positive integer $r$ and a nonnegative integer $k$, we ask whether there is a positive integer $r'\leq 2^r$, a partition $(I_1,\ldots,I_{r'})$ of $\{1,\ldots,n\}$ and  vectors $\bfc^1,\ldots,\bfc^{r'}\in\{0,1\}^m$ such that 
$$\sum_{i=1}^{r'}\sum_{j\in I_i}\hdist(\bfc^i,\bfa^j)\leq k$$
and the dimension  of the linear space spanned by the vectors  $\bfc^1,\ldots,\bfc^{r'}$ is at most $r$. Note that given a partition $\{I_1,\ldots,I_{r'}\}$ of $\{1,\ldots,n\}$, we cannot select $\bfc^1,\ldots,\bfc^{r'}$ using the majority rule like the case of  \probAtMostClust because of the rank conditions on these vectors. But given $\bfc^1,\ldots,\bfc^{r'}$, one can construct an optimal partition  $\{I_1,\ldots,I_{r'}\}$ with respect to these vectors in the same way as before for \probAtMostClust. Similar to Observation~\ref{obs:eq}, we can restate the task of \probFact.

\begin{observation}\label{obs:eq-fact}
The task of \probFact of binary matrix $\bfA$ with columns $\bfa^1, \dots, \bfa^n$ can equivalently be stated as follows: decide whether there is a positive integer $r'\leq r$  and linearly independent vectors $\bfc^1,\ldots,\bfc^{r'}\in\{0,1\}^m$ over \GF{} such that 
$\sum_{i=1}^n\min\{\hdist(\bfs,\bfa^i)\mid \bfs=\bigoplus_{j\in I}\bfc^j,~I\subseteq \{1,\ldots,r'\}\}\leq k$.
\end{observation}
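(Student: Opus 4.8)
The plan is to establish the claimed equivalence by proving two implications, the bridge in both directions being the standard fact that a \GF-matrix of rank at most $r$ is exactly one whose columns all lie in the span of some $r' \leq r$ linearly independent vectors, together with the observation that over \GF{} a linear combination of $\bfc^1,\ldots,\bfc^{r'}$ is precisely $\bigoplus_{j\in I}\bfc^j$ for the index set $I$ of coefficients equal to $1$. I would phrase the whole argument in terms of the per-column quantity $d_i(\bfc^1,\ldots,\bfc^{r'}) := \min\{\hdist(\bfs,\bfa^i)\mid \bfs=\bigoplus_{j\in I}\bfc^j,\ I\subseteq\{1,\ldots,r'\}\}$, so that the reformulated objective is $\sum_{i=1}^n d_i$.

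First I would handle the direction from the original \probFact{} to the reformulation. Suppose $\bfB$ is a binary matrix with columns $\bfb^1,\ldots,\bfb^n$, with $\rank(\bfB)\leq r$ and $\hdist(\bfA,\bfB)=\sum_{i=1}^n\hdist(\bfb^i,\bfa^i)\leq k$. Set $r'=\rank(\bfB)\leq r$ and let $\bfc^1,\ldots,\bfc^{r'}$ be a basis of the column space of $\bfB$ over \GF{}; these vectors are linearly independent by construction. Each column $\bfb^i$ lies in their span, so $\bfb^i=\bigoplus_{j\in I_i}\bfc^j$ for some $I_i\subseteq\{1,\ldots,r'\}$, and hence $d_i(\bfc^1,\ldots,\bfc^{r'})\leq\hdist(\bfb^i,\bfa^i)$. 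Summing over $i$ gives $\sum_{i=1}^n d_i\leq\sum_{i=1}^n\hdist(\bfb^i,\bfa^i)\leq k$, so the $\bfc^j$ witness a solution to the reformulated problem.

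For the reverse direction I would start from linearly independent $\bfc^1,\ldots,\bfc^{r'}$ with $r'\leq r$ and $\sum_{i=1}^n d_i\leq k$. For each $i$, choose $I_i\subseteq\{1,\ldots,r'\}$ attaining the minimum $d_i$, and define $\bfb^i=\bigoplus_{j\in I_i}\bfc^j$; collect these into a matrix $\bfB$. Every column of $\bfB$ lies in $\spanS\{\bfc^1,\ldots,\bfc^{r'}\}$, a space of dimension $r'$, so $\rank(\bfB)\leq r'\leq r$, and $\hdist(\bfA,\bfB)=\sum_{i=1}^n\hdist(\bfb^i,\bfa^i)=\sum_{i=1}^n d_i\leq k$. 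Thus $\bfB$ is a valid solution to the original \probFact, completing the equivalence.

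This statement is essentially bookkeeping, so I do not expect a genuine obstacle; the only points requiring a little care are that the minimisation ranges over all subsets $I$ including $I=\emptyset$ (which contributes the zero column, legitimately allowing $\bfb^i=\mathbf{0}$), and that one must insist the chosen $\bfc^j$ be \emph{linearly independent} rather than merely a generating set, so that their number $r'$ genuinely bounds $\rank(\bfB)$ from above in the forward direction. Both are handled simply by passing to a basis of the column space, which is the device used throughout.
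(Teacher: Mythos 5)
Your proof is correct and is essentially the argument the paper intends: the paper states this observation without an explicit proof, treating it as immediate from the preceding reformulation via Observation~\ref{obs:rank}, and your two implications (pass to a basis of the column space of $\bfB$; conversely assemble $\bfB$ from the optimal subset-XORs) are exactly that justification spelled out. The only nitpick is the degenerate case $\rank(\bfB)=0$, where setting $r'=\rank(\bfB)$ violates the requirement that $r'$ be positive; this is repaired by taking $r'=1$ with any nonzero $\bfc^1$ and using $I=\emptyset$, consistent with your own remark about the empty index set.
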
 

Recall that it was proved by Fomin et al.~\cite{FominLMSZ17}  that \probFact is \classFPT when parameterized by $k$ and $r$. To demonstrate that the total dependency on $k+r$ could be relatively small, we observe the following.

\begin{proposition}\label{prop:fact-rk}
\probFact is solvable in time $2^{\Oh(k\log r)}\cdot(nm)^{\Oh(1)}$.
\end{proposition}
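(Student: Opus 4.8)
The plan is to work with the reformulation from Observation~\ref{obs:eq-fact}: I look for linearly independent vectors $\bfc^1,\ldots,\bfc^{r'}$ with $r'\le r$ spanning a subspace $V\subseteq\{0,1\}^m$ of \GFrank at most $r$, and ask whether $\sum_{i=1}^n\hdist(\bfa^i,V)\le k$, where $\hdist(\bfa^i,V)=\min\{\hdist(\bfa^i,\bfx):\bfx\in V\}$. The first observation is that in any solution the columns $\bfa^i$ with $\hdist(\bfa^i,V)\ge 1$ --- call them \emph{changed} --- have total multiplicity at most $k$, since each contributes at least $1$ to the budget; every other column lies in $V$ and is left untouched. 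Thus at most $k$ columns are changed, and the untouched ones span a subspace $V_0\subseteq V$. A warning is in order: one cannot take $V=V_0$, nor even assume $V$ has a basis among the columns of $\bfA$ --- the directions of $V$ beyond $V_0$ may be forced to be coordinate-wise \emph{means} of groups of changed columns. For instance, with many copies of $(1,0,0,0)$ and single copies of $(0,1,1,0),(0,1,0,1),(0,0,1,1)$ and $r=2$, the only way to stay within budget $k=3$ is $V=\spanS\{(1,0,0,0),(0,1,1,1)\}$, whose second generator is the majority of the three changed columns and is not a column of $\bfA$.

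Guided by this, after guessing an integer $j\in\{0,1,\ldots,r\}$ (intended to be the number of generators of $V$ coming from such means), I run a bounded search tree to identify the changed columns. While the kept columns span a subspace of \GFrank more than $r-j$, I pick $r-j+1$ of them that are linearly independent; aiming at an optimal solution whose untouched columns span at most $r-j$ dimensions, at least one of these must be changed, so I branch into at most $r-j+1$ possibilities, each marking one as changed and charging its multiplicity against the budget. Every branch decreases the budget by at least one, so the tree has depth at most $k$ and at most $(r+1)^{k}$ leaves. The point I would prove from this \emph{dimension stopping rule} is that for the correct $j$ there is a leaf at which the kept columns span a subspace that is contained in an optimal $V$ (indeed, once we target dimension exactly $\dim V_0$, a kept changed column would raise the dimension and be excluded), and the marked columns are exactly the changed ones.

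At each leaf the kept columns span dimension $d\le r-j$ and there are at most $k$ changed columns. In the second stage I enumerate all partitions of the changed columns into at most $r-d$ groups --- there are $r^{\Oh(k)}=2^{\Oh(k\log r)}$ of them --- and for each group take its coordinate-wise majority (its optimal centre, by the majority rule). Adjoining these at most $r-d$ majority vectors to a basis of the kept span yields $V$ with $\dim V\le d+(r-d)=r$, so \emph{the rank bound is respected automatically by this dimension budgeting}. I then build $\bfB$ explicitly: each untouched column keeps its value (which lies in $V$) and each changed column is set to the majority of its group; the cost $\hdist(\bfA,\bfB)$ is computed directly, and the leaf accepts if it is at most $k$. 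Constructing $\bfB$ by hand is precisely what lets me avoid computing the distance of a vector to a general subspace, a problem intractable for large $r$. Over the $r+1$ values of $j$, the $(r+1)^{k}$ leaves and the $r^{\Oh(k)}$ partitions, the running time is $2^{\Oh(k\log r)}(nm)^{\Oh(1)}$.

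Soundness is immediate, since every accepting leaf exhibits a genuine $\bfB$ of \GFrank at most $r$ with $\hdist(\bfA,\bfB)\le k$. The main obstacle is completeness, and it has two parts. The identification of the changed set by the branching is the easier half, handled by the dimension stopping rule above. The genuinely delicate point is to show that clustering the changed columns into at most $r-d$ groups, with unconstrained majorities as the new generators, suffices to realise a within-budget solution whenever one exists. Since there are at most $k$ distinct changed columns, an optimal assignment uses at most $k$ distinct centres, but only $r-d$ groups are available; so the crux is either to bound the number of distinct centres needed in an optimal solution by the available dimension slack, or to argue that replacing the constrained optimal centres by the majorities of a coarser clustering neither pushes the total cost above $k$ nor --- thanks to the budgeting --- violates the rank constraint. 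Establishing this structural statement about optimal solutions is where I expect the real work to lie.
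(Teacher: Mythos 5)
Your proposal takes a genuinely different route from the paper, but it contains a gap that you yourself flag at the end, and unfortunately the structural statement you defer (``clustering the changed columns into at most $r-d$ groups with unconstrained majorities as the new generators suffices'') is not just unproven --- it is false. In an optimal solution a changed column is mapped to an \emph{arbitrary} element of the column space $V$ of $\bfB$, and in particular to a linear combination that mixes the kept-span $V_0$ with the new generators. Your construction only ever sends a changed column to the majority of its group, i.e.\ to one of the $r-d$ new generators themselves. Concretely, take $r=2$, many copies of $(1,1,0,0,0,0)^\intercal$ (so $V_0=\spanS\{(1,1,0,0,0,0)^\intercal\}$, $d=1$), and two changed columns $(0,0,1,1,1,0)^\intercal$ and $(1,1,1,1,0,1)^\intercal$. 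The matrix $\bfB$ with column space $\spanS\{(1,1,0,0,0,0)^\intercal,(0,0,1,1,1,1)^\intercal\}$ sends the first changed column to $(0,0,1,1,1,1)^\intercal$ and the second to $(1,1,1,1,1,1)^\intercal$ at total cost $2$, but these are two \emph{distinct} targets while only $r-d=1$ group is available; the single-group majority costs $4$, and no other choice of $j$ helps. So the algorithm as described answers NO on a \yesinstance. Repairing this forces you to let each changed column pick an arbitrary subset of the basis as its target, at which point the majority rule no longer determines the generators (this is exactly the obstruction the paper notes after Observation~\ref{obs:eq-fact}), and the whole second stage collapses.

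For comparison, the paper's proof is far more elementary and sidesteps all of this: if $\rank(\bfA)>r$ it finds an $(r+1)\times(r+1)$ submatrix of rank $r+1$, observes that any matrix of rank at most $r$ must differ from $\bfA$ in at least one of its $(r+1)^2$ entries, and branches on which entry to flip, decrementing $k$. This gives $(r+1)^{2k}\cdot(nm)^{\Oh(1)}=2^{\Oh(k\log r)}\cdot(nm)^{\Oh(1)}$ with a one-line correctness argument and no need to reason about the structure of optimal column spaces at all.
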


\begin{proof}In what follows by rank we  mean the \GFrank of a matrix.  It is more convenient for this algorithm to interpret 
 \probFact  as a matrix editing problem. Given  a matrix $\bfA$ over \GF, a positive integer $r$ and a nonnegative integer $k$, decide whether it is possible to obtain from $\bfA$  a matrix $\bfB$ with  $\rank(\bfB)\leq r$ by editing at most $k$ elements, i.e., by replacing $0$s by $1$s and $1$s by $0$s. We use this to construct a recursive branching algorithm for the problem.

Let $(\bfA =(a_{ij}),r,k)$ be an instance of \probFact. 
 The algorithm for $(\bfA,r,k)$ works as follows.

\begin{itemize}
\item If $\rank(\bfA)\leq r$, then return YES and stop.
\item If $k=0$, then return NO and stop.
\item 
Since the rank of $\bfA$ is more than $r$, there are $r+1$ columns  $I\subseteq\{1,\ldots,m\}$ and $r+1$ rows $J\subseteq\{1,\ldots,n\}$  such that the induced submatrix $\bfA[I,J]$ is of rank $r+1$.  
 We branch into $(r+1)^2$ subproblems: For each $i\in I$ and $j\in J$ we do the following:
\begin{itemize}
\item construct  matrix $\bfA'$ from $\bfA$ by replacing $a_{ij}$ with $a_{ij}\oplus 1$, 
\item call the algorithm for $(\bfA',r,k-1)$ and 
\begin{itemize}
\item if the algorithm returns YES, then return YES and stop.
\end{itemize}
\end{itemize}
\item Return NO and stop.
\end{itemize}

To show the correctness of the algorithm, we observe the following. Let $\bfB$ be an $m\times n$-matrix of rank at most $r$. If $\rank(\bfA[I,J])>r+1$ for some   $I\subseteq\{1,\ldots,m\}$ and $J\subseteq\{1,\ldots,n\}$, then $\|\bfA[I,J]-\bfB[I,J]\|_F^2\geq 1$, i.e, $\bfA[I,J]$ and $\bfB[I,J]$ differ in at least one element. 
To evaluate the running time, notice that we can compute $\rank(\bfA)$ in polynomial time, and if $\rank(\bfA)>r$, then we can find in polynomial time an $(r+1)\times (r+1)$-submatrix of $\bfA$ of rank $r+1$. Then we have $(r+1)^2$ branches in our algorithm. Since we decrease the parameter $k$ in every recursive call, the depth of the recurrence  tree is at most $k$. It implies that the algorithm runs in time $(r+1)^{2k}\cdot(nm)^{\Oh(1)}$.  
\end{proof}

\subsection{Properties of \probPClust}
We will be using the following observation which follows directly from the definition of a $\bfP$-matrix.

\begin{observation}\label{obs:p-matr-numberofrows}
Let $\bfP$ be a binary $p\times q$ matrix. Then every $\bfP$-matrix $\bfB$  
has at most $p$ pairwise-distinct rows and at most $q$ pairwise-distinct columns.
\end{observation}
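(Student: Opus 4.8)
The plan is to unpack the definition of a $\bfP$-matrix and observe that its block structure forces any two rows in the same part of the row partition to coincide (and symmetrically for columns). First I would invoke the definition directly: since $\bfB$ is a $\bfP$-matrix, there exist a partition $\{I_1,\ldots,I_p\}$ of $\{1,\ldots,m\}$ and a partition $\{J_1,\ldots,J_q\}$ of $\{1,\ldots,n\}$ such that $b_{st}=p_{ij}$ whenever $s\in I_i$ and $t\in J_j$.

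The key step is to show that two row indices $s,s'$ lying in the same block $I_i$ give identical rows of $\bfB$. I would fix such $s,s'$ together with an arbitrary column index $t$, let $J_j$ be the part of the column partition containing $t$, and read off $b_{st}=p_{ij}=b_{s't}$. Since $t$ was arbitrary, the two rows agree in every coordinate and are therefore equal. Hence each of the $p$ blocks $I_1,\ldots,I_p$ contributes a single distinct row, so $\bfB$ has at most $p$ pairwise-distinct rows.

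The claim about columns follows by the symmetric argument, interchanging the roles of rows and columns (and of the partitions $\{I_i\}$ and $\{J_j\}$): any two columns whose indices lie in the same block $J_j$ coincide, which yields at most $q$ pairwise-distinct columns. I do not expect any genuine obstacle here; the only point requiring care is bookkeeping the indices correctly when applying the identity $b_{st}=p_{ij}$, and noticing that the conclusion relies solely on the \emph{existence} of the two partitions and not on any structural property of $\bfP$ itself.
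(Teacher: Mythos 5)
Your proof is correct and is exactly the argument the paper has in mind: the paper states this observation without proof, noting only that it "follows directly from the definition of a $\bfP$-matrix," and your unpacking of the definition (rows in the same block $I_i$ agree in every coordinate, symmetrically for columns) is that direct argument spelled out.
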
 
  
In our algorithm for \probPClust, we need a subroutine for checking whether a matrix $\bfA$ is a $\bfP$-matrix. For that we employ the following brute-force algorithm.
Let $\bfA$ be an $m\times n$-matrix. Let $\bfa_1,\ldots,\bfa_m$ be the rows of $\bfA$, and let 
$\bfa^1,\ldots,\bfa^n$ be the columns of $\bfA$. 
Let $\mathcal{I}=\{I_1,\ldots,I_s\}$ be the partition of $\{1,\ldots,m\}$ into inclusion-maximal sets of indices such that for every $i\in\{1,\ldots,s\}$  the rows $\bfa_j$ for $j\in I_i$ are equal. Similarly, let 
$\mathcal{J}=\{J_1,\ldots,J_t\}$ be the partition of $\{1,\ldots,n\}$ into inclusion-maximal sets such that for every $i\in\{1,\ldots,t\}$, the columns $\bfa^j$ for $j\in I_i$ are equal. We say that $(\mathcal{I},\mathcal{J})$ is the \emph{block partition} of $\bfA$.

\begin{observation}\label{obs:p-matr-brute}
There is an algorithm which given an $m\times n$-matrix $\bfA= (a_{ij})\in \{0,1\}^{m\times n}$ and a $p\times q$-matrix $\bfP=(p_{ij})\in \{0,1\}^{p\times q}$, runs in time $2^{\Oh(p\log p+q\log q)}\cdot (nm)^{\Oh(1)}$, and decides whether $\bfA$ is a $\bfP$-matrix or not.
\end{observation}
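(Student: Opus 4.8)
The plan is to reduce the test to an enumeration over the \emph{block partition} of $\bfA$ rather than over individual rows and columns. First I would compute the block partition $(\mathcal{I},\mathcal{J})$ of $\bfA$, where $\mathcal{I}=\{I_1,\ldots,I_s\}$ groups the rows into maximal classes of pairwise-equal rows and $\mathcal{J}=\{J_1,\ldots,J_t\}$ does the same for columns; this takes $(nm)^{\Oh(1)}$ time by comparing rows and columns pairwise. From $(\mathcal{I},\mathcal{J})$ I would form the $s\times t$ quotient matrix $\bfA'$ whose entry $A'[\ell,\ell']$ is the common value of $\bfA$ on the block $I_\ell\times J_{\ell'}$. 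If $s>p$ or $t>q$, I immediately answer NO: by Observation~\ref{obs:p-matr-numberofrows} every $\bfP$-matrix has at most $p$ distinct rows and at most $q$ distinct columns, so $\bfA$ cannot be a $\bfP$-matrix.

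The key structural step is the claim that $\bfA$ is a $\bfP$-matrix if and only if there are functions $f\colon\{1,\ldots,s\}\to\{1,\ldots,p\}$ and $g\colon\{1,\ldots,t\}\to\{1,\ldots,q\}$ with $A'[\ell,\ell']=p_{f(\ell)g(\ell')}$ for all $\ell,\ell'$; that is, $\bfA'$ must be realizable as $\bfP[f(\cdot),g(\cdot)]$. The ``if'' direction is immediate: given such $f,g$, assigning each row in block $I_\ell$ to class $f(\ell)$ and each column in block $J_{\ell'}$ to class $g(\ell')$ yields partitions witnessing that $\bfA$ is a $\bfP$-matrix. For the ``only if'' direction---which is where the real work sits---I would start from an arbitrary witnessing pair of assignments $\phi\colon\{1,\ldots,m\}\to\{1,\ldots,p\}$, $\psi\colon\{1,\ldots,n\}\to\{1,\ldots,q\}$ with $a_{st}=p_{\phi(s)\psi(t)}$, and argue that it can be taken constant on each block. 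Indeed, any two rows sent by $\phi$ to the same class are forced to be equal (they agree with the same row of $\bfP$ in every column), so I may replace $\phi$ by the map that sends a whole block $I_\ell$ to the $\phi$-value of a fixed representative of $I_\ell$ without destroying the identity $a_{st}=p_{\phi(s)\psi(t)}$; doing the same for $\psi$ produces block-constant assignments, and these descend to the desired $f$ and $g$ on the quotient.

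With the claim in hand, the algorithm simply enumerates all candidate pairs $(f,g)$ and checks the identity $A'[\ell,\ell']=p_{f(\ell)g(\ell')}$ directly, returning YES iff some pair passes. The number of such pairs is $p^s\cdot q^t\le p^p\cdot q^q=2^{\Oh(p\log p+q\log q)}$, and each check costs $\Oh(st)\le\Oh(pq)$, i.e.\ polynomial; together with the $(nm)^{\Oh(1)}$ cost of building the block partition this gives the claimed bound $2^{\Oh(p\log p+q\log q)}\cdot(nm)^{\Oh(1)}$. I expect the only genuinely delicate point to be the ``collapsing'' argument in the only-if direction---verifying that rearranging $\phi$ and $\psi$ to be block-constant preserves the defining equality---while everything else is bookkeeping and a routine running-time count.
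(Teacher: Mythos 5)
Your reduction to the quotient matrix $\bfA'$ is a reasonable idea, but the central equivalence it rests on is false for the paper's notion of $\bfP$-matrix, so the algorithm would return false positives. The definition asks for a partition of $\{1,\ldots,m\}$ into $p$ parts and of $\{1,\ldots,n\}$ into $q$ parts, and the paper consistently treats these parts as nonempty (elsewhere it declares instances with $m<p$ or $n<q$ to be trivial no-instances, and its notion of a set ``representing'' $\bfP$ picks exactly $p$ rows, one per part). Under that reading, the ``if'' direction of your claim breaks: a pair $(f,g)$ with $A'[\ell,\ell']=p_{f(\ell)g(\ell')}$ only yields a partition whose part for row-class $i$ of $\bfP$ is $\bigcup_{\ell\in f^{-1}(i)}I_\ell$, and this is empty whenever $i\notin\operatorname{im}f$. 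Concretely, take $\bfP$ to be the $2\times 1$ all-zero matrix and $\bfA$ the $1\times 1$ zero matrix: your test passes with $f(1)=1$, $g(1)=1$, but $\{1\}$ cannot be split into two nonempty parts, so $\bfA$ is not a $\bfP$-matrix (the paper's algorithm correctly rejects, since $\bfA$ has one row-block while $\bfP$ has one row-block of size~$2$). The same issue arises more generally whenever $\bfP$ has repeated rows or columns: since distinct row-blocks of $\bfA$ carry distinct patterns, your $f$ is forced to be injective and hence can never be surjective onto a set of indices containing two identical rows of $\bfP$; the correct condition is that a single row-block of $\bfA$ must be split among all the identical rows of $\bfP$ it realizes, which imposes the cardinality constraint that the block of $\bfA$ be at least as large as the corresponding block of $\bfP$. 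Your ``only if'' (collapsing) direction is fine; it is the converse that needs the extra bookkeeping.

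The paper's proof repairs exactly these two points: it computes the block partitions of \emph{both} $\bfA$ and $\bfP$, requires the numbers of row-blocks (and of column-blocks) to coincide, enumerates bijections $\alpha,\beta$ between the blocks of $\bfA$ and the blocks of $\bfP$ (at most $p!\cdot q!=2^{\Oh(p\log p+q\log q)}$ pairs), and checks both the entry condition and the cardinality conditions $|I_i|\geq|X_{\alpha(i)}|$, $|J_j|\geq|Y_{\beta(j)}|$. You can salvage your approach by replacing functions on individual rows of $\bfP$ with bijections between block partitions and adding these cardinality checks; without them the characterization, and hence the algorithm, is incorrect.
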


\begin{proof}
 Let $(\mathcal{I}=\{I_1,\ldots,I_s\},\mathcal{J}=\{J_1,\ldots,J_t\})$ 
 be the block partition of $\bfA$ 
 and let $(\mathcal{X}=\{X_1,\ldots,X_{p'}\},\mathcal{Y}=\{Y_1,\ldots,Y_{q'}\})$ 
  be the block partition of $\bfP$.
Observe that $\bfA$ is a $\bfP$-matrix if and only if $s=p'$, $t=q'$ and there are permutations $\alpha$ and $\beta$ of  $\{1,\ldots,p'\}$ and $\{1,\ldots,q'\}$, respectively, such that the following holds for every $i\in\{1,\ldots,p'\}$ and $j\in\{1,\ldots,q'\}$:
\begin{itemize}
\item[i)] $|I_i|\geq |X_{\alpha(i)}|$ and $|J_j|\geq |Y_{\beta(j)}|$,
\item[ii)] $a_{i'j'}=p_{i''j''}$ for $i'\in I_i$, $j\in J_j$, $i''\in X_{\alpha(i)}$ and $j''\in Y_{\beta(i)}$.
\end{itemize}

Thus in order to check whether $\bfA$ is a $\bfP$-matrix, we check whether $s=p'$ and $t=q'$, and if it holds, we consider all possible permutations $\alpha$ and $\beta$ and verify (i) and (ii). Note that 
the block partitions of $\bfA$ and $\bfP$ can be constructed in polynomial time. Since there are $p'!\in 2^{\Oh(p\log p)}$ and $q'!\in 2^{\Oh(q\log q)}$ permutations of $\{1,\ldots,p'\}$ and $\{1,\ldots,q'\}$, respectively, and (i)--(ii) can be verified in polynomial time, we obtain that the algorithm runs in time $2^{\Oh(p\log p+q\log q)}\cdot (nm)^{\Oh(1)}$.
\end{proof}

We conclude the section by showing that \probPClust is \classFPT when parameterized by $k$ and the size of $\bfP$.

\begin{proposition}\label{prop:PClust}
\probPClust can be solved in time $2^{\Oh(k(\log p+\log q)+p\log p+q\log q)}\cdot(nm)^{\Oh(1)}$.
\end{proposition}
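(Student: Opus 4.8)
The plan is to regard \probPClust as the problem of choosing a row-assignment $\rho:\{1,\ldots,m\}\to\{1,\ldots,p\}$ and a column-assignment $\sigma:\{1,\ldots,n\}\to\{1,\ldots,q\}$, both surjective (every block of $\bfP$ must be used since the partitions consist of nonempty parts), and setting $b_{st}=p_{\rho(s),\sigma(t)}$; the goal is to decide whether $\min_{\rho,\sigma}\sum_{s,t}[a_{st}\neq p_{\rho(s),\sigma(t)}]\le k$. First I would establish a reduction rule that provides the running-time budget. By Observation~\ref{obs:p-matr-numberofrows} every $\bfP$-matrix $\bfB$ has at most $p$ distinct rows and at most $q$ distinct columns; since at most $k$ rows and at most $k$ columns of $\bfA$ are touched by the edits, the \emph{unedited} rows of $\bfA$ already coincide with rows of $\bfB$ and hence take at most $p$ distinct values, so $\bfA$ has at most $p+k$ distinct rows, and symmetrically at most $q+k$ distinct columns. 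If $\bfA$ violates these bounds, or if $m<p$ or $n<q$ (so that no $m\times n$ $\bfP$-matrix exists), I return \no. Thus I may assume $\bfA$ has at most $p+k$ \emph{row-types} and at most $q+k$ \emph{column-types}, a type being a maximal set of equal rows (resp.\ columns).

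The core of the algorithm is to guess how the types are distributed among the blocks of $\bfP$. Assigning each of the $\le p+k$ row-types to one of the $p$ row-blocks and each of the $\le q+k$ column-types to one of the $q$ column-blocks costs $p^{p+k}\cdot q^{q+k}=2^{\Oh(k(\log p+\log q)+p\log p+q\log q)}$ trials, which is exactly the claimed bound. The point of such a guess is decoupling: once the row-assignment is fixed, the at most $q$ distinct columns of $\bfB$ are completely determined, the pattern of column-block $j$ being the vector $(p_{\rho(s),j})_{s=1}^{m}$, so the cheapest column-assignment is obtained by sending each column of $\bfA$ to a nearest matching block; this is a min-cost assignment with the lower-bound constraint that every block be nonempty, which is solvable in polynomial time by min-cost flow, and symmetrically for the rows. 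For each guess I would build the resulting $\bfB$, verify it is a genuine $\bfP$-matrix via Observation~\ref{obs:p-matr-brute}, and test $\|\bfA-\bfB\|_F^2\le k$, absorbing the $2^{\Oh(p\log p+q\log q)}$ cost of that check into the overall bound.

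The hard part will be the interaction between this guessing and the requirement that \emph{all} $p$ row-blocks and \emph{all} $q$ column-blocks be nonempty when $\bfP$ has repeated rows or columns: then several distinct blocks carry the same pattern, and an optimal solution may be forced to \emph{split} a single type of $\bfA$ across equal-pattern blocks (e.g.\ when a type must populate two equal blocks but is too small to do so, forcing an edit, or when there are fewer types than blocks). Consequently one cannot simply collapse all equal rows, or all equal columns, into one block, so a purely type-level assignment is incomplete, and neither the row side nor the column side can be fixed rigidly before the other is solved. I would recover the lost per-element freedom by letting the min-cost flow reassign individual rows and columns (with unit lower bounds enforcing nonemptiness) on top of the guessed block structure; interchangeability of equal rows (resp.\ columns) keeps this polynomial since only block multiplicities matter. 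The delicate part of the correctness argument is to prove that guessing the type-to-block structure together with this flow-based repair captures an optimal $\bfB$ — in particular that the number of elements deviating from their type's dominant block is controlled by the $p+k$ and $q+k$ type bounds — and this mutual dependence of the two flows under the nonemptiness constraint is the main obstacle to a clean proof.
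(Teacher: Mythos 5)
Your proposal takes a genuinely different route from the paper, but it has a gap at its core that you yourself flag and do not close. The paper's proof is a depth-$k$ branching algorithm: it first tests whether $\bfA$ is already a $\bfP$-matrix (Observation~\ref{obs:p-matr-brute}, contributing the $2^{\Oh(p\log p+q\log q)}$ factor), and otherwise computes the block partition of $\bfA$, observes that any $\bfP$-matrix $\bfB$ within distance $k$ must disagree with $\bfA$ somewhere inside the submatrix formed by the first $\min\{s,p+1\}$ row-blocks and $\min\{t,q+1\}$ column-blocks (else $\bfB$ would have $p+1$ distinct rows or $q+1$ distinct columns, or $\bfA$ would itself be a $\bfP$-matrix), and branches over the at most $(p+1)(q+1)$ blocks in which to flip one entry, giving $((p+1)(q+1))^k$ leaves. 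This sidesteps entirely the assignment problem that your approach must solve. Your budget accounting is fine --- $p^{p+k}q^{q+k}$ does match the claimed bound, and the $p+k$/$q+k$ type bounds are exactly the paper's Lemma~\ref{lem:diff-pclust} --- but the guessing step is where the argument breaks.

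Concretely: a type-level assignment cannot represent optimal solutions, and not only in the benign case of repeated patterns in $\bfP$. Take $\bfA$ the all-zero $m\times n$ matrix and $\bfP=\binom{0}{1}$; the single row-type of $\bfA$ \emph{must} be split between the two row-blocks, which carry \emph{different} patterns, so every type-constant guess is either rejected by the $\bfP$-matrix check (a block is empty) or grossly suboptimal, even though the instance is a \yesinstance for $k\geq n$. Your proposed remedy --- a min-cost flow with unit lower bounds that reassigns individual rows and columns on top of the guess --- is where correctness actually lives, and no argument is given for it. Worse, once the flow moves individual rows, the column-block patterns $(p_{\rho(s),j})_{s=1}^m$ change, so your decoupling claim (``once the row-assignment is fixed, the columns of $\bfB$ are completely determined'') no longer applies, and the two flows cannot be solved independently or in either order; you would need to show that some jointly optimal pair is reachable from one of your guesses, and that the reassignments it requires are confined to a structure your repair step can find. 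You acknowledge this as ``the main obstacle to a clean proof,'' and it is: as written the proposal is a plan with the decisive lemma missing, whereas the paper's branching argument needs only the pigeonhole observation above.
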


\begin{proof}
As with \probFact in Proposition~\ref{prop:fact-rk}, we consider \probPClust as a matrix editing problem. The task now is to obtain from the input matrix $\bfA$ a $\bfP$-matrix by at most $k$ editing operations. We construct a recursive branching algorithm for this.
Let $(\bfA ,\bfP,k)$ be an instance of \probPClust, 
where $\bfA= (a_{ij})\in \{0,1\}^{m\times n}$ and   $\bfP=(p_{ij})\in \{0,1\}^{p\times q}$.
  Then the algorithm  works as follows.

\begin{itemize}
\item Check whether $\bfA$ is a $\bfP$-matrix using Observation~\ref{obs:p-matr-brute}. If it is so, then return YES and stop.
\item If $k=0$, then return NO and stop.
\item Find the block partition $(\mathcal{I},\mathcal{J})$ of $\bfA$. Let $\mathcal{I}=\{I_1,\ldots,I_s\}$ and $\mathcal{J}=\{J_1,\ldots,J_t\}$. 
Set $p'=\min\{s,p+1\}$ and $q'=\min\{t,q+1\}$.
For each $i\in \{1,\ldots,p'\}$ and $j\in \{1,\ldots,q'\}$ do the following:
\begin{itemize}
\item construct  a matrix $\bfA'$ from $\bfA$  by replacing the value of an arbitrary $a_{i'j'}$ for $i'\in I_i$  and $j'\in J_j$ by the opposite value, i.e., set it $a_{i'j'}=1$ if it was 0 and 0 otherwise,
\item call the algorithm recursively for $(\bfA',r,k-1)$, and 
\begin{itemize}
\item if the algorithm returns YES, then return YES and stop.
\end{itemize}
\end{itemize}
\item Return NO and stop.
\end{itemize}

 For the correctness of the algorithm, let us assume that the algorithm did not stop in the first two steps. That is, $\bfA$ is not a $\bfP$-matrix and $k>0$. Consider $I=\bigcup_{i=1}^{p'}I_i$ and $J=\bigcup_{j=1}^{q'}J_j$. Let $\bfB= (b_{ij})\in \{0,1\}^{m\times n}$ be    a $\bfP$-matrix such that $\|\bfA-\bfB\|_F^2\leq k$. 
Observe that $\bfA[I,J]$ and $\bfB[I,J]$ differ in at least one element.  Hence, there is $i\in\{1,\ldots,p'\}$ and $j\in\{1,\ldots,q'\}$ such that $a_{i'j'}\neq b_{i'j'} $ for $i'\in I_i$ and $j'\in J_j$.  Note that for any choice of $i',i''\in I_i$ and $j',j''\in J_j$, the matrices $\bfA'$ and $\bfA''$ obtained from $\bfA$ by   changing the elements  $a_{i'j'}$ and $a_{i''j''}$ respectively, are isomorphic. This implies that $(\bfA,\bfP,k)$ is a \yesinstance of \probPClust if and only if $(\bfA',\bfP,k-1)$ is a \yesinstance for one of the branches of the algorithm.

For the running time evaluation, recall that by Observation~\ref{obs:p-matr-brute}, the first step can be done in time $2^{\Oh(p\log p+q\log q)}\cdot (nm)^{\Oh(1)}$. Then the block partition of $\bfA$ can be constructed in polynomial time and we have at most $(p+1)(q+1)$ recursive calls of the algorithm in the third step. The depth of recursion is at most $k$. Hence, we conclude that  the total running time is   $2^{\Oh(k(\log p+\log q)+p\log p+q\log q)}\cdot(nm)^{\Oh(1)}$.
\end{proof}

\section{\probAtMostClust parameterized by $k$}\label{sec:par-k}
In this section we prove that \probAtMostClust is \classFPT when parameterized by $k$.
That is we prove the following theorem. 

\begin{theorem}\label{thm:fpt-clust-k}
\probAtMostClust is solvable in time $2^{\Oh(k\log k)}\cdot (nm)^{\Oh(1)}$.
\end{theorem}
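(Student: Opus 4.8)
The plan is to exploit the structure of regular solutions (Lemma~\ref{lem:init}) to show that, although the number of clusters $r$ may be unbounded, only $\Oh(k)$ columns are ever ``relevant,'' and then to route the genuinely hard remaining clustering task through \textsc{Consensus Patterns} so that Marx's algorithm~\cite{Marx08} can be applied. I would first preprocess by collapsing identical columns into weighted \emph{initial clusters}, obtaining $d$ distinct column-types with multiplicities $n_1,\dots,n_d$. Two extreme cases are immediate: if $d\le r$ we may give every type its own cluster at cost $0$ and report \yes; and if $d>r+k$ we may report \no, since by Observation~\ref{obs:eq} at most one type per cluster can coincide with its mean, so at least $d-r'>k$ types differ from their means and each such type costs at least its multiplicity $\ge 1$. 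Hence the only interesting regime is $r<d\le r+k$.

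In this regime I would establish the key structural bound: in a solution of cost at most $k$, the types that are \emph{moved} (do not equal their cluster mean) have total multiplicity at most $k$, because each such type contributes at least $n_i\ge 1$ to the cost; consequently there are at most $k$ ``dirty'' columns, spread over at most $k$ distinct types, and the number of nontrivial clusters (those receiving a moved type) is also at most $k$, each carrying cost $\ge 1$. Thus all but an $\Oh(k)$-sized set of columns may be assumed to sit in trivial singleton clusters, and only the assignment of the $\le k$ dirty columns into at most $k$ clusters, together with the choice of the corresponding means, remains to be decided.

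Next I would apply the color-coding technique of Alon, Yuster and Zwick~\cite{AlonYZ95} to an auxiliary problem isolating this bounded core: using a perfect hash family I would, at a cost of a $2^{\Oh(k\log k)}$ factor, identify the $\le k$ dirty columns among all $n$ columns and guess, with $\le k$ labels, which target cluster each joins, so that the grouping of the relevant columns becomes fixed and only the means have to be determined. The crux is the final step, where I would encode each resulting auxiliary instance as a \textsc{Consensus Patterns} instance: the sought cluster mean plays the role of the center string, the columns play the role of the input strings, and the substring-selection mechanism is used to model the remaining freedom — aligning the dirty columns and, in particular, choosing each cluster's anchor among the (up to $d$) input types — while keeping the total Hamming distance equal to the clustering cost. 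Setting the distance budget to $D=\Oh(k)$, Marx's algorithm solves each instance in time $2^{\Oh(D\log D)}(nm)^{\Oh(1)}$; running it over all colorings and all guesses of the number of nontrivial clusters keeps the total running time $2^{\Oh(k\log k)}(nm)^{\Oh(1)}$.

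The main obstacle is exactly this reduction to \textsc{Consensus Patterns}: one must design the strings and an enlarged alphabet so that substring-selection faithfully simulates both the assignment of the $\Oh(k)$ dirty columns to clusters and the selection of each cluster's mean (including anchors that are not themselves among the dirty columns), while guaranteeing that the distance parameter stays linear in $k$ — so that $2^{\Oh(D\log D)}$ matches the target — and that no spurious cheap alignments are introduced. The subsidiary points needing care are verifying that the color-coding step separates the clusters correctly and that freely retaining the untouched types as trivial singleton clusters never violates the bound $r'\le r$ on the number of clusters.
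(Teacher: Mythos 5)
Your overall architecture coincides with the paper's: bound the ``active'' part of a regular solution (at most $k$ composite clusters containing at most $2k$ initial clusters in total), color-code the initial clusters, and reduce the residual per-cluster subproblem to \textsc{Consensus Patterns} so that Marx's algorithm can be invoked. However, there are two genuine gaps.

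First, your description of what color coding buys is inconsistent and, taken at face value, wrong. Color coding cannot ``identify the $\le k$ dirty columns''; it only guarantees (via a perfect hash family) that the $\Oh(k)$ initial clusters participating in composite clusters receive pairwise distinct colors. After additionally guessing a partition of the color classes into prospective composite clusters, what remains is \emph{not} ``only the means have to be determined'': for each prospective cluster one must still choose exactly one initial cluster from each of its color classes (each class may contain many candidates) \emph{and} a mean, jointly minimizing the cost. If the grouping of actual columns were really fixed, the majority rule would compute the optimal means in polynomial time and no reduction to \textsc{Consensus Patterns} would be needed at all. It is precisely this coupled selection-plus-mean problem (the paper's \probClustSelect) that forces the detour through \textsc{Consensus Patterns}; you later acknowledge the anchor-selection issue, but the argument as stated contradicts itself on this point.

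Second, the reduction to \textsc{Consensus Patterns} is the technical heart of the theorem, and you explicitly leave it as ``the main obstacle'' rather than carrying it out. It is not routine. The paper's construction must (a) first dispose in polynomial time of the case where the mean coincides with an input column, so that afterwards every selected column contributes at least $1$ to the cost, which bounds the number of color classes by $d$ and the total size of the selected initial clusters by $d$; (b) guess the tuple $(\ell_1,\dots,\ell_p)$ of sizes of the selected initial clusters, discard candidates of the wrong size, and replicate each string $\ell_i$ times so that Hamming distances to the center string correctly simulate multiplicities; (c) pad the column encodings with long alternating blocks over fresh symbols $a,b$ of length $m+d$ so that any two selected substrings within budget $d$ are forced to align on a common column window, ruling out spurious cheap alignments. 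Moreover, a single \textsc{Consensus Patterns} instance has one center string and hence models one mean only; the several composite clusters must be handled by independent instances whose budgets $d_1,\dots,d_t$ are required to sum to at most $k$. Without these ingredients the proof is incomplete, and the claimed running time $2^{\Oh(k\log k)}(nm)^{\Oh(1)}$ is not established.
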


 The proof of Theorem~\ref{thm:fpt-clust-k} consists of two FPT Turing  reductions. First we define a new auxiliary problem \probClustSelect   
and show how  to reduce  this problem  the \textsc{Consensus Patterns} problem. Then we can use as a black box the algorithm of    Marx~\cite{Marx08} for this problem.
The second reduction is from  \probAtMostClust to \probClustSelect  and is based on  the  color coding technique of Alon, Yuster, and Zwick from~\cite{AlonYZ95}.

\medskip\noindent\textbf{From  \probClustSelect  to  \textsc{Consensus Patterns}.}
In the \probClustSelect\ problem  we are given a regular partition $\{I_1,\ldots,I_p\}$ of columns of matrix $\bfA$.  
Our task is to select from each set $I_i$ exactly one initial cluster such that the total deviation of all the vectors in these clusters from their mean is at most $d$. More formally, 

\defproblema{\probClustSelect}%
{An $m\times n$-matrix $\bfA$ with columns $\bfa^1,\ldots,\bfa^n$, a regular partition $\{I_1,\ldots,I_p\}$ of $\{1,\ldots,n\}$, 
 and a nonnegative integer $d$.}%
{Decide whether there is a set of  initial clusters $J_1,\ldots,J_p$  and a vector $\bfc\in\{0,1\}^m$ such that 
$J_i\subseteq I_i$ for $i\in\{1,\ldots,p\}$ and
$$\sum_{i=1}^{p}\sum_{j\in J_i}\hdist(\bfc,\bfa^j)\leq d.$$
}

If $(\bfA,\{I_1,\ldots,I_p\},d)$ is a \yesinstance of \probClustSelect, then we say that the corresponding sets of initial clusters $\{J_1,\ldots,J_p\}$ and the vector $\bfc$ (or just $\{J_1,\ldots,J_p\}$ as $\bfc$ can be computed by the majority rule from the set of cluster) is a \emph{solution} for the instance.
We show that \probClustSelect is \classFPT when parameterized by $d$. Towards that,  we use the results of Marx~\cite{Marx08} about the \textsc{Consensus Patterns} problem.

\defproblema{\textsc{Consensus Patterns}}%
{A (multi) set of $p$ strings $\{s_1,\ldots,s_p\}$ over an alphabet $\Sigma$, a positive integer $t$ and a nonnegative integer $d$.}%
{Decide whether there is a string $s$ of length $t$ over $\Sigma$, and a length $t$ substring $s_i'$ of $s_i$ for every $i\in\{1,\ldots,p\}$ such that $\sum_{i=1}^p\hdist(s,s_i')\leq d$.
}

Marx proved  that \textsc{Consensus Patterns} can be solved in time $\delta^{\Oh(\delta)}\cdot |\Sigma|^\delta\cdot L^9$ where $\delta=d/p$ and $L$ is the total length of all the strings in the input~\cite{Marx08}. It gives us the following lemma.

\begin{lemma}[\cite{Marx08}]\label{lem:cons}
\textsc{Consensus Patterns} can be solved in time $2^{\Oh(d\log d)}\cdot L^9$, where $L$ is the total length of all the strings in the input if the size of $\Sigma$ is bounded by a constant.
\end{lemma}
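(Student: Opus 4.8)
The plan is to derive the lemma directly from the running-time bound $\delta^{\Oh(\delta)}\cdot|\Sigma|^\delta\cdot L^9$ with $\delta=d/p$ established by Marx~\cite{Marx08}, using it as a black box. The only work is to show that, when $|\Sigma|$ is bounded by a constant, each of the two potentially problematic factors $\delta^{\Oh(\delta)}$ and $|\Sigma|^\delta$ is absorbed into $2^{\Oh(d\log d)}$; the polynomial factor $L^9$ is carried over unchanged.

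First I would observe that, since the instance contains $p\geq 1$ strings, we have $\delta=d/p\leq d$. To bound $\delta^{\Oh(\delta)}$ I would split into two cases. If $\delta\geq 1$, then the monotonicity of $x\mapsto x^{cx}$ on $[1,\infty)$ (which follows since $\frac{d}{dx}(cx\log x)=c(\log x+1)>0$ there) together with $\delta\leq d$ gives $\delta^{\Oh(\delta)}\leq d^{\Oh(d)}=2^{\Oh(d\log d)}$. If $\delta<1$, then $\delta^{\Oh(\delta)}=\Oh(1)$, which is trivially bounded by $2^{\Oh(d\log d)}$. For the second factor, since $|\Sigma|$ is bounded by a constant, say $c$, and $\delta\leq d$, we have $|\Sigma|^\delta\leq c^{d}=2^{\Oh(d)}\leq 2^{\Oh(d\log d)}$. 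Multiplying the three factors yields the claimed bound $2^{\Oh(d\log d)}\cdot L^9$.

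There is no genuine obstacle here: the statement is a routine specialization of Marx's theorem to the constant-alphabet regime. The only point that requires a line of care is that $\delta=d/p$ need not be an integer and may be smaller than $1$, so the case distinction above is needed in order to apply the monotonicity of $x\mapsto x^{\Oh(x)}$ correctly; every other step is a direct substitution.
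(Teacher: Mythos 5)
Your proposal is correct and matches the paper's approach: the paper simply invokes Marx's bound $\delta^{\Oh(\delta)}\cdot|\Sigma|^\delta\cdot L^9$ with $\delta=d/p$ and asserts the lemma as an immediate consequence, which is exactly the specialization you carry out. You merely make explicit the routine estimates ($\delta\leq d$, the case split around $\delta<1$, and $|\Sigma|^{\delta}\leq 2^{\Oh(d)}$) that the paper leaves implicit.
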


Now we are ready to show the following result for \probClustSelect.

\begin{lemma}\label{lem:clustselect}
\probClustSelect can be solved in time $2^{\Oh(d\log d)}\cdot (nm)^{\Oh(1)}$.
\end{lemma}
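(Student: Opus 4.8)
The plan is to give a polynomial-time reduction from \probClustSelect to \textsc{Consensus Patterns} that inflates neither the deviation bound (it will stay $\Oh(d)$) nor the alphabet (it will stay of constant size), and then to invoke Marx's algorithm through Lemma~\ref{lem:cons}. First I would reformulate the task. Selecting an initial cluster $J_i\subseteq I_i$ is the same as selecting one of the distinct column vectors occurring in $I_i$: if $\mathbf{v}$ is this vector and $w$ is the size of the corresponding initial cluster, then, since all columns of an initial cluster coincide, the contribution of part $i$ to the objective is exactly $w\cdot\hdist(\bfc,\mathbf{v})$. Thus \probClustSelect asks for a vector $\bfc$ and, for each part, a (vector, multiplicity) pair chosen among the initial clusters of $I_i$, so as to make the size-weighted sum $\sum_i w_i\,\hdist(\bfc,\mathbf{v}^i)$ at most $d$. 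Before encoding, I would cap every multiplicity at $d+1$: if an initial cluster has size exceeding $d+1$, then selecting it together with $\hdist(\bfc,\mathbf{v}^i)\geq 1$ already breaks the budget, so for testing ``$\leq d$'' we may safely replace each weight $w$ by $\min\{w,d+1\}$ (the two instances have exactly the same \yes/\no answer, since any part whose capped contribution is at most $d$ and whose original weight exceeds $d+1$ must have $\hdist(\bfc,\mathbf{v}^i)=0$).

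Next I would build the \textsc{Consensus Patterns} instance over the constant-size alphabet $\{0,1,\#\}$, letting parts correspond to strings, candidate initial clusters to length-$t$ substrings, and the mean $\bfc$ to the consensus string. For each part $I_i$ I would create one string $s_i$ by concatenating, for every candidate cluster of $I_i$ of (capped) weight $w$, a block built from its vector repeated $w$ times and brought to a uniform block length, so that a single window length $t$ works for all candidates of all parts; consecutive blocks would be separated by runs of the fresh symbol $\#$ chosen long enough that any length-$t$ window not aligned with one block necessarily meets many $\#$'s and is therefore too expensive ever to be selected. The separators and the length equalization would be arranged so that aligning a window with the block of a weight-$w$ candidate makes its Hamming distance to the consensus equal to $w\cdot\hdist(\bfc,\mathbf{v})$, whence the \textsc{Consensus Patterns} optimum over the $p$ strings coincides with the size-weighted clustering optimum. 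Taking target $d$, number of strings $p$, and noting that the total length $L$ stays polynomial in $n$ and $m$, Lemma~\ref{lem:cons} then yields the claimed bound $2^{\Oh(d\log d)}\cdot(nm)^{\Oh(1)}$.

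The main obstacle is exactly this faithful transfer of a \emph{size-weighted} cost into the \emph{unweighted} \textsc{Consensus Patterns} cost. I must encode the multiplicities by repetition while simultaneously ensuring three things: (i) the alphabet remains of constant size; (ii) the \textsc{Consensus Patterns} target stays $\Oh(d)$ rather than being multiplied up by the magnification factor; and (iii) neither the separators nor the material used to equalize the block lengths produces spurious contributions or opens up a cheaper misaligned window. Point (ii) is the easy one: only selected blocks are ever charged, and the weighted distances of the selected clusters already sum to at most $d$, so the budget need not grow. Points (i) and (iii) are the delicate part and require a careful quantitative choice of the separator length, together with an argument that in any window capable of beating the intended alignment the induced cost provably exceeds $d$; this alignment-and-accounting step is the technical heart of the proof. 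Once it is established, the equivalence of the two optima and the polynomial bound on the size of the produced instance are routine.
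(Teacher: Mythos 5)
There is a genuine gap in your weight-encoding step, and it is precisely at the point you identify as ``the technical heart of the proof.'' You propose one string per part, with the multiplicity $w$ of a candidate initial cluster realized by repeating its vector $w$ times inside the corresponding block (padded to a uniform block length). But \textsc{Consensus Patterns} charges each selected window its Hamming distance to a \emph{single shared} consensus string $s$ of length $t$. If part $i$ selects a cluster of weight $w_i$ and part $i'$ one of weight $w_{i'}>w_i$, then the segment of $s$ lying over positions $w_i m+1,\ldots,w_{i'}m$ of the window must simultaneously match the padding of part $i$'s block and the extra copies of $\mathbf{v}^{i'}$ in part $i'$'s block. Whatever the padding is (a fresh symbol, zeros, repeated copies of the vector itself), one of the two windows pays a spurious cost on those positions that is governed by $m$ and by $\hdist$ between the padding unit and the unknown mean, not by $d$; in particular, padding with extra copies of $\mathbf{v}$ turns the cost into $(d+1)\hdist(\bfc,\mathbf{v})$ for every part, and padding with a fresh symbol costs up to $(d+1-w)m$ against the consensus. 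So the claimed identity between the \textsc{Consensus Patterns} optimum and the size-weighted clustering optimum fails, and no choice of separator lengths repairs it.

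The paper avoids this by realizing the weight through \emph{string multiplicity} rather than in-block repetition: it first disposes in polynomial time of the case where the mean coincides with an input column (after which every selected column contributes at least $1$, forcing $p\leq d$ and total selected size at most $d$), then guesses the size tuple $(\ell_1,\ldots,\ell_p)$ with $\sum_i\ell_i\leq d$ (at most $4^{d}$ choices), discards candidate clusters of the wrong size, and creates $\ell_i$ identical copies of the string for part $i$. Each copy independently selects the same window and pays $\hdist(\bfc,\bfa^{h_i})$ against a consensus containing a single copy of $\bfc$, so the factor $\ell_i$ appears for free and the consensus never has to reconcile different weights. Your capping of weights at $d+1$ and your guard-string design are fine and essentially match the paper's construction; it is only the mechanism for transferring the weights into the unweighted objective that needs to be replaced by this duplication-plus-guessing scheme (or something equivalent).
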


\begin{proof}
Let $(\bfA,\{I_1,\ldots,I_p\},d)$ be an instance of \probClustSelect. Let $\bfa^1,\ldots,\bfa^n$ be the columns of $\bfA$.
First, we check whether there  are initial clusters $J_1,\ldots,J_p$  and a vector $\bfc=\bfa^i$ for some $i\in\{1,\ldots,n\}$ such that 
$J_j\subseteq I_j$ for $j\in\{1,\ldots,p\}$ and $\sum_{j=1}^{p}\sum_{h\in J_j}\hdist(\bfc,\bfa^h)\leq d$. Towards that 
we consider all possible choices of $\bfc=\bfa^i$ for $i\in\{1,\ldots,n\}$. Suppose that $\bfc$ is given. For every $j\in\{1,\ldots,p\}$, we find an initial cluster $J_j\subseteq I_j$ such that 
$\sum_{h\in J_j}\hdist(\bfc,\bfa^h)$ is minimum. If $\sum_{j=1}^{p}\sum_{h\in J_j}\hdist(\bfc,\bfa^h)\leq d$, then we return the corresponding solution, i.e., the set of initial clusters $\{J_1,\ldots,J_p\}$ and $\bfc$. Otherwise, we discard the choice of $\bfc$.  
It is straightforward to see that  this procedure is correct and can be performed in polynomial time. 
Now on we assume that this is not the case. That  is, if $(\bfA,\{I_1,\ldots,I_p\},d)$ is a \yesinstance, then $\bfc\neq \bfa^i$ for any solution. 
In particular, it means that for every solution $(\{J_1,\ldots,J_p\},\bfc)$, $\hdist(\bfc,\bfa^j)\geq 1$ for $j\in J_1\cup\ldots\cup J_p$. If $p>d$, we obtain that   $(\bfA,\{I_1,\ldots,I_p\},d)$ is a no-instance. In this case we return the answer and stop. Hence, from now we  assume that $p\leq d$. Moreover, observe that  $|J_1|+\ldots+|J_p|\leq d$ for any solution  $(\{J_1,\ldots,J_p\},\bfc)$.  

We consider all $\bfP$-tuples of positive integers $(\ell_1,\ldots,\ell_p)$ such that $\ell_1+\ldots+\ell_p\leq d$ and for each $\bfP$-tuple check whether there is a solution $(\{J_1,\ldots,J_p\},\bfc)$ with $|J_i|=\ell_i$ for $i\in\{1,\ldots,p\}$. Note that there are at most $2^{d+p} \leq 4^d$ such $\bfP$-tuples. If we find a solution for one of the $\bfP$-tuples, we return it and stop. If we have no solution for any $\bfP$-tuple, we conclude that we have a no-instance of the problem.

Assume that we are given a $\bfP$-tuple $(\ell_1,\ldots,\ell_p)$. 
If there is $i\in\{1,\ldots,p\}$ such that there is no initial cluster $J_i\in I_i$ with $|J_i|=\ell_i$, then we discard the current choice of the $\bfP$-tuple. Otherwise, we reduce the instance of the problem using the following rule: if there is $i\in\{1,\ldots,p\}$ and an initial cluster $J\subseteq I_i$ such that $|I|\neq\ell_i$, then delete columns $\bfa^h$ for $h\in J$ from the matrix and set $I_i=I_i\setminus J$. By this rule, we can assume that each $I_i$ contains only initial clusters of size $\ell_i$. Let $I_i=\{J_1^i,\ldots,J_{q_i}^i\}$ where $J_1^i,\ldots,J_{q_i}^i$ are initial clusters for $i\in\{1,\ldots,p\}$.

We reduce the problem of checking the existence of a solution $(\{J_1,\ldots,J_p\},\bfc)$ with $|J_i|=\ell_i$ for $i\in\{1,\ldots,p\}$ to the \textsc{Consensus Patterns} problem.  
Towards that, we first define the alphabet $\Sigma=\{0,1,a,b\}$ and strings  
\begin{eqnarray*}
 \overline{a}=\underbrace{a\ldots a}_{m+d} ,  &  \, \overline{b}=\underbrace{b\ldots b}_{m+d}, \,  & \text{and } \overline{0}=\underbrace{0\ldots 0}_{d}. 
\end{eqnarray*}
Then $x=\overline{a}\overline{b}\cdots\overline{a}\overline{b}$ is defined to be the string obtained by the alternating concatenation of $d+1$ copies of $\overline{a}$ and $d+1$ copies of $\overline{b}$.
 Now we construct $\ell=\ell_1+\ldots+\ell_p$ strings $s_i^j$ for $i\in\{1,\ldots,p\}$ and $j\in\{1,\ldots,\ell_i\}$. 
For each $i\in\{1,\ldots,p\}$, we do the following. 
\begin{itemize}
\item For every $q\in\{1,\ldots,q_i\}$, select a column $\bfa^{h_{i,q}}$ for $h_{i,q}\in J_q^i$ and, by slightly abusing the notation, consider it to be a $(0,1)$-string. 
\item Then for every $j\in \{1,\ldots,\ell_i\}$, set $s_i^j=x\bfa^{h_{i,1}}\overline{0}x\ldots x\bfa^{h_{i,q_i}}\overline{0}x$. 
\end{itemize} 
Observe that the strings $s_i^j$ for $j\in\{1,\ldots,\ell_i\}$ are the same. We denote by $S=\{s_i^j\mid 1\leq i\leq p,~1\leq j\leq \ell_i\}$ the collection (multiset) of all constructed strings.
Finally, we define $t=(m+d)(2d+3)$. Then output $(S,\Sigma,t,d)$ as the instance 
of \textsc{Consensus Patterns}. 
Now we prove the correctness of the reduction.

\begin{claim}
\label{claim:cstr}
The instance $(A,\{I_1,\ldots,I_p\},d)$ of \probClustSelect has a solution $(\{J_1,\ldots,J_p\},\bfc)$ with $|J_i|=\ell_i$ for $i\in\{1,\ldots,p\}$ if and only if $(S,\Sigma,t,d)$ is a \yesinstance of \textsc{Consensus Patterns}.
\end{claim}

\begin{proof}[Proof of Claim~\ref{claim:cstr}]
Suppose that the instance $(A,\{I_1,\ldots,I_p\},d)$ of \probClustSelect has a solution $(\{J_1,\ldots,J_p\},\bfc)$ with $|J_i|=\ell_i$ for $i\in\{1,\ldots,p\}$.
For every $i\in\{1,\ldots,p\}$ and $j\in\{1,\ldots,\ell_i\}$, we select the substring $\hat{s}_i^j=x\bfa^{h_{i,j}}\overline{0}x$ where $h_{i,j}\in J_i$. By the definition, $|\hat{s}_i^j|=2|x|+m+d=t$. We set $s=x\bfc\overline{0}x$ considering the vector $\bfc$ being a $(0,1)$-string. Clearly, $|s|=t$.
We have that
$$\sum_{i=1}^p\sum_{j=1}^{\ell_i}\hdist(s,\hat{s}_i^j)=\sum_{i=1}^p\ell_i\hdist(\bfc,\bfa^{h_{i,j}})= \sum_{i=1}^p\sum_{h\in J_i}\hdist(\bfc,\bfa^h)\leq d.$$
Therefore, $(S,\Sigma,t,d)$ is a \yesinstance of \textsc{Consensus Patterns}.

Now we prove the reverse direction. 
Assume that $(S,\Sigma,t,d)$ is a \yesinstance of \textsc{Consensus Patterns}. Let $\hat{s}_i^j$ be a substring of $s_i^j$ of length $t$ for $i\in\{1,\ldots,p\}$ and $j\in\{1,\ldots,\ell_i\}$, and let $s$ be a string of length $t$ over $\Sigma$ such that 
$\sum_{i=1}^p\sum_{j=1}^{\ell_i}\hdist(s,\hat{s}_i^j)\leq d$.
We first show that there is a positive integer $\alpha\leq t-m+1$ such that for every $i\in \{1,\ldots,p\}$ and $j\in\{1,\ldots,\ell_i\}$, $\hat{s}_i^j[\alpha..\alpha+m-1]=\bfa^{h_{i,j}}$ for some $h_{i,j}\in I_i$.

Consider the substring $\hat{s}_1^1$. Since $|\hat{s}_1^1|=t$, by the definition of the string $s_1^1$, we have that there is a positive integer $\beta\leq t-|x|+1=t-2(m+d)(d+1)+1$ such that $\hat{s}_1^1[\beta..\beta+\vert x \vert-1]=\hat{s}_1^1[\beta..\beta+2(m+d)(d+1)-1]=x$. Let  $i\in \{1,\ldots,p\}$ and $j\in\{1,\ldots,\ell_i\}$. Suppose that $\hat{s}_i^j[\beta..\beta+2(m+d)(d+1)-1]\neq x$. Recall that $x$ contains $2(d+1)$ alternating copies of $\overline{a}$ and $\overline{b}$ and $|\overline{a}|=|\overline{b}|=m+d$.
Because $\hdist(\hat{s}_1^1,\hat{s}_i^j)\leq \hdist(\hat{s}_1^1,{s})+\hdist({s},\hat{s}_i^j) \leq d$ (by the triangular inequality) and by the construction of the strings of $S$, we have that that either
\begin{itemize} 
\item[i)] there is $\gamma=\beta+2(m+d)h$ for some nonnegative integer $h\leq d$ such that   $\hat{s}_i^j[\beta..\gamma-1]=x[\beta..\gamma-1]$ and  $\hat{s}_i^j[\gamma..\gamma+m-1]=\bfa^{h_{i,j}}$ for some $h_{i,j}\in I_i$, or
\item[ii)] there is $\gamma=\beta+2(m+d)h$ for some integer $1\leq h\leq d$ such that   $\hat{s}_i^j[\gamma..\beta+2(m+d)(d+1)-1]=x[\gamma..\beta+2(m+d)(d+1)-1]$ and  $\hat{s}_i^j[\gamma-(m+d)..\gamma-d+1]=\bfa^{h_{i,j}}$ for some $h_{i,j}\in I_i$.
\end{itemize}

We would like to mention that in the above two cases, in one of them $x[\beta..\gamma-1]$ or  $x[\gamma..\beta+2(m+d)(d+1)-1]$ may not be well defined. But at least in one of them it will be well defined.  
These cases are symmetric and without loss of generality we can consider only the case (i). We have that  $\hat{s}_i^j[\gamma..\gamma+(m+d)-1]$ and $\hat{s}_1^1[\gamma..\gamma+(m+d)-1]$ differ in all the symbols, because all the symbols of $\hat{s}_i^j[\gamma..\gamma+(m+d)-1]$ are $0$ or $1$ and $\hat{s}_1^1[\gamma..\gamma+(m+d)-1]=\overline{a}$.
Since $m+d>d$, it contradicts the property that $\hdist(\hat{s}_1^1,\hat{s}_i^j)\leq d$.
So we have that $\hat{s}_i^j[\beta..\beta+2(m+d)(d+1)-1]=x$ for every $i\in \{1,\ldots,p\}$ and $j\in\{1,\ldots,\ell_i\}$. 

If $\beta>m+d$, then we set $\alpha=\beta-(m+d)$. 
Notice that that for every $i\in \{1,\ldots,p\}$ and $j\in\{1,\ldots,\ell_i\}$, $\hat{s}_i^j[\alpha..\alpha+m-1]=\bfa^{h_{i,j}}$ for some $h_{i,j}\in I_i$.  Suppose $\beta\leq m+d$. Then we set $\alpha=\beta+2(m+d)(d+1)$. 
Because $t=(m+d)(2d+3)$, 
it holds that $\alpha\leq t-m+1$ and for every $i\in \{1,\ldots,p\}$ and $j\in\{1,\ldots,\ell_i\}$, $\hat{s}_i^j[\alpha..\alpha+m-1]=\bfa^{h_{i,j}}$ for some $h_{i,j}\in I_i$.
Now 
consider $c=s[\alpha..\alpha+m-1]$. Because for every $i\in \{1,\ldots,p\}$ and $j\in\{1,\ldots,\ell_i\}$, $\hat{s}_i^j[\alpha..\alpha+m-1]=\bfa^{h_{i,j}}$ for some $h_{i,j}\in I_i$, we can assume that $\bfc$ is a $(0,1)$-string. We consider it as a vector of $\{0,1\}^m$. 

Let $i\in\{1,\ldots,p\}$. We consider the columns $\bfa^{h_{i,j}}$ for $j\in
\{1,\ldots,\ell_i\}$ and find among them the column $\bfa^{h_i}$ such that $\hdist(\bfc,\bfa^{h_i})$ is minimum. Let $J_{r_i}^i\subseteq I_i$ for $r_i\in\{1,\ldots,q_i\}$ be an initial cluster that contains $h_i$.
We show that $(\{J_{r_1}^1,\ldots,J_{r_p}^p\},\bfc)$ is a solution for the instance $(\bfA,\{I_1,\ldots,I_p\},d)$ of \probClustSelect. To see it, it is sufficient to observe that the following inequality  holds.
\begin{align*}
\sum_{i=1}^p\sum_{h\in J_{r_i}^i}\hdist(\bfc,\bfa^h)=&\sum_{i=1}^p\ell_i\hdist(\bfc,\bfa^{h_i})\leq\sum_{i=1}^p\sum_{j=1}^{\ell_i}\hdist(\bfc,\bfa^{h_{i,j}})\leq \sum_{i=1}^p\sum_{j=1}^{\ell_i}\hdist(s,\hat{s}_i^j)\leq d.
\end{align*}
This concludes the proof of the claim.
\end{proof}

Using Claim~\ref{claim:cstr} and Lemma~\ref{lem:cons}, we solve \textsc{Consensus Patterns} for $(S,\Sigma,t,d)$. 
This completes the description of our algorithm and its correctness proof. To evaluate the running time, recall first that we check in polynomial time whether we have a solution  
with $\bfc$ coinciding 
with a column of $\bfA$. If we fail to find such a solution, then we consider at most $4^d$ $\bfP$-tuples $(\ell_1,\ldots,\ell_p)$. Then for each $\bfP$-tuple, we either discard it immediately or construct in polynomial time the corresponding instance of \textsc{Consensus Patterns}, which  is solved in time $2^{\Oh(d\log d)}\cdot (nm)^{\Oh(1)}$ by Lemma~\ref{lem:cons}. Hence, the total running time is  $2^{\Oh(d\log d)}\cdot (nm)^{\Oh(1)}$.
\end{proof}

Let us  note that we are using Lemma~\ref{lem:cons} as a black box in our algorithm for \probClustSelect. By adapting the algorithm of Marx~\cite{Marx08} for \textsc{Consensus Patterns} to solve \probClustSelect it is possible to improve the polynomial factor in the running time but this would demand repeating and rewriting  various parts of~\cite{Marx08}.

\medskip\noindent\textbf{From  \probAtMostClust to  \textsc{Consensus Patterns}.}
Now we prove the main result of the section.

\begin{proof}[Proof of Theorem~\ref{thm:fpt-clust-k}]
Our algorithm for \probAtMostClust uses the \emph{color coding} technique introduced by Alon, Yuster and Zwick in~\cite{AlonYZ95} (see also~\cite{CyganFKLMPPS15} for the introduction to this technique). In the end we obtain a deterministic algorithm but it is more convenient for us to describe a randomized true-biased Monte-Carlo algorithm and then explain how it could be derandomized.

Let $(\bfA,r,k)$ be a \yesinstance of \probAtMostClust where $\bfA=(\bfa^1,\ldots,\bfa^n)$. Then by Lemma~\ref{lem:init}, there is a regular solution $\{I_1,\ldots,I_{r'}\}$ for this instance. Let $\bfc^1,\ldots,\bfc^{r'}$ be the corresponding means of the clusters. Recall that regularity means that for any initial cluster $I$, there is a cluster in the solution that contains it. We say that a cluster $I_i$ of the solution is \emph{simple} if it contain exactly one initial cluster and $I_i$ is \emph{composite} otherwise.
Let $I_i$ be a composite cluster of $\{I_1,\ldots,I_{r'}\}$ that contains $h\geq 2$ initial clusters. Then $\sum_{j\in I_i}(\bfc^i,\bfa^j)\geq h-1$. This observation immediately implies that a regular solution contains at most $k$ composite clusters and the remaining clusters are simple. Moreover, the total number of initial clusters in the composite clusters is at most $2k$.
Note also that if $I_i$ is a simple cluster then $\bfc^i=\bfa^h$ for arbitrary $h\in I_i$, because   $\sum_{j\in I_i}(\bfc^i,\bfa^j)=0$, That is, simple clusters do not contribute to the total cost of the solution. 

Let $(\bfA,r,k)$ be an instance of \probAtMostClust where $\bfA=(\bfa^1,\ldots,\bfa^n)$. We construct the set $\mathcal{I}$ of initial clusters for the matrix $\bfA$. Let $s=|\mathcal{I}|$. The above observations imply  that finding a solution for \probAtMostClust is equivalent to finding a set $\mathcal{I}'\subseteq \mathcal{I}$ of size at most $2k$ such that $\mathcal{I}'$ can be partitioned  into at most $r-s+|\mathcal{I}'|$ composite clusters. More precisely, we are looking for  $\mathcal{I}'\subseteq \mathcal{I}$ of size at most $2k$ such that there is a partition $\{P_1,\ldots,P_t\}$  of 
$\mathcal{I}'$ with $t\leq r-s+|\mathcal{I}'|$ and vectors $\bfs^1,\dots,\bfs^t\in \{0,1\}^m$ with the property that
$$\sum_{i=1}^t\sum_{I\in P_i}\sum_{j\in I}\hdist(\bfs^i,\bfa^h)\leq k.$$

If $s\leq r$ the $(\bfA,r,k)$ is a trivial \yesinstance of the problem with $\mathcal{I}$ being a solution. If $r+k< s$, then $(\bfA,r,k)$ is a trivial no-instance. Hence, we assume from now that $r< s\leq r+k$.
We color the elements of $\mathcal{I}$ independently and uniformly at random by $2k$ colors $1,\ldots,2k$. Observe that if $(\bfA,r,k)$ is a \yesinstance, then at most $2k$ initial clusters in a solution that are included in composite clusters are colored by distinct colors with the probability at least $\frac{(2k)!}{(2k)^{2k}}\geq e^{-2k}$.  We say that a solution  $\{I_1,\ldots,I_{r'}\}$ for $(\bfA,r,k)$ is a \emph{colorful} solution if all initial clusters that are included in composite clusters of  $\{I_1,\ldots,I_{r'}\}$ are colored by distinct colors. 
We construct an algorithm for finding a colorful solution (if it exists).

Denote by $\mathcal{I}_1,\ldots,\mathcal{I}_{2k}$ the sets of color classes of initial clusters, i.e., the sets of initial clusters that are colored by $1,\ldots,{2k}$, respectively. Note that some sets could be empty. We consider all possible  partitions $\mathcal{P}=\{P_1,\ldots,P_t\}$ of nonempty subsets of $\{1,\ldots,2k\}$ such that each set of $\mathcal{P}$ contains at least two elements. . 
Notice that if $(\bfA,r,k)$ has a colorful solution $\{I_1,\ldots,I_{r'}\}$, then there is $\mathcal{P}=\{P_1,\ldots,P_t\}$ such that a cluster $I_i$ of the solution is composite cluster containing initial clusters colored by a set of colors $X_i$ if and only if there is $X_i\in\mathcal{P}$. Since we consider all possible $\mathcal{P}$, if $(\bfA,r,k)$ has a colorful solution, we will find $\mathcal{P}$ satisfying this condition.
Assume that $\mathcal{P}=\{P_1,\ldots,P_t\}$ is given. 
If $s-|P_1|-\ldots-|P_t|+t>r$, we discard the current choice of $\mathcal{P}$. Assume from now that this is not the case.

For each $i\in\{1,\ldots,t\}$, we do the following. Let $P_i=\{i_1,\ldots,i_p\}\subseteq \{1,\ldots,2k\}$. Let 
$J_j^i=\bigcup_{I\in \mathcal{I}_{i_j}} I$ and $J^i=J_1^i\cup\ldots\cup J_p^i$. Denote by $\bfA_i$ the submatrix of $\bfA$ containing the columns $\bfa^h$ with $h\in J^i$.  We use Lemma~\ref{lem:clustselect} to find the minimum nonnegative integer $d_i\leq k$ such that $(\bfA_i,\{J_1^i,\ldots,J_p^i\},d_i)$ is a \yesinstance of \probClustSelect. If such a value of $d_i$ does not exist, we discard the current choice of $\mathcal{P}$.  Otherwise, we find the corresponding solution $(\{L_1^i,\ldots,L_p^i\},\bfs^i)$ of \probClustSelect. Let $L^i=L_1^i\cup\ldots\cup L_p^i$

If we computed $d_i$ and constructed $L^i$ for all $i\in\{1,\ldots,t\}$, we check whether $d_1+\ldots+d_t\leq k$. If it holds, we return the colorful solution with the composite clusters $L^1,\ldots,L^t$ whose means are $\bfs^1,\ldots,\bfs^t$ respectively and the remaining clusters are simple. Otherwise, we discard the choice of $\mathcal{P}$.
If for one of the choices of $\mathcal{P}$ we find a colorful solution, we return it and stop. If we fails to find a solution for all possible choices of $\mathcal{P}$, we return the answer NO and stop.

If the described algorithm produces a solution, then it is straightforward to verify that this is a colorful solution to $(\bfA,r,k)$ recalling that simple clusters do not contribute to the total cost of the solution. In the other direction, if $(\bfA,r,k)$ has  a colorful solution $\{I_1,\ldots,I_{r'}\}$, then  there is $\mathcal{P}=\{P_1,\ldots,P_t\}$ such that cluster $I_i$ of the solution is  a composite cluster containing initial clusters colored by a set of colors $X_i$ if and only if there is $X_i\in\mathcal{P}$. Let $L_1,\ldots,L_t$ be the composite clusters of the solution that correspond to $P_1,\ldots,P_t$, respectively and denote by $\bfs^1,\ldots,\bfs^t$ their means. Let $d_i=\sum_{h\in L_i}\hdist(\bfs^i,\bfa^h)$ for $i\in\{1,\ldots,t\}$. It immediately follows that for each $i\in\{1,\ldots,t\}$, it holds that if $P_i=\{i_1,\ldots,i_p\}$, then the constructed  instance $(\bfA_i,\{J_1^i,\ldots,J_p^i\},d_i)$ of \probClustSelect is a \yesinstance. Hence, the algorithm returns 
 a colorful solution.

To evaluate the running time, recall that we consider  $2^{\Oh(k\log k)}$ partitions $\mathcal{P}=\{P_1,\ldots,P_t\}$ of nonempty subsets of $\{1,\ldots,2k\}$. Then for each $\mathcal{P}$, we construct in polynomial time at most $k|\mathcal{P}|$ instances of \probClustSelect. These instances are solved in time $2^{\Oh(k\log k)}\cdot(nm)^{\Oh(1)}$ by Lemma~\ref{lem:clustselect}. We conclude that the total running time of the algorithm that checks the existence of a colorful solution is $2^{\Oh(k\log k)}\cdot(nm)^{\Oh(1)}$.

Clearly, if for a random coloring of $\mathcal{I}$, there is a colorful solution to $(\bfA,r,k)$, then $(\bfA,r,k)$ is a \yesinstance.  We consider $N=\lceil e^{2k}\rceil$ random colorings of $\mathcal{I}$ and for each coloring, we check the existence of a colorful solution. If we find such a solution, we return it and stop. Otherwise, if we failed to find a solution for all colorings, we return the answer NO. 
Recall that if $(\bfA,r,k)$ is a \yesinstance with a solution $\{I_1,\ldots,I_{r'}\}$, then the initial clusters that are included in the composite clusters of the solution are colored  by distinct colors with the probability at least $\frac{(2k)!}{(2k)^{2k}}\geq e^{-2k}$.  Hence, the probability that a \yesinstance has no colorful solution is at most $(1-e^{-2k})$ and, therefore, the probability that 
a \yesinstance has no colorful solution for $N\geq e^{2k}$ random colorings is at most $(1-e^{-2k})^{e^{2k}}\leq e^{-1}$.  We conclude that our randomized algorithm returns a false negative answer with probability at most $e^{-1}<1$. The total running time  of the algorithm is $N\cdot 2^{\Oh(k\log k)}\cdot (nm)^{\Oh(1)}$, that is, $2^{\Oh(k\log k)}\cdot (nm)^{\Oh(1)}$.

By the standard derandomization technique using perfect hash families, see \cite{AlonYZ95,NaorSS95}, our algorithm can be derandomized.  Thus, we conclude that 
\probAtMostClust is solvable in the deterministic time $2^{\Oh(k\log k)}\cdot (nm)^{\Oh(1)}$.
\end{proof}

\section{Kernelization for \probAtMostClust}\label{sec:kernel}
In this section we show that \probAtMostClust admits a polynomial kernel when parameterized by $r$ and $k$. Then we complement this result and Theorem~\ref{thm:fpt-clust-k} by proving that it is unlikely that the problem has a polynomial kernel when parameterized by $k$ only.
Let us start from the definition of a kernel, here we follow \cite{CyganFKLMPPS15}. 

Roughly speaking, kernelization is a {\em{preprocessing algorithm}} that consecutively applies various data reduction rules in order to shrink the instance size as much as possible. Thus, such a preprocessing algorithm takes as input an instance $(I,k)\in\Sigma^{*}\times\mathbb{N}$ of $Q$, works in polynomial time, and returns an equivalent instance $(I',k')$ of $Q$. The quality of kernelization algorithm  $\mathcal{A}$ is measured by the size of the output. More precisely,  the {\em{output size}} of a preprocessing algorithm $\mathcal{A}$ is a function $\textrm{size}_{\mathcal{A}}\colon \mathbb{N}\to \mathbb{N}\cup \{\infty\}$ defined as follows:
$$\textrm{size}_{\mathcal{A}}(k) = \sup \{ |I'|+k'\ \colon\ (I',k')=\mathcal{A}(I,k),\ I\in \Sigma^{*}\}.$$

\begin{definition} 
A {\em{kernelization algorithm}}, or simply a {\em{kernel}}, for a parameterized problem $Q$ is an algorithm $\mathcal{A}$ that, given an instance $(I,k)$ of $Q$, works in polynomial time and returns an equivalent instance $(I',k')$ of $Q$. Moreover,    $\textrm{size}_{\mathcal{A}}(k)\leq g(k)$ for some computable function $g\colon \mathbb{N}\to \mathbb{N}$.
\end{definition}
If the upper bound $g(\cdot)$ is a polynomial   function of the parameter, then we say that $Q$ admits a {\emph{polynomial   kernel}}.

\subsection{Polynomial kernel with parameter $k+r$.}\label{subsec:polykr}

\begin{theorem}\label{thm:kernel}
\probAtMostClust parameterized by $r$ and $k$ has a kernel of size  $\Oh(k^3(k+r)^2)$. Moreover, the kernelization  algorithm in polynomial time either solves the problem or outputs an instance of \probAtMostClust with the matrix that has at most $k+r$ pairwise distinct columns and $\Oh(k^2(k+r))$ pairwise distinct rows.
\end{theorem}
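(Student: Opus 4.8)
The plan is to combine two reduction rules, one that controls the number of pairwise distinct columns and one that controls the number of pairwise distinct rows, and then to cap the multiplicities with which distinct columns and distinct rows occur. Throughout I would use the ``cluster the columns'' reformulation of Observation~\ref{obs:eq} together with the existence of a regular optimal solution (Lemma~\ref{lem:init}), so that every initial cluster (a maximal block of equal columns) lies entirely inside one cluster of the solution.

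\textbf{Bounding the distinct columns.} Let $s$ be the number of initial clusters, i.e.\ the number of pairwise distinct columns. If $s\le r$, then placing each initial cluster into its own cluster is a solution of cost $0$, so the instance is a trivial \yesinstance. If $s>k+r$, I claim it is a \noinstance: in any regular partition into $r'\le r$ clusters, a cluster containing $h$ initial clusters forces cost at least $h-1$ (its mean can coincide with at most one of the distinct columns), so the total cost is at least $\sum_i(h_i-1)=s-r'\ge s-r>k$. Hence we may assume $r<s\le k+r$, which already gives at most $k+r$ distinct columns. Moreover, if a distinct column $\bfd$ occurs with multiplicity larger than $k$, then in any cost-$\le k$ solution the mean of the cluster containing $\bfd$ must equal $\bfd$ (otherwise this block alone costs more than $k$); this lets me shrink every column multiplicity to $k+1$ without changing the answer, bounding the number of columns by $\Oh(k(k+r))$.

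\textbf{Bounding the distinct rows} is the heart of the argument, and the main obstacle. A first, safe cleanup: a row on which all columns agree contributes $0$ to every clustering and to every pairwise Hamming distance, so such rows may be deleted. Next, since the cost of a clustering decomposes as a sum over rows, and rows with the same $0/1$ pattern over the columns contribute identically, I would cap the multiplicity of each distinct row pattern at $k+1$: if a pattern occurs more than $k$ times, then every clustering that fails to keep that row monochromatic on each cluster already has cost exceeding $k$, so shrinking its multiplicity to $k+1$ only touches clusterings that remain infeasible. The real difficulty is that over $\le k+r$ distinct columns there can a priori be up to $2^{k+r}$ distinct row patterns, so deletion of constants and capping of multiplicities do not by themselves give a polynomial bound. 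To get one I would keep only the \emph{essential} rows: a row is essential if it separates some pair of distinct columns $\bfd^a,\bfd^b$ that can be co-clustered at all, i.e.\ with $\min(w_a,w_b)\cdot\hdist(\bfd^a,\bfd^b)\le k$. For each such co-clusterable pair the number of rows on which they differ is at most $k$, and there are at most $\binom{k+r}{2}$ pairs; this yields a bound polynomial in $k$ and $r$ on the number of distinct essential rows, and a sharper accounting that charges rows to the clusters of an optimal solution (within a cluster of cost $c_i$ only $c_i$ rows are non-monochromatic, and $\sum_i c_i\le k$) is what should bring this down to the claimed $\Oh(k^2(k+r))$.

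The delicate point — and where the naive pairwise reasoning must be upgraded — is the correctness of discarding the non-essential rows. Deleting rows can only \emph{decrease} clustering costs, so it could in principle turn a \noinstance into a \yesinstance; a row that separates only pairs forced into distinct clusters of every cost-$\le k$ solution is respected by all of them, and I would argue it can be removed because it only ever raises the cost of clusterings that are already infeasible. Because a cluster's cost is strictly super-additive relative to its pairwise costs (it is a sum of coordinate-wise minority weights, not of pairwise Hamming distances), the pairwise co-clusterability certificate has to be lifted to the level of whole clusters before any row is dropped, and making this lifting rigorous is the technical crux. Once the distinct rows are bounded by $\Oh(k^2(k+r))$ and the distinct columns by $k+r$, with all multiplicities capped at $k+1$, the encoding size is $\Oh(k^3(k+r)^2)$; in the (polynomial-time recognizable) situations where a rule would get stuck I would instead decide the reduced instance directly, which is exactly why the statement is phrased as ``either solves the problem or outputs'' a small instance.
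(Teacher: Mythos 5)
Your treatment of the columns matches the paper exactly: the trivial \yes/\no thresholds at $r$ and $k+r$ distinct columns and the capping of each initial cluster's multiplicity at $k+1$ are the paper's Reduction Rules 1 and 2, with essentially the same soundness arguments. The row reduction, however, contains a genuine gap that you yourself flag as ``the technical crux'' but do not close, and it is precisely the step where the paper does real work. Deleting rows only decreases Hamming distances, so a pair of columns that is forced into distinct clusters in the original instance (because co-clustering it costs more than $k$) may, after your deletion of ``non-essential'' rows, come within distance $\le k$ and be merged in a cheap solution of the reduced instance; that solution need not lift back, so the reduction is not sound in the backward direction. Your pairwise co-clusterability certificate $\min(w_a,w_b)\cdot\hdist(\bfd^a,\bfd^b)\le k$ cannot by itself repair this, because ``non-co-clusterable'' is defined with respect to the distances \emph{before} deletion. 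The paper's resolution has two ingredients you are missing: (i) it first groups the columns into the components of the ``$\hdist\le k$'' closure (a greedy chain construction), proves every solution cluster lies inside one component, and bounds the rows that are nonuniform on a component by $(|S_i|-1)k$ via induction along the chain; and (ii) after discarding the rows that are uniform on every component, it \emph{appends} $2\lceil(k+1)/2\rceil\ge k+1$ artificial rows of disagreement between every two components, which restores the property that cross-component pairs are at distance $>k$ and makes the backward direction go through. Some such re-separation gadget (or an argument that none is needed) is indispensable.

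A secondary issue is the size bound. Charging $k$ separating rows to each of the $\binom{k+r}{2}$ co-clusterable pairs gives only $\Oh(k(k+r)^2)$ distinct rows, which exceeds the claimed $\Oh(k^2(k+r))$ when $r>k$; your sketched ``sharper accounting'' charges rows to the clusters of one optimal solution, but essential rows are defined over all co-clusterable pairs, not only those realized by that solution, so the charge does not obviously cover them. The paper's per-component bound $(|S_i|-1)k$ sums over components to $(n-1)k\le((k+1)(k+r)-1)k=\Oh(k^2(k+r))$ and avoids this mismatch.
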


\begin{proof}
Let $(\bfA,r,k)$ be an instance of \probAtMostClust. Let $\bfa^1,\ldots,\bfa^n$ be the columns of $\bfA=(a_{ij})\in \{0,1\}^{m\times n}$.
We apply  the following sequence  of reduction rules.

\begin{reduction}\label{rule:trivial}
If $\bfA$ has at most $r$ pairwise distinct columns then output a trivial \yesinstance  and stop. If  $\bfA$ has at least $k+r+1$ pairwise distinct columns then output a trivial \noinstance and stop.
\end{reduction}

Let us remind that an initial cluster is an inclusion maximal set of equal columns of the input matrix. 
To show that the rule is sound, observe first that if  $\bfA$ has at most $r$ pairwise distinct columns, then the initial clusters form a solution. Therefore, $(\bfA,r,k)$ is a \yesinstance. 
Suppose that $(\bfA,r,k)$ is a \yesinstance of \probAtMostClust. By Observation~\ref{obs:eq}, there is a  set of means $\{\bfc^1,\ldots,\bfc^{r'}\}$ for some $r'\leq r$ such that $\sum_{i\in 1}^n\min\{\hdist(\bfc^j,\bfa^i)\mid 1\leq j\leq r'\}\leq k$. It immediately implies that $\bfA$ has at most $k$ columns that are distinct from $\bfc^1,\ldots,\bfc^{r'}$. Therefore, $\bfA$ has at most $r+k$ distinct columns. 

Assume that Reduction Rule~\ref{rule:trivial} is not applicable on the input instance.
Then we exhaustively apply the following rule.

\begin{reduction}\label{rule:initial-red}
If $\bfA$ has an initial cluster $I\subseteq\{1,\ldots,n\}$ with $|I|>k+1$, then delete a column $\bfa^i$ for $i\in I$.
\end{reduction}
\begin{claim}\label{claim:ruledelte}Reduction Rule~\ref{rule:initial-red} is sound.
\end{claim}
\begin{proof}[Proof of Claim~\ref{claim:ruledelte}]
To show that the rule is sound, assume that $\bfA'$ is obtained from $\bfA$ by the application of Reduction Rule~\ref{rule:initial-red} to the initial cluster $I$ and let $\bfa^i$ be the deleted column. We use the notation $\bfa^1,\ldots,\bfa^{i-1},\bfa^{i+1},\ldots\bfa^n$ for the columns of $\bfA'$.

Suppose that  $(\bfA,r,k)$ is a \yesinstance. Let $\{I_1,\ldots,I_{r'}\}$ be a solution with the means $\bfc^1,\ldots,\bfc^{r'}$. For $j\in\{1,\ldots,r'\}$, let $J_j=I_j\setminus \{i\}$. 
We have that 
$$k\geq \sum_{j=1}^{r'}\sum_{h\in I_j}\hdist(\bfc^j,\bfa^h)\geq \sum_{j=1}^{r'}\sum_{h\in J_j}\hdist(\bfc^j,\bfa^h)$$
and, therefore, $\{J_1,\ldots,J_{r'}\}$ is a solution for  $(\bfA',r,k)$. Therefore, if  $(\bfA,r,k)$ is a \yesinstance, then 
 $(\bfA',r,k)$ is a \yesinstance. 

Now assume that  $(\bfA',r,k)$ is a \yesinstance. Then by Lemma~\ref{lem:init},  $(\bfA',r,k)$  admits a regular solution  $\{J_1,\ldots,J_{r'}\}$  and we have that $I'\subseteq J_s$ for some $s\in\{1,\ldots,r'\}$. Denote by $\bfc^1,\ldots,\bfc^{r'}$ the means of $J_1,\ldots,J_{r'}$ obtained by the majority rule. We have that 
$$k\geq \sum_{j=1}^{r'}\sum_{h\in J_j}\hdist(\bfc^j,\bfa^h)\geq \sum_{h\in J_s}\hdist(\bfc^s,\bfa^h)\geq \sum_{h\in I'}\hdist(\bfc^s,\bfa^h).$$
Since $|I'|\geq k+1$, we have that $\bfc^s=\bfa^h$ for $h\in I'$. Hence, $\bfc^s=\bfa^i$. Let 
$I_j=J_j$ for $j\in\{1,\ldots,r'\}$, $j\neq s$, and let $I_s=J_s\cup\{i\}$. Because $\bfc^s=\bfa^i$, we obtain that 
$$
 \sum_{j=1}^{r'}\sum_{h\in I_j}\hdist(\bfc^j,\bfa^h)= \sum_{j=1}^{r'}\sum_{h\in J_j}\hdist(\bfc^j,\bfa^h)\leq k
$$
and $\{I_1,\ldots,I_{r'}\}$ is a solution for  $(\bfA,r,k)$. That is,  if $(\bfA',r,k)$ is a \yesinstance, then $(\bfA,r,k)$ is also a \yesinstance. This completes the soundness proof.
\end{proof}

To simplify notations, assume that $\bfA$ with the columns $\bfa^1,\ldots,\bfa^n$ 
is the instance of \probAtMostClust obtained by the exhaustive application of 
Reduction Rule~\ref{rule:initial-red}. Note that by Reduction Rules~\ref{rule:trivial} and \ref{rule:initial-red},  $\bfA$ has at most $k+r$ pairwise distinct columns and $n\leq (k+1)(k+r)$.
This means that we have that the number of columns is bounded by a polynomial of $k$ and $r$. However,  the number of rows still could be large. 
Respectively, our aim now is to construct an equivalent instance with the bounded number of rows.

 We greedily construct the partition $\mathcal{S}=\{S_1,\ldots,S_s\}$ of $\{1,\ldots,n\}$ using the following algorithm. Let $i\geq 1$ be an integer and suppose that 
the sets $S_0,\ldots,S_{i-1}$ are already constructed assuming that $S_0=\emptyset$. Let $I=\{1,\ldots,n\}\setminus\bigcup_{j=0}^{i-1}S_j$. If $I\neq \emptyset$, we construct $S_i$:
\begin{itemize}
\item set $S_i=\{s\}$ for arbitrary $s\in I$ and set $I=I\setminus\{s\}$,
\item while there is $j\in I$ such that $\hdist(\bfa^j,\bfa^h)\leq k$ for some $h\in S_i$, then set $S_i=S_i\cup\{j\}$ and  set $I=I\setminus\{j\}$.
\end{itemize}

The crucial property of the partition $\mathcal{S}$ is  that   every cluster of a solution solution is entirely in some of part of the partition.
This way $\mathcal{S}$   separates the clustering problem into subproblems. More precisely, 

\begin{claim}\label{claim:sep}
Let $\{I_1,\ldots,I_{r'}\}$ be a solution for  $(\bfA,r,k)$. Then for every $i\in\{1,\ldots,r'\}$ there is $j\in\{1,\ldots,s\}$ such that $I_i\subseteq S_j$. 
\end{claim}

\begin{proof}[Proof of Claim~\ref{claim:sep}]
Let $\bfc^1,\ldots,\bfc^{r'}$ be the means of $I_1,\ldots,I_{r'}$ obtained by the majority rule. For the sake of contraction, assume that there is a cluster $I_i$ such that there are $p,q\in I_i$ with $p$ and $q$ in distinct sets of the partition $\{S_1,\ldots,S_s\}$. Then $\hdist(\bfa^p,\bfa^q)>k$. Therefore,
$$ \sum_{j=1}^{r'}\sum_{h\in I_j}\hdist(\bfc^j,\bfa^h)\geq \sum_{h\in I_i}\hdist(\bfc^i,\bfa^h)\geq \hdist(\bfc^i,\bfa^p)+\hdist(\bfc^i,\bfa^q)\geq \hdist(\bfa^p,\bfa^q)>k
$$
contradicting that $\{I_1,\ldots,I_{r'}\}$ is a solution.
\end{proof}

We say that a row of a binary matrix is \emph{uniform} if all its elements are equal. Thus a uniform row consists entirely from $0$s or from $1$s. Otherwise,  a row is \emph{nonuniform}. We show that the submatrices of $\bfA$ composed by the columns with indices from  the same part of partition  of $\mathcal{S}$,  have a bounded number of nonuniform rows.

\begin{claim}\label{claim:uniform}
For every $i\in\{1,\dots,s\}$, the matrix $\bfA[\{1,\ldots,m\},S_i]$ has at most $(|S_i|-1)k$ nonuniform columns. 
\end{claim}

\begin{proof}[Proof of Claim~\ref{claim:uniform}]
Let $i\in\{1,\dots,s\}$ and $\ell=|S_i|$. Recall that $S_i$ is constructed greedily by adding the index of a column that is at distance at most $k$ from some columns whose index is already included in $S_i$. Denote respectively by $I_1,\ldots,I_\ell$ the sets constructed on each iteration. 

For every $j\in\{1,\ldots,\ell\}$, we show inductively that  $\bfA[\{1,\ldots,m\},I_j]$ has at most $(j-1)k$ nonuniform columns. The claim is trivial for $j=1$. Let $I_2=\{p,q\}$ for some distinct $p,q\in S_i$, then because $\hdist(\bfa^p,\bfa^q)\leq k$, we have that $\bfa^p$ and $\bfa^q$ differ in at most $k$ positions and, therefore,  $\bfA[\{1,\ldots,m\},I_2]$ has at most $k$ nonuniform rows. 

Let $j\geq 3$. Assume that $p\in I_j\setminus I_{j-1}$. By the inductive assumption, $\bfA[\{1,\ldots,m\},I_{j-1}]$ has at least $m-(j-2)k$ uniform rows. Denote by $J\subseteq\{1,\ldots,m\}$ the set of indices of uniform rows of $\bfA[\{1,\ldots,m\},I_{j-1}]$. Since $\hdist(\bfa^p,\bfa^q)\leq k$ for some $q\in I_{i-1}$, there are at most $k$ positions where  $\bfa^p$ and $\bfa^q$ are distinct. In particular, there are at most $k$ indices of $J$ for which the corresponding elements of  $\bfa^p$ and $\bfa^q$ are distinct. This immediately implies that $\bfA[\{1,\ldots,m\},I_j]$ has at least $|J|-k\geq m-(j-2)k-k=m-(j-1)k$ uniform rows. Hence, $\bfA[\{1,\ldots,m\},I_j]$ has at most $(j-1)k$ nonuniform rows.
\end{proof}

Now we proceed with the kernelization algorithm. The next rule is used to deal with trivial cases. 

\begin{reduction}\label{rule:bigS}
If $s>r$, then  output a trivial \noinstance  and stop.
\end{reduction}

The soundness of the rule is immediately implied by  Claim~\ref{claim:sep}.
From now we  assume that $s\leq r$. 

Observe that if all the rows of the matrix $\bfA[\{1,\ldots,m\},S_i]$ are uniform, then $S_i$ is an initial cluster. Since  after the application of Reduction Rule~\ref{rule:trivial} the number of initial clusters is at least $r+1$, we have that for some $i\in\{1,\ldots,s\}$, matrix $\bfA[\{1,\ldots,m\},S_i]$ contains nonuniform rows. Let $\ell_i$ be the number of nonuniform rows in $\bfA[\{1,\ldots,m\},S_i]$ for $i\in\{1,\ldots,s\}$, and let $\ell=\max_{1\leq i\leq s}\ell_i$. For each $i\in\{1,\ldots,s\}$, we find a set of indices $R_i\subseteq\{1,\ldots,m\}$ such that $|R_i|=\ell$ and $\bfA[R_i,S_i]$ contains all nonuniform rows of $\bfA[\{1,\ldots,m\},S_i]$. For $i\in \{1,\ldots,s\}$, we define $\bfB_i=\bfA[R_i,S_i]$.  
For $i\in\{1,\ldots,s\}$, we denote by $\mathbb{0}_i$ and $\mathbb{1}_s$ $\lceil (k+1)/2\rceil\times |S_i|$-matrix with all the elements $0$ and $1$ respectively. We use the matrices $B_i$, $\mathbb{0}_i$ and $\mathbb{1}_i$ as blocks to define 
$$
\bfD=
\left (
\begin{array}{c|c|c|c}
\bfB_1 & \bfB_2 & \cdots & \bfB_s\\
\hline
\mathbb{1} _1& \mathbb{0}_2 & \cdots &\mathbb{0}_s\\
\hline
\mathbb{0} _1& \mathbb{1}_2 & \cdots &\mathbb{0}_s\\
\hline
\vdots & \vdots & \ddots & \vdots\\ 
\hline
\mathbb{0} _1& \mathbb{0}_2 & \cdots &\mathbb{1}_s\\
\end{array}
\right )
$$
We denote the columns of $\bfD$ by $\bfd^1,\ldots,\bfd^n$ following the convention that  the columns of 
$
\left(
\begin{array}{c}
\bfB_i\\ \hline
 \vdots \\ 
\end{array}
\right)
$
are indexed by $j\in S_i$ according to the indexing of the corresponding columns of $\bfA$.

Then our kernelization algorithm returns the instance $(\bfD,r,k)$  of \probAtMostClust.

\medskip 

The correctness of the algorithm is based on the following claim.

\begin{claim}\label{claim:final-kernel}
  $(\bfA,r,k)$ is a \yesinstance of \probAtMostClust if and only if $(\bfD,r,k)$ is a \yesinstance.
\end{claim}

\begin{proof}[Proof of Claim~\ref{claim:final-kernel}]
We show that $\{I_1,\ldots,I_{r'}\}$ is a solution for $(\bfA,r,k)$ if and only if $\{I_1,\ldots,I_{r'}\}$ is a solution for $(\bfD,r,k)$.

Suppose that $\{I_1,\ldots,I_{r'}\}$ is a solution for $(\bfA,r,k)$. Denote by $\bfc^1,\ldots,\bfc^{r'}$ the means of the clusters of the solution obtained by the majority rule. Let $\bfs^1,\ldots,\bfs^{r'}$ be the means of the corresponding clusters for $\bfD$ obtained by the majority rule. By Claim~\ref{claim:sep}, for every $i\in\{1,\ldots,r'\}$, there is $j\in\{1,\ldots,s\}$ such that $I_i\subseteq S_j$. Notice that by the construction of $\bfD$, it holds that for every $p\in S_j$, $\hdist(\bfa^p,\bfc^i)=\hdist(\bfd^p,\bfs^i)$, because all the rows of $\bfA[\{1,\ldots,m\}\setminus R_i,S_i]$ and all the rows of $\bfD[\{\ell+1,\ldots,m\},S_i]$ are uniform. Therefore,
$$ \sum_{i=1}^{r'}\sum_{j\in I_i}\hdist(\bfs^i,\bfd^j)=\sum_{i=1}^{r'}\sum_{j\in I_i}\hdist(\bfc^i,\bfa^j)\leq k$$
and $\{I_1,\ldots,I_{r'}\}$ is a solution for $(\bfD,r,k)$.

Assume that $\{I_1,\ldots,I_{r'}\}$ is a solution for $(\bfD,r,k)$ and denote by $\bfs^1,\ldots,\bfs^{r'}$ the means of the clusters for $\bfD$ obtained by the majority rule.

We observe that for every $i\in\{1,\ldots,r'\}$ there is $j\in\{1,\ldots,s\}$ such that $I_i\subseteq S_j$. To see this, assume that this is not the case and there is a cluster $I_i$ such that there are $p,q\in I_i$ with $p$ and $q$ in distinct sets of the partition $\{S_1,\ldots,S_s\}$. 
Then $\hdist(\bfd^p,\bfd^q)>k$ by the construction of $\bfD$ as $\bfd^p$ and  $\bfd^q$ differ in at least $2\lceil (k+1)/2\rceil$ positions. 
Therefore,
$$ \sum_{j=1}^{r'}\sum_{h\in I_j}\hdist(\bfs^j,\bfd^h)\geq \sum_{h\in I_i}\hdist(\bfs^i,\bfd^h)\geq \hdist(\bfs^i,\bfd^p)+\hdist(\bfs^i,\bfd^q)\geq \hdist(\bfd^p,\bfd^q)>k
$$
contradicting that $\{I_1,\ldots,I_{r'}\}$ is a solution. 

Let $\bfc^1,\ldots,\bfc^{r'}$ be the means of the corresponding clusters for $\bfA$ obtained by the majority rule. Since for every $i\in\{1,\ldots,r'\}$, there is $j\in\{1,\ldots,s\}$ such that $I_i\subseteq S_j$, we again obtain that for every $p\in S_j$, $\hdist(\bfa^p,\bfc^i)=\hdist(\bfd^p,\bfs^i)$. Hence,
$$ \sum_{i=1}^{r'}\sum_{j\in I_i}\hdist(\bfc^i,\bfa^j)=\sum_{i=1}^{r'}\sum_{j\in I_i}\hdist(\bfs^i,\bfd^j)\leq k$$
and $\{I_1,\ldots,I_{r'}\}$ is a solution for $(\bfA,r,k)$.
\end{proof}

Finally, to bound the size $\bfD$, 
recall that $\bfD$ has $n\leq (k+1)(k+r)$ columns and at most $k+r$ of them are pairwise distinct. The matrices $\bfB_1,\ldots,\bfB_s$ have $\ell$ rows where $\ell$ is the maximum number of nonuniform rows in $\bfA[\{1,\ldots,m\},S_i]$. By Claim~\ref{claim:uniform}, 
$$\ell=\max_{1\leq i\leq s}(|S_i|-1)k\leq (n-1)k\leq ((k+1)(k+r)-1)k.$$
Because $s\leq r$, we obtain that $\bfD$ has at most $((k+1)(k+r)-1)k+\lceil(k+1)/2\rceil r$ rows. Therefore, $\bfD$ has $\Oh(k^3(k+r)^2)$ elements. Note also that $\bfD$ has at most 
$((k+1)(k+r)-1)k+r=\Oh(k^2(k+r))$ pairwise distinct rows. This completes the correctness proof.

To evaluate the running time, observe that Reduction Rules~\ref{rule:trivial}--\ref{rule:bigS} demand polynomial time. The greedy algorithm that was used to construct the partition $\mathcal{S}=\{S_1,\ldots,S_s\}$ of $\{1,\ldots,n\}$ is trivially polynomial. The construction of $\bfB_1,\ldots,\bfB_s$ is also polynomial and, therefore, $\bfD$ is constructed in polynomial time.
\end{proof}

\subsection{Ruling out polynomial kernel with parameter $k$.}\label{subsec:nopolyk}
Our next aim is to show that \probAtMostClust parameterized by $k$ does not admit a polynomial kernel unless $\classNP\subseteq\classCoNP/{\rm poly}$. We do this in two steps. First, we use the composition technique introduced by Bodlaender et al.~\cite{BodlaenderDFH09} (see also~\cite{CyganFKLMPPS15} for the introduction to this technique) to show that it is unlikely that the \probCons problem introduced by Boucher, Lo and Lokshtanov in~\cite{BoucherLL11} has a polynomial kernel. Then we use this result to prove the claim for   \probAtMostClust.

\defproblema{\probCons}%
{A (multi) set of $p$ strings $S=\{s_1,\ldots,s_n\}$ of the same length $\ell$ over an alphabet $\Sigma$, a positive integer $r$ and nonengative integer $d$.}%
{Decide whether there is a string $s$ of length $\ell$ over $\Sigma$ and $I\subseteq \{1,\ldots,n\}$ with $|I|=r$ such that $\sum_{i\in I}\hdist(s,s_i)\leq d$.
}

Boucher, Lo and Lokshtanov in~\cite{BoucherLL11} investigated parameterized complexity of \probCons and obtained a number of approximation and inapproximability  results. In particular, they proved that the problem is \classFPT when parameterized by $d$. We show that it is unlikely that this problem has a polynomial kernel for this parameterization.

\begin{theorem}\label{thm:consensus}
\probCons has no polynomial kernel when parameterized by $d$ unless $\classNP\subseteq\classCoNP/{\rm poly}$ even for strings over the binary alphabet. Moreover, the result holds for the instances with $r\leq d$.
\end{theorem}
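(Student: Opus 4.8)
The plan is to rule out a polynomial kernel by exhibiting a composition for \probCons parameterized by $d$, following the composition technique of Bodlaender et al.~\cite{BodlaenderDFH09}, and then invoking the standard framework: since \probCons is \classNP-hard over the binary alphabet and already for instances with $r\le d$ (a prerequisite I would establish or cite first), a composition algorithm whose output parameter is bounded by a polynomial of the common input parameter $d$, \emph{independent of the number $t$ of composed instances}, implies that \probCons has no polynomial kernel unless $\classNP\subseteq\classCoNP/{\rm poly}$. First I would reduce to composing $t$ instances that share the same string length $\ell$, the same selection number $r$, and the same budget $d$: grouping inputs by the polynomial-time checkable relation ``$(\ell,r,d)$ coincide and the instance is well-formed'' leaves only polynomially many classes, so it suffices to compose within one class, after padding shorter strings with agreeing coordinates to equalise $\ell$.

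The heart of the construction is an instance-selector gadget built from pairwise-far binary prefixes. To the $j$-th instance I would assign a binary tag $P_j$ of length $L$ so that $\hdist(P_j,P_{j'})\ge d+1$ for all $j\ne j'$; concretely, encode $j$ by its binary representation of length $\lceil\log t\rceil$ and repeat every bit $d+1$ times, giving $L=(d+1)\lceil\log t\rceil$. Each string $s_i$ of instance $j$ is replaced by the concatenation $P_js_i$, all resulting strings are placed into a single multiset $S'$, and I set $r'=r$ and $d'=d$. The prefix block is disjoint from the content block, and the outlier mechanism of \probCons --- selecting only $r$ of the strings --- is exactly what will let a solution discard the strings of all but one original instance at zero cost.

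For the forward direction, if instance $j$ is a \yesinstance witnessed by a consensus string $s^\ast$ and a selection $I$ of $r$ of its strings, then taking $s=P_js^\ast$ with the same $I$ keeps the prefix contribution at $0$ and the content contribution at most $d$, so the merged instance is a \yesinstance. Conversely, suppose the merged instance has a solution $(s,I)$ with $|I|=r$ and total cost at most $d$; I would argue that all strings indexed by $I$ come from a single original instance. Indeed, if $I$ contained a string from $j$ and one from $j'\ne j$, then writing $p$ for the prefix block of $s$ and using the triangle inequality, the prefix contribution would be at least $\hdist(p,P_j)+\hdist(p,P_{j'})\ge\hdist(P_j,P_{j'})\ge d+1$, exceeding the budget. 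Hence $I$ lies inside one instance $j$, and restricting $s$ to the content block yields a consensus string certifying that instance $j$ is a \yesinstance. Since $d'=d$ does not depend on $t$ and $r'=r\le d$ is preserved, this is a valid composition producing instances with $r\le d$, and the theorem follows.

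The step I expect to be the main obstacle is keeping the output parameter equal to $d$ while still forcing the selected strings to originate from one instance: any tagging scheme that charged against the budget would make $d'$ grow with $t$ and destroy the composition. The pairwise-far-prefix idea resolves this precisely because the correct instance's tag can be matched \emph{exactly} by the consensus (cost $0$), so the budget is never spent on tags, whereas any mixing of instances is penalised beyond $d$. The remaining technical care is the uniformisation of $\ell$ and $r$ across composed instances (padding for $\ell$, the $(\ell,r,d)$-equivalence relation for $r$) and verifying that the base \classNP-hardness indeed holds over the binary alphabet for $r\le d$, which is what the ``moreover'' part of the statement rests on.
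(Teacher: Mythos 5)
Your proposal is correct and follows essentially the same composition as the paper's proof: the paper tags the strings of instance $i$ with the suffix $\bar{0}^{i-1}\bar{1}\bar{0}^{t-i}$ (pairwise Hamming distance $2(d+1)$), whereas you use a logarithmic-length prefix with each bit repeated $d+1$ times, but the instance-selector mechanism, the zero-cost matching of the correct tag, and both directions of the equivalence are identical. The one piece you explicitly leave open --- that the base \classNP-hardness over the binary alphabet survives the restriction to $r\leq d$ --- the paper settles by observing that for $r\geq d+1$ a solution of cost at most $d$ must use a consensus string equal to one of the selected input strings, so \probCons is polynomial-time solvable there; hence the hard instances, and the composed instances (which preserve $r$ and $d$), all satisfy $r\leq d$.
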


\begin{proof}
We use the fact that \probCons is \classNP-complete for strings over the binary alphabet $\Sigma=\{0,1\}$~\cite{BoucherLL11} and construct a composition algorithm for the problem parameterized by $d$. 

Let $(S_1,r,d),\ldots,(S_t,r,d)$ be instances of \probCons where $S_1,\ldots,S_t$ are (multi) sets of binary strings of the same length $\ell$. Denote by $\bar{0}$ and $\bar{1}$ the strings of length $d+1$ composed by $0$s and $1$s respectively, that is,
$$\bar{0}=\underbrace{0\ldots0}_{d+1}\text{ and } \bar{1}=\underbrace{1\ldots1}_{d+1}.$$
For $i\in\{1,\ldots,t\}$, we define the set of strings
$$S'_i=\{s\bar{0}^{i-1}\bar{1}\bar{0}^{t-i}\mid s\in S_i\}.$$
Then we put $S^*=\cup_{i=1}^t S_i'$ and consider the instance $(S^*,r,d)$ of \probCons.

We show that $(S^*,r,d)$ is a \yesinstance of \probCons if and only if there is $i\in\{1,\ldots,t\}$ such that $(S_i,r,d)$ is a \yesinstance.

Assume that for $i\in\{1,\ldots,t\}$, the strings of $S_i$ and $S_i'$ are indexed by indices from a set $I_i$, where $I_1,\ldots,I_t$ are disjoint, and denote the strings of $S=\cup_{i=1}^tS_i$ and 
$S^*$ by $s_j$ and $s_j'$ respectively for $j\in\cup_{i=1}^tI_i$.

Suppose that there is $i\in\{1,\ldots,t\}$ such that $(S_i,r,d)$ is a \yesinstance of \probCons. Then there is $I\subseteq I_i$ such that $|I|=r$ and a binary string $s$ of length $\ell$ such that $\sum_{i\in I}\hdist(s,s_i)\leq d$. Let $s'=s\bar{0}^{i-1}\bar{1}\bar{0}^{t-i}$. Since $I\subseteq I_i$, we have that 
\[\sum_{i\in I}\hdist(s',s_i')=\sum_{i\in I}\hdist(s,s_i)\leq d,\] that is, $(S^*,r,d)$ is a \yesinstance.

Assume that $(S^*,r,d)$ is a \yesinstance of \probCons. Then there is  $I\subseteq \cup_{i=}^tI_i$ such that $|I|=r$ and a binary string $s'$ of length $\ell+(d+1)t$ such that $\sum_{i\in I}\hdist(s',s_i')\leq d$. We show that there is $i\in\{1,\ldots,t\}$ such that $I\subseteq I_i$. To obtain a contradiction, assume that there are distinct $i,j\in\{1,\ldots,t\}$ such that $I\cap I_i\neq\emptyset$ and $I\cap I_j\neq\emptyset$. Let $p\in I\cap I_i$ and $q\in I\cap I_j$. We conclude that
\begin{eqnarray*}
\sum_{h\in I}\hdist(s',s_h')&\geq& \hdist(s',s_p')+\hdist(s',s_q')\geq \hdist(s_p',s_q')\\ &\geq& \hdist(\bar{0}^{i-1}\bar{1}\bar{0}^{t-i},\bar{0}^{j-1}\bar{1}\bar{0}^{t-j})\geq 2(d+1)>d,
\end{eqnarray*}
which is a contradiction. Hence, there is $i\in\{1,\ldots,t\}$ such that $I\subseteq I_i$. Let $s$ be a substring of $s'$ containing the first $\ell$ symbols.
We have that 
$$
\sum_{h\in I}\hdist(s,s_h)=\sum_{h\in I}\hdist(s',s_h')\leq d,
$$
that is, $(S_i,r,d)$ is a \yesinstance of \probCons. 

Observe that every string of $S^*$ is  of  length $\ell+(d+1)t$ and that $|S^*|=\sum_{i=1}^t|S_i|$. This means that the size of $(S^*,r,d)$ is polynomial in the sum of the sizes of  $(S_i,r,d)$ and $t$. Note also that the parameter $d$ remains the same. By the results of Bodlaender et al.~\cite{BodlaenderDFH09}, we conclude that \probCons has no polynomial kernel when parameterized by $d$ unless $\classNP\subseteq\classCoNP/{\rm poly}$.

To see that the result holds even if $r\leq d$, we observe that for $r\geq d+1$,  \probCons is  solvable in polynomial time.
Let $(S,r,d)$ be a \yesinstance of \probCons where $S=\{s_1,\ldots,s_n\}$ and $r\geq d+1$. Then there is  a string $s$ and $I\subseteq \{1,\ldots,n\}$ with $|I|=r$ such that $\sum_{i\in I}\hdist(s,s_i)\leq d$. Since $|I|=r\geq d+1$, there is $i\in I$ such that $s=s_i$, that is, the mean string $s$ is one of the input strings. 
This brings us to the following simple algorithm. Let $(S,r,d)$ be an instance of \probCons with  $S=\{s_1,\ldots,s_n\}$ and $r\geq d+1$. For each $i\in \{1,\ldots,n\}$, we check whether the instance has a solution with $s=s_i$. We do it by the greedy selection of $r$ strings closest to $s$ (in  the Hamming distance).  
It is straightforward to see that this is a polynomial time algorithm   solving the problem.
\end{proof}

We use Theorem~\ref{thm:consensus} to obtain our kernelization lower bound for \probAtMostClust.

\begin{theorem}\label{thm:no-kernel}
\probAtMostClust has no polynomial kernel when parameterized by $k$ unless $\classNP\subseteq\classCoNP/{\rm poly}$.
\end{theorem}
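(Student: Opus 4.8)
The plan is to give a polynomial parameter transformation (PPT) from \probCons parameterized by $d$ to \probAtMostClust parameterized by $k$, and then invoke Theorem~\ref{thm:consensus} together with the standard fact that PPTs transfer the non-existence of polynomial kernels (see \cite{BodlaenderDFH09,CyganFKLMPPS15}). By Theorem~\ref{thm:consensus} we may start from a binary instance $(S,r,d)$ of \probCons with $r\le d$ (and we may assume $r\ge 2$, since $r\le 1$ is trivial), so it suffices to build in polynomial time an equivalent instance of \probAtMostClust whose parameter $k$ is bounded by a polynomial in $d$; the construction below will achieve $k=\Oh(d^2)$.

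For the construction, write $S=\{s_1,\dots,s_n\}$ with each $s_i\in\{0,1\}^\ell$, and set the weight $w=d+1$. I would form an $(\ell+nw)\times n$ binary matrix $\bfA$ whose $i$-th column consists of $s_i$ (the \emph{string block}, the first $\ell$ rows) followed by an \emph{identifier block} of length $nw$ that has $w$ consecutive $1$s in its $i$-th section and $0$s elsewhere. Thus any two distinct columns disagree in their identifier blocks in exactly $2w$ positions, and in particular all columns are pairwise distinct. I would output $(\bfA,\,n-r+1,\,k)$ with $k=wr+d=(d+1)r+d=\Oh(d^2)$. The role of the identifier block is the following decomposition: since the cost over disjoint coordinate blocks adds up and, for a fixed partition, the optimal mean minimises each block independently, the identifier block contributes $w$ to the optimal cost for each column lying in a cluster of size at least $2$ (the block-mean is all-$0$ there) and $0$ for each column in a singleton. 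Hence, writing $P$ for the number of columns lying in non-singleton clusters, the identifier cost of any clustering is exactly $wP$, while its string cost is the total Hamming distance to the cluster means restricted to the first $\ell$ rows.

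For the forward direction, suppose $(S,r,d)$ has a solution $I$ with $|I|=r$, a centre $s$, and $\sum_{i\in I}\hdist(s,s_i)\le d$. I would put the columns indexed by $I$ into one cluster and leave every other column as a singleton, using $1+(n-r)=n-r+1$ clusters. Choosing the big cluster's mean to be $s$ on the string block and $0$ on the identifier block gives identifier cost $wr$ and string cost $\sum_{i\in I}\hdist(s,s_i)\le d$, for a total of at most $wr+d=k$.

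The reverse direction is the main obstacle: I must rule out ``spread-out'' solutions that distribute the required merges over several clusters, none of which contains $r$ columns. Suppose $(\bfA,n-r+1,k)$ is a \yesinstance and fix a solution. Using at most $n-r+1$ clusters on $n$ columns forces the excess $\sum_C(|C|-1)\ge r-1$, so there is at least one non-singleton cluster and $P=\sum_C(|C|-1)+\#\{\text{non-singleton }C\}\ge (r-1)+1=r$. On the other hand the total cost is at least the identifier cost $wP$, whence $wP\le k=wr+d$, i.e. $P\le r+\tfrac{d}{w}=r+\tfrac{d}{d+1}<r+1$; as $P$ is an integer, $P=r$. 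The choice $w=d+1>d$ is exactly what collapses these two bounds: creating an additional non-singleton cluster, or enlarging a non-singleton cluster beyond the minimum, raises $P$ by at least $1$ and hence the identifier cost by $w>d$, which breaks the budget. Consequently the solution has a single non-singleton cluster $C$ with $|C|=r$, and its string cost is at most $k-wr=d$. Letting $s$ be the string-block part of the mean of $C$ and $I$ the $r$ indices in $C$ gives $\sum_{i\in I}\hdist(s,s_i)\le d$, a solution of $(S,r,d)$. This equivalence, together with $k=\Oh(d^2)$ and the evident polynomial running time, establishes the PPT and hence the theorem.
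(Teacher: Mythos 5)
Your proposal is correct and follows essentially the same route as the paper: the identical gadget (append to each string an identifier block with a weight-$(d+1)$ private section), the same parameters $k=(d+1)r+d$ and $r'=n-r+1$, and the same appeal to Theorem~\ref{thm:consensus} via a polynomial parameter transformation. The only difference is cosmetic bookkeeping in the reverse direction — you pin down the number $P$ of columns in non-singleton clusters exactly to $r$, whereas the paper first rules out two or more non-singleton clusters and then extracts an $r$-subset of the unique large cluster — but both hinge on the same budget argument.
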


\begin{proof}
We reduce \probCons to \probAtMostClust.

Let $(S,r,d)$ be an instance of \probCons,  where $S=\{s_1,\ldots,s_n\}$ is a (multi) set of binary strings of length $\ell$ and $r\leq d$.
Denote by $\bar{0}$ and $\bar{1}$ the strings of length $d+1$ composed by $0$s and $1$s respectively.
 For $i\in\{1,\ldots,n\}$, we set $s_i'=s_i\bar{0}^{i-1}\bar{1}\bar{0}^{n-i}$. We construct the matrix $\bfA$ considering $s_1',\ldots,s_n'$ to be vectors of $\{0,1\}^{\ell+(d+1)n}$ composing the columns of $\bfA$. Slightly abusing notation, we use $s_1',\ldots,s_n'$ to denote the columns of $\bfA$. We set $k=(d+1)r+d$ and set $r'=n-r+1$.

We claim that $(S,r,d)$ is a \yesinstance of \probCons if and only if $(\bfA,r',k)$ is a \yesinstance of \probAtMostClust.

Suppose that $(S,r,d)$ is a \yesinstance of \probCons. Let $I\subseteq\{1,\ldots,n\}$ and a string $s$ of length $\ell$ be a solution, that is, $|I|=r$ and $\sum_{i\in I}\hdist(s,s_i)\leq d$. Denote $s'=s\bar{0}^n$.
By the definition of $s_1',\ldots,s_n'$, we have that
\[\sum_{i\in I}\hdist(s',s_i')\leq d+(d+1)r.\] 
We construct   clusters $I_1,\ldots,I_{r'}$ for $(\bfA,r',k)$ as follows. We put $I_1=I$ and let $I_2,\ldots,I_{r'}$   be one-element disjoint subsets of $\{1,\ldots,n\}\setminus I$. Then we define the means $\bfc^1,\ldots,\bfc^n$ as follows. We set $\bfc^1=s'$ considering $s'$ to be a binary vector. For every single-element cluster $I_i=\{j\}$ for $i\in\{2,\ldots,r'\}$, we set $\bfc^i=s_j'$. Note that 
$\hdist(\bfc^i,s_j)=0$ for $i\in\{2,\ldots,n\}$ and $j\in I_i$.
Then we have 
$$
\sum_{i=1}^{r'}\sum_{j\in I_i}\hdist(\bfc^i,s_j')=\sum_{j\in I}\hdist(\bfc^1,s_j')=\sum_{j\in I}\hdist(s',s_j')\leq d+(d+1)r\leq k,
$$ 
that is, $\{I_1,\ldots,I_{r'}\}$ is a solution for $(\bfA,r',k)$. This means that $(\bfA,r',k)$ is a \yesinstance of \probAtMostClust.

Assume that $(\bfA,r',k)$ is a \yesinstance of \probAtMostClust. Let $\{I_1,\ldots,I_p\}$, $p\leq r'$,  be a solution. Denote by $\bfc^1,\ldots,\bfc^p$ the corresponding means obtained by the majority rule.
If $r=1$, then $(S,r,d)$ is a trivial \yesinstance of \probCons. Let $r\geq 2$.
Then because $r'=n-r+1$, we have that there are clusters $I_j$ with at least two elements. We assume that $|I_j|\geq 2$ for $j\in\{1,\ldots,q\}$ for some $q\leq p$ and $|I_j|=1$ for $j\in \{q+1,\ldots,p\}$. 

We show that $q=1$. Targeting towards  a contradiction, let us assume that $q\geq2$. We have that $\sum_{i=1}^q|I_i|=n-(p-q)$, that is, $n-(p-q)$ columns of $\bfA$ are in clusters with at least two elements. 
By the construction of $\bfA$, we have that the last $n(d+1)$ elements of each mean $\bfc^i$ are $0$s for $i\in\{1,\ldots,q\}$, because the means were constructed by the majority rule. This implies that $\hdist(\bfc^i,s_j')\geq d+1$ for $j\in I_i$ and $i\in \{1,\ldots,q\}$. Clearly, $\hdist(\bfc^i,s_j')=0$ for $j\in I_i$ and $i\in \{q+1,\ldots,p\}$ as these clusters $I_i$ contain one element each. 
Then 
\begin{align*}
\sum_{i=1}^{p}\sum_{j\in I_i}\hdist(\bfc^i,s_j')=&\sum_{i=1}^{q}\sum_{j\in I_i}\hdist(\bfc^i,s_j')\geq \sum_{i=1}^q|I_i|(d+1)=(n-(p-q))(d+1)\\
\geq&(n-r\rq{}+2)(d+1)=(r+1)(d+1)>k\\
\end{align*}
contradicting that $\{I_1,\ldots,I_p\}$ is a solution. Therefore, $q=1$. 

We have that $|I_1|=n-p+1\geq n-r\rq{}+1=r$. Let $I\subseteq I_1$ with $|I|=r$. Recall that we defined  the string $s_j\rq{}=s_j\bar{0}^{j-1}\bar{1}\bar{0}^{n-j}$ for $j\in\{1,\ldots,n\}$ and we consider these strings as the columns of $\bfA$. In particular, the first $\ell$ elements correspond to $s_j$ and the last $n(d+1)$ elements correspond to the string  $\bar{0}^{j-1}\bar{1}\bar{0}^{n-j}$. As above, we have that the last $n(d+1)$ elements of $\bfc^1$ are $0$s. Denote by $s$ the vector composed by the  first $\ell$ elements of $\bfc^1$. We have that 
$$
\sum_{j\in I}\hdist(\bfc^1,s_j')\leq \sum_{j\in I_1}\hdist(\bfc^1,s_j')\leq\sum_{i=1}^{p}\sum_{j\in I_i}\hdist(\bfc^i,s_j')\leq k=(d+1)r+d.
$$
Since 
$$\sum_{j\in I}\hdist(\bfc^1,s_j')=\sum_{j\in I}(\hdist(s,s_j)+(d+1))=\sum_{j\in I}\hdist(s,s_j)+(d+1)r,$$
we conclude that $\sum_{j\in I}(\hdist(s,s_j)\leq d$. This   implies that  $(S,r,d)$ is a \yesinstance of \probCons.

Summarizing, we have 
  constructed a parameterized reduction of \probCons to \probAtMostClust. Notice that for the new parameter $k$ of \probAtMostClust, we have that $k=(d+1)r+d=\Oh(d)$ since $r\leq d$. This observation together with Theorem~\ref{thm:consensus} implies that \probAtMostClust has no polynomial kernel when parameterized by $k$ unless $\classNP\subseteq\classCoNP/{\rm poly}$.
\end{proof}

\section{Subexponential algorithms for \probAtMostClust and  \probFact}\label{sec:subexp}
We have already seen that  \probAtMostClust is solvable in time $2^{\Oh(k\log k)}\cdot (nm)^{\Oh(1)}$ (Theorem~\ref{thm:fpt-clust-k}) and \probFact is solvable in time $2^{\Oh(k\log r)}\cdot(nm)^{\Oh(1)}$ (Proposition~\ref{prop:fact-rk}).
In this section we show that with respect to the combined parameterization by $k$ and $r$, there are algorithms for \probAtMostClust and  \probFact which runs in time subexponential in $k$. For constant rank $r$, the  running times of these algorithms are in 
 $2^{\Oh(\sqrt{k\log k})}\cdot (nm)^{\Oh(1)}$, which  outperforms the algorithms from  Theorem~\ref{thm:fpt-clust-k} and Proposition~\ref{prop:fact-rk}.
 On the other hand,  in both cases we are paying for the improvements on the dependency on $k$ by making the dependency on $r$ worse.  
In Subsection~\ref{sec:fpt-clust} we construct an algorithm for \probAtMostClust  and  in Subsection~\ref{sec:fpt-fact} for \probFact.

\subsection{Subexponential algorithm for \probAtMostClust}\label{sec:fpt-clust}
In this section we design a subexponential  (in $k$) time algorithm for  \probAtMostClust.  

\begin{theorem}\label{thm:clust-subexp}
 \probAtMostClust is solvable in time 
$2^{\Oh(r\sqrt{k\log (k+r)})}\cdot (nm)^{\Oh(1)}$.
\end{theorem}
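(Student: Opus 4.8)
The plan is to recycle the reduction to \textsc{Consensus Patterns} built for \probClustSelect in Lemma~\ref{lem:clustselect}, but to invoke Marx's algorithm (Lemma~\ref{lem:cons}) in the regime where its parameter $\delta=d/p$ — the Hamming budget divided by the number of input strings — is small, since that is precisely the regime that is subexponential in the budget. First I would preprocess with the kernelization of Theorem~\ref{thm:kernel}, so that $\bfA$ has at most $k+r$ pairwise-distinct columns, keeping track of the (bounded) multiplicities of the initial clusters. By Observation~\ref{obs:eq} and Lemma~\ref{lem:init} it then suffices to search for a regular solution, i.e. a partition of the at most $k+r$ distinct columns into at most $r$ clusters together with their majority means $\bfc^1,\ldots,\bfc^{r'}$; once the partition is fixed, the means and the total cost are computable in polynomial time. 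The reason for kernelizing first is that the relevant alphabet of distinct column types now has size $\Oh(k+r)$, which is the source of the $\log(k+r)$ factor in the target running time.

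Next I would encode the whole $r$-way clustering as a single \textsc{Consensus Patterns} instance, generalizing the gadget of Lemma~\ref{lem:clustselect}. The target center string $s$ would be a concatenation of the $r'$ unknown means $\bfc^1,\ldots,\bfc^{r'}$, separated by long blocks of fresh symbols (the $\overline{a},\overline{b}$ separators), so that choosing a length-$t$ substring of a column-string corresponds to choosing which cluster that column joins; the separators guarantee, exactly as in the proof of Claim~\ref{claim:cstr}, that any substring within the Hamming budget must align to a genuine mean-block rather than straddle a separator. Column multiplicities would be realized by repeating strings, so the number of strings $p$ equals the number of (weighted) columns that are actually moved, while the Hamming budget is $d\le k$.

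The subexponential bound then comes from balancing against $\delta=d/p$. Since $p$ is the number of moved columns and $d\le k$, a cluster that absorbs many columns forces $\delta$ to be small, and Marx's $\delta^{\Oh(\delta)}|\Sigma|^{\delta}L^{\Oh(1)}$ estimate becomes subexponential in $k$. I would fix a threshold $\tau$ of order $\sqrt{k}$ (up to the logarithmic and $r$ factors needed to make the two sides meet): if at least $\tau$ columns are moved, then $\delta=d/p$ is small enough that Lemma~\ref{lem:cons} runs within $2^{\Oh(r\sqrt{k\log(k+r)})}(nm)^{\Oh(1)}$; if fewer than $\tau$ columns are moved, then only $\Oh(\tau)$ of the $k+r$ column types participate non-trivially and I brute-force their cluster assignment in $(k+r)^{\Oh(\tau)}=2^{\Oh(\sqrt{k\log(k+r)})}$ time, solving the remaining types greedily (each equals its own mean). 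The $\log(k+r)$ factor enters through the number $k+r$ of distinct column types supplied by the kernel, and the linear factor $r$ through the $r$-block structure of the center string and the iteration over the guess $r'\le r$.

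The main obstacle is the faithfulness of this single-instance encoding: \textsc{Consensus Patterns} returns a single center, whereas a solution needs up to $r$ means, so the concatenation-with-separators construction must be proved correct in both directions — that every valid clustering yields a consensus string of the right cost, and, conversely, that an optimal consensus string decodes into $r$ means together with a column-to-cluster assignment of the same cost, with no column aligning across a separator. This is the same delicate alignment analysis as in Claim~\ref{claim:cstr}, but now carried out with $r$ blocks and non-unit multiplicities, which is where the real work lies. The second, purely quantitative, difficulty is tuning the threshold $\tau$ and the split of the budget $k$ among the blocks so that the two regimes together never exceed $2^{\Oh(r\sqrt{k\log(k+r)})}$; this is exactly where the improvement over the $2^{\Oh(k\log k)}$ bound of Theorem~\ref{thm:fpt-clust-k} is won, precisely because the $\delta=d/p$ analysis underlying Lemma~\ref{lem:cons} rewards clusters that absorb many columns rather than penalizing them.
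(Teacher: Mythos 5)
Your plan hinges on encoding the entire $r$-way clustering into a \emph{single} \textsc{Consensus Patterns} instance whose center string is the concatenation $x\,\bfc^1\,\overline{0}\,x\,\bfc^2\,\overline{0}\,x\cdots$ of all $r'$ unknown means, with the substring choice of each column-string deciding which mean-block that column aligns to. This is the step that fails, and you cannot repair it by a more careful alignment analysis. In \textsc{Consensus Patterns} the chosen substring $\hat{s}_j$ of every input string is compared position-by-position against the \emph{whole} center $s$, so a column assigned to cluster $i$ pays not only $\hdist(\bfc^i,\bfa^j)$ at its own block but also the Hamming distance between $s$ and $\hat{s}_j$ at the positions of the other $r'-1$ mean-blocks. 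To keep that extra cost at zero, $\hat{s}_j$ would have to equal $\bfc^{i'}$ exactly at the position of every block $i'\neq i$ --- but the means are the unknowns of the problem, so the input strings cannot be constructed to contain them; if instead you put separator symbols there, each column incurs an additional cost of roughly $(r'-1)m$, which destroys the budget. This is precisely why the paper's use of \textsc{Consensus Patterns} (Theorem~\ref{thm:fpt-clust-k}) first pays $2^{\Theta(k\log k)}$ for color coding to split the instance into \emph{single-center} subproblems, one per composite cluster; collapsing everything into one center is not a faithful reduction. (There is also a secondary quantitative issue: even granting the encoding, balancing $\delta^{\Oh(\delta)}$ with $\delta=k/p$ against the brute-force regime yields an exponent of order $\sqrt{k}\log k$ rather than $\sqrt{k\log k}$, so the stated bound would not be met for small $r$.)

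For contrast, the paper's proof does not go through \textsc{Consensus Patterns} at all. After the same kernelization (at most $k+r$ distinct columns and $w=\Oh(k^2(k+r))$ distinct rows), it runs a recursive procedure \textsc{Extend-Means} that adds one mean at a time. The key structural facts are that an optimal mean may be assumed to \emph{agree} with $\bfA$ (Lemma~\ref{lem:agree}), and that the number of agreeing vectors within Hamming distance $h$ of a given column is at most $\sum_{i\le h}\binom{w}{i}$ (Lemma~\ref{lem:agree2}). Guessing the smallest distance $h$ from a new mean to a yet-unserved column, either few columns remain (at most $\sqrt{d\log w}$, handled by brute force over partitions) or $h\le d/|I|\le\sqrt{k/\log w}$, so only $2^{\Oh(\sqrt{k\log w})}$ candidate means need to be branched on; recursion depth $r$ gives the bound. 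If you want to salvage your write-up, this mean-by-mean extension with the ``agrees with $\bfA$'' counting argument is the ingredient you are missing.
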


Towards the proof of Theorem~\ref{thm:clust-subexp}, 
we prove some auxiliary lemmas.

We will be seeking for a special type of solutions. 
\begin{definition}
Let $\bfA$ be an $m\times n$-matrix with rows $\bfa_1,\ldots,\bfa_m$. We say that a vector $\bfc=(c_1,\ldots,c_m)^{\intercal}\in\{0,1\}^m$ \emph{agrees} with $\bfA$ if $c_i=c_j$ whenever $\bfa_i=\bfa_j$ for $i,j\in\{1,\ldots,m\}$. 
\end{definition}
We will be using the following properties of vectors that agree with matrix $\bfA$.

\begin{lemma}\label{lem:agree}
Let $(\bfA,r,k)$ be a \yesinstance of \probAtMostClust. Then $(\bfA,r,k)$ has a solution such that for each cluster of the solution its mean   agrees with $\bfA$.
\end{lemma}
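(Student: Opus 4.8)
The plan is to show that one can always realize a solution with cluster means obtained by the \emph{majority rule}, and that such means automatically agree with $\bfA$. First I would start from an arbitrary solution of the \yesinstance $(\bfA,r,k)$, given as a partition $\{I_1,\ldots,I_{r'}\}$ of the columns together with some means $\bfc^1,\ldots,\bfc^{r'}$, and replace each mean by the vector $\tilde{\bfc}^t$ computed from cluster $I_t$ via the majority rule. Recall that this rule produces, for cluster $I_t$, the vector whose $i$-th coordinate is $0$ if at least half of the multiset $S_i^t=\{a_{ij}\mid j\in I_t\}$ equals $0$, and $1$ otherwise. Since the majority rule minimizes $\sum_{j\in I_t}\hdist(\cdot,\bfa^j)$ over all possible mean vectors, we have $\sum_{j\in I_t}\hdist(\tilde{\bfc}^t,\bfa^j)\leq\sum_{j\in I_t}\hdist(\bfc^t,\bfa^j)$ for each $t$; summing over $t$ shows that the same partition with the new means $\tilde{\bfc}^1,\ldots,\tilde{\bfc}^{r'}$ is again a solution for $(\bfA,r,k)$.

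Next I would argue that each new mean agrees with $\bfA$. Fix a cluster index $t$ and two rows $\bfa_i,\bfa_{i'}$ of $\bfA$ with $\bfa_i=\bfa_{i'}$. Equality of rows means $a_{ij}=a_{i'j}$ for every column index $j$; in particular the multisets $S_i^t$ and $S_{i'}^t$ feeding the majority rule coincide. As the rule is a deterministic function of this multiset alone, it returns the same value in both coordinates, i.e. $\tilde{c}^t_i=\tilde{c}^t_{i'}$. This is exactly the condition that $\tilde{\bfc}^t$ agrees with $\bfA$, and since $t$ was arbitrary, every cluster mean of the new solution agrees with $\bfA$, which proves the lemma.

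The only point requiring care is that the tie-breaking in the majority rule must be fixed once and for all (here the convention assigns $0$ whenever at least half of $S_i^t$ is $0$), so that equal rows are guaranteed to receive identical mean-coordinates; with an adversarial, per-coordinate tie-break the agreement conclusion could fail. Beyond pinning down this convention there is no real obstacle: the statement follows immediately from the optimality of the majority rule together with the fact that its output depends only on the per-row multiset $S_i^t$, which is invariant under replacing a row by an equal one.
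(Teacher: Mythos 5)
Your proof is correct and follows essentially the same route as the paper: recompute each cluster's mean by the majority rule (which can only decrease the cost) and observe that equal rows yield identical multisets, hence identical mean coordinates under the paper's fixed tie-breaking convention. Your explicit remark about fixing the tie-break is a reasonable clarification of a point the paper handles implicitly via its stated convention.
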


\begin{proof}
Let $\{I_1,\ldots,I_{r'}\}$ be a solution to $(\bfA,r,k)$. For $i\in\{1,\ldots,r'\}$, let $\bfc^i=(c_1^i,\ldots,c_m^i)^{\intercal}\in \{0,1\}^m$ be the mean of the cluster $I_i$ computed by the majority rule. Then if $\bfa_j$ and $\bfa_h$ are rows of $\bfA$ and $\bfa_j=\bfa_h$, then $c_j^i=c_h^i$ because the majority rule computes the same value.  Hence, we have that  $\bfc^i$ agrees with $\bfA$ for each $i\in\{1,\ldots,r'\}$.
\end{proof}

\begin{lemma}\label{lem:agree2} Let 
$\bfA$ be a binary ${m\times n}$  matrix with at most $t$ different rows,  $\bfa$ be a column of $\bfA$ and $h$ be a positive integer. Then there are at most $\sum_{i=1}^h\binom{t}{i}$ binary vectors  
$\bfb\in \{0,1\}^{m}$ agreeing with $\bfA$ and which are within the  Hamming distance at most $h$ from $\bfa$. 
\end{lemma}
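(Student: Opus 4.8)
The plan is to translate the counting into a count of subsets of the row-classes of $\bfA$. First I would record what ``agreeing with $\bfA$'' means combinatorially: since $\bfA$ has at most $t$ distinct rows, the index set $\{1,\dots,m\}$ partitions into $s\le t$ maximal blocks $R_1,\dots,R_s$ on each of which all rows of $\bfA$ coincide, and a vector $\bfb\in\{0,1\}^m$ agrees with $\bfA$ precisely when $\bfb$ is constant on every block $R_\ell$. I would also note that the given column $\bfa$ is itself constant on each block: if two rows of $\bfA$ are equal then they agree in every column, in particular in the column $\bfa$, so $\bfa$ agrees with $\bfA$.

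The key step is to parametrize each agreeing vector $\bfb$ by the set $D\subseteq\{1,\dots,s\}$ of blocks on which $\bfb$ differs from $\bfa$. Because both $\bfa$ and $\bfb$ are constant on each block, $\bfb$ is completely determined by $D$, so the map $\bfb\mapsto D$ is injective on agreeing vectors. Moreover the distance decomposes blockwise as $\hdist(\bfa,\bfb)=\sum_{\ell\in D}|R_\ell|$, the total size of the flipped blocks. Using $|R_\ell|\ge 1$ for every block, I would then conclude $|D|\le\sum_{\ell\in D}|R_\ell|=\hdist(\bfa,\bfb)\le h$, so every agreeing $\bfb$ within Hamming distance $h$ of $\bfa$ injects into the subsets of an $s$-element (hence at most $t$-element) set of size at most $h$.

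Counting these subsets by cardinality yields at most $\sum_{i=0}^{h}\binom{s}{i}\le\sum_{i=0}^{h}\binom{t}{i}$ such vectors; the single choice $D=\emptyset$ corresponds to $\bfb=\bfa$ itself, so the number of agreeing vectors differing from $\bfa$ is at most $\sum_{i=1}^{h}\binom{t}{i}$, which is the bound in the statement. I do not expect a real obstacle here: the entire content is the identification ``agrees with $\bfA$'' $\Leftrightarrow$ ``constant on the row-blocks'', together with the elementary observation that each flipped block costs at least one unit of Hamming distance, so that a budget of $h$ caps the number of flipped blocks at $h$. The only point requiring a little care is the off-by-one between counting all agreeing vectors (which includes $\bfa$, giving $\sum_{i=0}^h\binom{t}{i}$) and counting those distinct from $\bfa$; the stated sum from $i=1$ reflects the latter, and in any case the difference is an additive constant that is irrelevant to the later application.
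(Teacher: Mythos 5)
Your proof is correct and follows essentially the same route as the paper: both arguments identify an agreeing vector $\bfb$ with the set of row-blocks on which it is flipped relative to $\bfa$, observe that each flipped block contributes at least one to the Hamming distance so that at most $h$ blocks can be flipped, and count subsets. Your extra remark about the $D=\emptyset$ / $i=0$ term is a fair (and harmless) observation that the paper glosses over.
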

\begin{proof}
Let $I_1, \dots, I_t$ be the partition of rows of $\bfA$ into inclusion-maximal sets of equal rows. 
Vector $\bfa$ agrees with $\bfA$. Also for every vector $\bfb$ that agrees with $\bfA$ there is 
$J\subseteq\{1, \ldots, t\}$, such that $\bfb$ is obtained from $\bfa$ by changing for every $i\in J$ the coordinates corresponding to all rows from $I_i$. But since the distance from $\bfa$ and $\bfb$ is at most $h$, the size of $J$ is at most $h$. Hence, the number of such vectors   is at most   $\sum_{i=1}^h\binom{t}{i}$. 
\end{proof}

Now we are ready to prove Theorem~\ref{thm:clust-subexp}. 

\begin{proof}[Proof of Theorem~\ref{thm:clust-subexp}]
Let $(\bfA,r,k)$ be an instance of \probAtMostClust with $A=(\bfa^1,\ldots,\bfa^n)$. 
First, we preprocess the instance using the kernelization algorithm from Theorem~\ref{thm:kernel}. If the algorithm solves the problem, we return the answer and stop.
Assume that this is not the case. Then the algorithm return an instance of  \probAtMostClust where the matrix has at most $k+r$ pairwise distinct columns and $\Oh(k^2(k+r))$ pairwise distinct row. To simplify notations, we use the same notation $(\bfA,r,k)$ for the obtained instance. Denote by $w$ the number of pairwise distinct rows of $\bfA$.

Informally, our algorithm does the following.
For a given  partial clustering of some columns of $\bfA$, budget $d$ and a new  subset $I$ of columns of $\bfA$ which have to be clustered, it tries to 
extend the partial solution by not exceeding the budget $d$.  Some of the columns from $I$ can go to the existing cluster and some can form new clusters. 
Suppose that we know the minimum distance $h$ from vectors in new cluster to their means. Then all vectors which are within the distance less than $h$ to the already existing  
means, can be assigned to the existing clusters. Then we will be basically left with two options. Either the number of columns  to be  assigned to new clusters does not exceed  $\sqrt{d\log w}$; in this case we brute-force in all possible partitions of $I$. Or we can upper bound $h\leq   \sqrt{k/\log w}$ and invoke recursive arguments based on 
Lemma~\ref{lem:agree2}.

Let us give a formal description of the algorithm. 
Towards that we design a recursive   algorithm \textsc{Extend-Means}. 
The input of \textsc{Extend-Means} is 
a set $I\subseteq\{1,\ldots,n\}$, a set of vectors $S\subseteq\{0,1\}^m$  of size at most $r$ 
that agree with $\bfA$, and 
a nonnegative integer $d$,
The ouput of  \textsc{Extend-Means} is
  a set of vectors $C\subseteq\{0,1\}^m$ of size at most $r$  such that each of the vectors   agrees with $\bfA$, $S\subseteq C$, and $\sum_{i\in I}\min\{\hdist(\bfc,\bfa^i)\mid \bfc\in C\}\leq d$ (if such set exists). 
 We say that such a set $C$ is a \emph{solution}. Thus we are looking for a solution extending 
 the partial solution $S$
for the set of column vectors indexed by $I$.

To solve \probAtMostClust, we call $\textsc{Extend-Means}(I=\{1,\ldots,n\}, S=\emptyset,d=k)$. 
The correctness of this step follows from  
Observation~\ref{obs:eq} and Lemma~\ref{lem:agree}.
Algorithm  \textsc{Extend-Means} performs in 4 steps.

\medskip
\noindent
{\bf Step~1.} If $\sum_{i\in I}\min\{\hdist(\bfs,\bfa^i)\mid \bfs\in S\}\leq d$, then $S$ itself satisfies the conditions of the ouput. In this case we return $C=S$ and stop.

\medskip
\noindent
{\bf Step~2.} If $|S|=r$, then we cannot add  vectors to $S$. Return NO and stop.

\medskip
\noindent
{\bf Step~3.} For every $h=0,\ldots,d$, do the following.
\begin{itemize}
\item[(i)] For every $i\in I$, if $\ell=\min\{\hdist(\bfs,\bfa^i)\mid \bfs\in S\}\leq h-1$, then set $I=I\setminus\{i\}$ and $d=d-\ell$.
\item[(ii)] If $|I|\leq \sqrt{d\log w}$, then for each $p\leq \min\{|I|,r-|S|\}$, consider all possible partitions $\{J_0,\ldots,J_p\}$ of $I$, where $J_0$ could be empty, and do the following:
\begin{itemize}
\item for every $j\in \{1,\ldots,p\}$, find the optimal mean $\bfs^j$ for the cluster $J_j$ using the majority rule;
\item set $S=S\cup \{\bfs^1,\ldots,\bfs^p\}$,
\item if $\sum_{i\in I}\min\{\hdist(\bfs,\bfa^i)\mid \bfs\in S\}\leq d$, then return $C=S$ and stop.
\end{itemize}
If we do not return a solution and do not stop for any value of $p$ and choice of $\{J_0,\ldots,J_p\}$, then return NO and stop.
\item[(iii)] If $h\leq d/|I|$, then  for each vector $\bfs\in \{0,1\}^m$   that   agrees with $\bfA$ and  such that $\hdist(\bfs,\bfa^i)=h$ for some $i\in I$   do the following: call  $\textsc{Extend-Means}\{I,S\cup\{\bfs\},d\}$ and if the algorithm returns a solution $C$, then return it and stop.
\end{itemize}

\medskip
\noindent
{\bf Step~4.} Return NO and stop.

\medskip\noindent\textbf{Correctness.} Now argue for the correctness of the algorithm.  First of all, by its  construction, if the algorithm returns a set of vectors $C$, 
 then each of the vectors from  $C$ agrees with $\bfA$, $|C|\leq r$, $S\subseteq C$, and $\sum_{i\in I}\min\{\hdist(\bfc,\bfa^i)\mid \bfc\in C\}\leq d$.

 Now we show that if there is a  solution to $(I,S,d)$
  then the algorithm returns a solution. 
 The proof is by induction on $r-|S|$. We assume that there is a solution to $(I,S,d)$. 
  The base case is when $|S|=r$. Then $C=S$ is a solution and the algorithm returns $C$ in Step~1.

Now we consider the induction step. That is $\vert S\vert <r$. 
By induction hypothesis we have that for any $S'\supset S$ of size at most $r$, $I'\subseteq \{1,\ldots,n\}$ and 
a nonnegative integer $d'$
 such that each vector from $S'$  agrees with $\bfA$,  
  the algorithm  
returns a solution to the input $(S',I',d')$ if such a solution  exists.

If $S$ is a solution, then the algorithm outputs it in Step~2 and we are done.  
Now we assume that $S$ is not a solution. 
Since $S$ is not a solution and there is a solution to $(I,S,d)$ (by assumption), we have that there is a solution $C\supset S$ such that $\sum_{i\in I}\min\{\hdist(\bfc,\bfa^i)\mid \bfc\in C\}$ is minimum and 
for  every  $\bfc\in C\setminus S$, there is $i\in I$ such that $\hdist(\bfs,\bfa^i)\geq \hdist(\bfc,\bfa^i)=h$ for all $\bfs\in S$. 
We choose such a solution $C=C^*$, $\bfc=\bfc^*\in C^*\setminus S$ and $i=i^*\in I$ in such a way that the value of $h$ is the minimized.  Clearly, $h\leq d$. We claim that the algorithm outputs a solution in Step~3 for this value of $h$ unless it already produced a solution for some lesser value of $h$. In the later case we are done.  So now we have that Step 3 is executed for value $h$. 
Let the set $J\subseteq I$ and an integer $d'$ be constructed as follows: 
\begin{itemize}
\item set $J=I$ and $d'=d$,
\item for every $i\in J$, if $\ell=\min\{\hdist(\bfs,\bfa^i)\mid \bfs\in S\}\leq h-1$, then set $J=J\setminus\{i\}$ and $d'=d'-\ell$.
\end{itemize}
Notice that for every $j\in I\setminus J$, $\min\{\hdist(\bfs,\bfa^j)\mid \bfs\in S\}\leq \min\{\hdist(\bfc,\bfa^j)\mid \bfc\in C^*\setminus S\}$. Therefore, $(J,S,d')$ is an equivalent input, that is, we have a solution for this input if and only if there is a solution for  the original input $(I,S,d)$. Moreover, $C^*$ is a solution for $(J,S,d')$.  Observe also that by the choice of $h$, $i^*\in J$. Then for every $\bfc\in C^*$ and $j\in J$, $\hdist(\bfc,\bfa^j)\geq h$.
Note that for the set $I$ and the integer $d$ constructed in Step~3 (i), we have that $I=J$ and $d=d'$.

Suppose that $|J|\leq \sqrt{d'\log w}$. This case is considered in Step~3 (ii). Let $C^*\setminus S=\{\bfc^1,\ldots,\bfc^p\}$. We construct   partition $\{J_0,\ldots,J_p\}$ of $J$ whose sets could be empty as follows. For each $i\in J$, find $t=\min\{\hdist(\bfc,\bfa^i)\mid \bfc\in C^*\}$. If $t=\hdist(\bfc,\bfa^i)$ for $\bfc\in S$, then include $i\in J_0$. Otherwise, find minimum $i\in\{1,\ldots,p\}$ such that $t=\hdist(\bfc^i,\bfa^i)$ and include $i$ in $J_i$. Assume without loss of generality that $J_1,\ldots,J_p$ are nonempty. (Otherwise, we can just ignore empty sets.) For each $i\in\{1,\ldots,p\}$, let $\bfs^i$ be the optimum mean for the cluster $J_i$ constructed by the majority rule. Clearly, $\sum_{j\in J_i}\hdist(\bfs^i,\bfa^j)\leq \sum_{j\in J_i}\hdist(\bfc^i,\bfa^j)$. Recall also that the majority rule constructs vectors that agree with $\bfA$. This implies that $C'=S\cup\{\bfs^1,\ldots,\bfs^p\}$ is a solution. Since in Step~4 we consider all $p\leq \min\{|I|,r-|S|\}$ and all partitions of $I$ into $p+1$ subsets, the algorithm outputs $C'$.
That is the algorithm outputs a solution.

From now we assume that $|J|\geq\sqrt{d'\log w}$. This case is analyzed in Step~3 (iii). 
Note that for every $\bfc\in C^*$ and $j\in J$, $\hdist(\bfc,\bfa^j)\geq h$. Hence, $h\leq d'/|J|$. In Step~3 (iii), for each   vector $\bfs\in \{0,1\}^m$ which agrees with $\bfA$ and which is a the Hamming distance at most $h$ from some  $\bfa^j$, $j\in J$, we call  $\textsc{Extend-Means}(J,S\cup\{\bfs\},d')$. We have that in some branch of the algorithm, we call $\textsc{Extend-Means}(J,S\cup\{\bfc^*\},d')$, because $\hdist(\bfc^*,\bfa^{i^*})=h$ and $i^*\in J$. By the inductive assumption, the algorithm outputs a solution $C'$ for the input $(J,S\cup\{\bfc^*\},d')$. Then $C'$ is a solution to $(I,S,d)$. 

To complete the correctness proof, note that the depth of the recursion is upper bounded by $r$. It follows that the algorithm perform finite number of steps and returns either a solution or the answer NO.

\medskip\noindent\textbf{Running time.}
To evaluate the running time, note that Steps~1 and 2 can be done in polynomial time. In Step~3, we consider $d+1\leq k+1$ values of $h$ and for each $h$, we  perform Steps~3 (i)--(iii). Step~3 (i) is done in polynomial time. Since $|I|\leq \sqrt{d\log w}$ and $p\leq r$ in Step~3 (ii), we consider  $2^{\Oh(\log r\sqrt{k\log w})}$ partitions of $I$ in this step. 
Because $w=\Oh(k^2(k+r))$, Step~3 (ii) can be done in time $2^{\Oh(\log r\sqrt{k\log (k^2(k+r))})}\cdot (nm)^{\Oh(1)}$.
 In Step~3 (iii), we have that $h\leq d/|I|\leq \sqrt{d/\log w}\leq \sqrt{k/\log w}$. Recall that $\bfA$ has at most $r+k$ pairwise-distinct columns.
Hence, to construct $\bfs$ in Step~3 (iii), we consider at most $r+k$ columns $\bfa^i$ for $i\in I$. Recall also that $\bfA$ has $w$ distinct rows. Since $\bfs$ agrees with $\bfA$, by Lemma~\ref{lem:agree2},  we have that there are at most $2^{\Oh(\log w\sqrt{k/\log w}}$ vectors $\bfs$ at the Hamming distance at most $h$ from $\bfa^i$. It follows that Step~3 (iii) without recursive calls of \textsc{Extend-Means} can be performed in time $2^{\Oh(\log w\sqrt{k/\log w})}\cdot (nm)^{\Oh(1)}$ and we have  $2^{\Oh(\sqrt{k\log w})}$ recursive calls of the algorithm. The depth of the recursion is upper bounded by $r$. Using the property that $w=\Oh(k^2(k+r))$, we have that  the total running time of our algorithm is $2^{\Oh(r\sqrt{k\log k^2(k+r)})}\cdot (nm)^{\Oh(1)}$ or $2^{\Oh(r\sqrt{k\log (k+r)})}\cdot (nm)^{\Oh(1)}$. 
\end{proof}

\subsection{Subexponential algorithm for \probFact}\label{sec:fpt-fact}
In this subsection we prove the following theorem. 

 \begin{theorem}\label{thm:fact-subexp}
 \probFact is solvable in time $2^{\Oh(r^{3/2}\sqrt{k\log k})}\cdot (nm)^{\Oh(1)}$.
\end{theorem}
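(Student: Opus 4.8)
The plan is to mirror the \textsc{Extend-Means} strategy behind Theorem~\ref{thm:clust-subexp}, replacing the collection of $\le r$ independent cluster means by a partial basis of the column space of the approximating matrix. By Observation~\ref{obs:eq-fact} a solution is a set of $r'\le r$ linearly independent vectors $\bfc^1,\dots,\bfc^{r'}$, and every column $\bfa^i$ is charged the Hamming distance to the nearest vector of $V=\spanS\{\bfc^1,\dots,\bfc^{r'}\}$. First I would record two structural facts that make branching possible. Preprocessing: by Observation~\ref{obs:rank} a rank-$\le r$ matrix has at most $2^r$ distinct rows and columns, and since editing $k$ entries touches at most $k$ rows, any yes-instance has at most $2^r+k$ distinct rows; after discarding obvious no-instances we may assume $\bfA$ has $w\le 2^r+k$ distinct rows, so $\log w=\Oh(r+\log k)$. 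Agreement plus a basis of centers: I would prove an analogue of Lemma~\ref{lem:agree}, namely that for an optimal $\bfB$ one may, within each class of equal rows of $\bfA$, replace every row of $\bfB$ in that class by the one incurring the least cost; this does not increase $\hdist(\bfA,\bfB)$ and can only shrink the row space, so there is an optimal $\bfB$ whose every column agrees with $\bfA$, whence $V$ has a basis of vectors agreeing with $\bfA$. Shrinking $V$ to the span of the centers actually used and extracting a basis from this spanning set, I obtain a basis $\bfc^1,\dots,\bfc^{r'}$ in which each $\bfc^j$ is the nearest center of some column $\bfa^{i_j}$, so $\bfc^j$ agrees with $\bfA$ and lies within the per-column budget of $\bfa^{i_j}$.

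These facts let me run a recursive routine \textsc{Extend-Basis}$(I,S,d)$ parallel to \textsc{Extend-Means}: $S$ is a set of $\le r$ independent vectors agreeing with $\bfA$, and I seek an extension $C\supseteq S$, $|C|\le r$, with $\sum_{i\in I}\min_{\bfs\in\spanS(C)}\hdist(\bfs,\bfa^i)\le d$; the call \textsc{Extend-Basis}$(\{1,\dots,n\},\emptyset,k)$ solves the instance. As in the clustering case, I would guess the smallest distance $h$ at which the next basis vector first becomes the nearest center of a still-uncovered column, delete from $I$ every column already within distance $<h$ of $\spanS(S)$ (decreasing $d$ accordingly), and split on the size of the residual set $J$. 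When $|J|\ge\sqrt{d\log w}$, the budget inequality $|J|\cdot h\le d$ forces $h\le\sqrt{d/\log w}$, and I enumerate, using Lemma~\ref{lem:agree2}, the $\le|J|\cdot w^h$ vectors agreeing with $\bfA$ at distance $h$ from a column of $J$, add each as the next basis vector and recurse. When $|J|<\sqrt{d\log w}$ I solve the residual instance directly.

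The small-$|J|$ step is where the linear structure forces a genuinely new argument, and I expect this to be the main obstacle: because the basis vectors are coupled through the span, the majority rule used for \probAtMostClust no longer determines them from a partition. Instead, for each number $p\le r-|S|$ of new basis vectors I would guess, for every column $i\in J$, a label $(\bfu^i,K^i)$ with $\bfu^i\in\spanS(S)$ and $K^i\subseteq\{1,\dots,p\}$, meaning $\bfa^i$ is served by the center $\bfu^i\oplus\bigoplus_{j\in K^i}\bfc^j$; there are at most $2^{r|J|}$ such labelings. For a fixed labeling the unknown agreeing vectors $\bfc^1,\dots,\bfc^p$ decouple over the $w$ row-classes, and on each class I minimize the induced cost by trying all $2^p\le 2^r$ bit-patterns, i.e. solving a tiny nearest-codeword problem. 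Crucially, since $p\le r-|S|$, the set $C=S\cup\{\bfc^1,\dots,\bfc^p\}$ automatically has dimension $\le r$, so no separate rank test is needed; and because any genuine solution induces one of the enumerated labelings, the minimized cost is a valid upper bound, which yields correctness. Re-establishing the budget inequality $|J|\cdot h\le d$ in the span setting (every residual column being at distance $\ge h$ from the whole target span) is the other delicate point, being the linear-algebraic analogue of the corresponding claim in the proof of Theorem~\ref{thm:clust-subexp}.

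For the running time I would balance exactly as in Theorem~\ref{thm:clust-subexp}. The recursion has depth $\le r$; in the small-$h$ branch $h\le\sqrt{k/\log w}$ gives $w^h\le 2^{\Oh(\sqrt{k\log w})}$ candidate vectors per node, so the recursion tree has $2^{\Oh(r\sqrt{k\log w})}$ nodes, while each leaf invokes the small-$|J|$ solver in time $2^{\Oh(r\sqrt{k\log w})}$ (as $|J|<\sqrt{k\log w}$). Substituting $\log w=\Oh(r+\log k)$ and using $\sqrt{r+\log k}\le\sqrt r+\sqrt{\log k}$ turns the exponent $r\sqrt{k\log w}$ into $\Oh(r^{3/2}\sqrt k+r\sqrt{k\log k})=\Oh(r^{3/2}\sqrt{k\log k})$, the claimed bound. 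The polynomial-in-instance and column-counting factors (at most $2^r+k$ distinct columns to branch on) are absorbed into this estimate for the relevant range of $r$, and for pathologically large $r$ one falls back on the $2^{\Oh(k\log r)}(nm)^{\Oh(1)}$ algorithm of Proposition~\ref{prop:fact-rk}.
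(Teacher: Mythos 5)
Your proposal is correct and follows essentially the same route as the paper's proof: the same preprocessing via Lemma~\ref{lem:diff-fact}, the same agreement lemma (Lemma~\ref{lem:agree-fact}), and the same recursive basis-extension scheme that branches on the least distance $h$, prunes columns already served by $\spanS(S)$, enumerates agreeing vectors at distance $h$ via Lemma~\ref{lem:agree2} when the residual set is large, and brute-forces when it is small. The one genuine divergence is the small-residual solver. The paper (Claim~\ref{claim:thm3}) transposes: a valid extension exists iff the augmented matrix $(\bfS|\hat{\bfA}^I)$ has rank at most $r$, so it enumerates the $2^{\Oh(r(r+|I|))}$ candidate row bases $(\bfx^j|\bfy^j)$ and charges each row of $\bfA$ to its forced combination. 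You instead enumerate, per remaining column, a label consisting of a subset of $S$ and a subset of the $p$ new basis vectors, and then optimize the unknown vectors independently over each of the $w$ row classes by trying all $2^p$ bit patterns; the rank bound comes for free from $|S|+p\le r$. Both are sound, both give $2^{\Oh(r\sqrt{k\log(2^r+k)})}$ for this step, and the two delicate points you flag (extracting a basis of used centers within the per-column budgets, and the inequality $|J|\cdot h\le d$) are exactly the points the paper closes by its choice of the minimizing solution $C^*$, the column $i^*$, and the swap of $\bfc^*$ into the basis, so your plan resolves them the same way.
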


The general idea of parameterized subexponential  time algorithm for  \probFact\ (Theorem~\ref{thm:fact-subexp})is  similar to the parameterized subexponential  time algorithm for \probAtMostClust. However, 
because we cannot use the majority rule like the case of \probAtMostClust, there are some complications.
We start with some auxiliary lemmas and then prove Theorem~\ref{thm:fact-subexp}. 

\begin{lemma}\label{lem:diff-fact}
Let $(\bfA,r,k)$ be a \yesinstance of \probFact. Then $\bfA$ has at most $2^r+k$ pairwise-distinct columns and at most $2^r+k$ pairwise-distinct rows.
\end{lemma}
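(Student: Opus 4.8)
The plan is to exhibit a low-rank matrix witnessing the yes-instance, bound its number of distinct columns and rows via Observation~\ref{obs:rank}, and then argue that editing at most $k$ entries can create only a bounded number of additional distinct columns (and rows).

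First I would unfold the definition of \probFact: since $(\bfA,r,k)$ is a \yesinstance, there is a binary $m\times n$ matrix $\bfB$ with $\GFrank(\bfB)\le r$ and $\|\bfA-\bfB\|_F^2\le k$. By~\eqref{eqn:FrHam} the latter condition is exactly $d_H(\bfA,\bfB)\le k$, so $\bfA$ and $\bfB$ differ in at most $k$ entries. Applying Observation~\ref{obs:rank} to $\bfB$ gives that $\bfB$ has at most $2^r$ pairwise-distinct columns and at most $2^r$ pairwise-distinct rows.

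Next I would count the distinct columns of $\bfA$. Let $D\subseteq\{1,\ldots,n\}$ be the set of indices $j$ for which the $j$-th column of $\bfA$ differs from the $j$-th column of $\bfB$; since every such difference accounts for at least one of the at most $k$ differing entries, we have $|D|\le k$. For every $j\notin D$ the $j$-th column of $\bfA$ coincides with the $j$-th column of $\bfB$, so these columns assume at most $2^r$ distinct values; the columns indexed by $D$ contribute at most $|D|\le k$ further distinct values. Hence $\bfA$ has at most $2^r+k$ pairwise-distinct columns, and the bound for rows follows by the identical argument applied row-wise.

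There is no real obstacle here: the statement is essentially a robustness version of Observation~\ref{obs:rank}, and the only point requiring a little care is to charge each ``new'' distinct column (or row) of $\bfA$ to a column (row) in which an edit took place, which is immediate since an unedited column of $\bfA$ is literally a column of $\bfB$.
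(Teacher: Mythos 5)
Your proof is correct and follows essentially the same route as the paper: take the witness matrix $\bfB$, bound its distinct columns and rows by $2^r$ via Observation~\ref{obs:rank}, and observe that at most $k$ columns (rows) of $\bfA$ can differ from the corresponding ones of $\bfB$. The paper's version is just a terser statement of the same charging argument you spell out.
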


\begin{proof}
Since $(\bfA,r,k)$ is a \yesinstance of \probFact, there is an  $m\times n$-matrix $\bfB$ over \GF{} with $\rank(\bfB)\leq r$ such that $\|\bfA-\bfB\|_F^2\leq k$. By Observation~\ref{obs:rank}, $\bfB$ has at most $2^r$ pairwise-distinct columns and at most $2^r$ pairwise-distinct rows. Since $\|\bfA-\bfB\|_F^2\leq k$, we have that  matrices $\bfA$ and $\bfB$ differ in at most $k$ columns and in at most $k$ rows. This immediately implies the claim.
\end{proof}

In this section it is more convenient   to use the alternative formulation  of \probFact from Observation~\ref{obs:eq-fact}: given 
an $m\times n$-matrix $\bfA$ with columns $\bfa^1, \ldots, \bfa^n$ over \GF, a positive integer $r$ and a nonnegative integer $k$,
decide whether there is a positive integer $r'\leq r$  and linearly independent vectors $\bfc^1,\ldots,\bfc^{r'}\in\{0,1\}^m$ over \GF{} such that 
$\sum_{i\in 1}^n\min\{\hdist(\bfs,\bfa^i)\mid \bfs=\bigoplus_{j\in I}\bfc^j,~I\subseteq \{1,\ldots,r'\}\}\leq k$.
Respectively, throughout this section we say that a set of vectors  $C=\{\bfc^1,\ldots,\bfc^{r'}\}$ satisfying the above condition is a \emph{solution} for $(\bfA,r,k)$. 
Recall that a vector $\bfc=(c_1,\ldots,c_m)^{\intercal}\in\{0,1\}^m$ 
agrees with an $m\times n$-matrix $\bfA$ 
 with rows $\bfa_1,\ldots,\bfa_m$ if  $c_i=c_j$ whenever $\bfa_i=\bfa_j$ for $i,j\in\{1,\ldots,m\}$.

\begin{lemma}\label{lem:agree-fact}
Let $(\bfA,r,k)$ be a \yesinstance of \probFact. Then $(\bfA,r,k)$ has a solution $C=\{\bfc^1,\ldots,\bfc^{r'}\}$ such that $\bfc^i$ agrees with $\bfA$ for all $i\in\{1,\ldots,r'\}$.
\end{lemma}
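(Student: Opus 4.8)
The plan is to follow the strategy of Lemma~\ref{lem:agree}, but since the rank constraint forbids the coordinate-wise majority rule (as the text notes), I would argue directly on the \emph{rows} of an approximating matrix. First I would use the original formulation of \probFact: as $(\bfA,r,k)$ is a \yesinstance, there is a binary $m\times n$-matrix $\bfB$ with $\rank(\bfB)\leq r$ and $\hdist(\bfA,\bfB)\leq k$. Write $\bfa_1,\ldots,\bfa_m$ and $\bfb_1,\ldots,\bfb_m$ for the rows of $\bfA$ and $\bfB$, and let $\{R_1,\ldots,R_t\}$ be the partition of $\{1,\ldots,m\}$ into classes of equal rows of $\bfA$. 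Note that, counting differing entries row by row, $\hdist(\bfA,\bfB)=\sum_{j=1}^m\hdist(\bfa_j,\bfb_j)$.

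The main step is a row-replacement argument. For each class $R_s$ pick an index $j_s^\ast\in R_s$ minimizing $\hdist(\bfa_j,\bfb_j)$ over $j\in R_s$, and let $\bfB'$ be obtained from $\bfB$ by replacing, for every $s$ and every $j\in R_s$, the $j$-th row by $\bfb_{j_s^\ast}$. Two facts make this replacement harmless. First, every row of $\bfB'$ is a row of $\bfB$, so the row space of $\bfB'$ is contained in that of $\bfB$ and $\rank(\bfB')\leq\rank(\bfB)\leq r$. Second, since $\bfa_j=\bfa_{j_s^\ast}$ for every $j\in R_s$ and $j_s^\ast$ is chosen closest,
\[
\hdist(\bfA,\bfB')=\sum_{s=1}^t|R_s|\cdot\hdist(\bfa_{j_s^\ast},\bfb_{j_s^\ast})\leq\sum_{s=1}^t\sum_{j\in R_s}\hdist(\bfa_j,\bfb_j)=\hdist(\bfA,\bfB)\leq k.
\]
Hence $\bfB'$ is again an approximating matrix of \GFrank at most $r$ within Hamming distance $k$ from $\bfA$, and by construction the rows of $\bfB'$ indexed by a common class $R_s$ are equal.

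It then remains to extract a solution that agrees with $\bfA$. Let $\bfd^1,\ldots,\bfd^n$ be the columns of $\bfB'$. Because the rows of $\bfB'$ inside each $R_s$ coincide, each column $\bfd^i$ satisfies $(\bfd^i)_j=(\bfd^i)_h$ whenever $\bfa_j=\bfa_h$; that is, every column of $\bfB'$ agrees with $\bfA$. The set of vectors agreeing with $\bfA$ contains $\mathbf{0}$ and is closed under $\oplus$, hence is a linear subspace of $\{0,1\}^m$ over \GF{} that contains the entire column space of $\bfB'$. Consequently any basis $C=\{\bfc^1,\ldots,\bfc^{r'}\}$ of the column space of $\bfB'$ consists of vectors agreeing with $\bfA$, with $r'=\rank(\bfB')\leq r$ (in the degenerate case $\bfB'=\mathbf{0}$ the claim is immediate, taking $C$ to be any single constant vector, which agrees with $\bfA$). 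Finally, every column $\bfd^i$ lies in $\spanS(C)$, so $\min\{\hdist(\bfs,\bfa^i)\mid\bfs\in\spanS(C)\}\leq\hdist(\bfd^i,\bfa^i)$ for each $i$; summing over $i$ yields $\sum_{i=1}^n\min\{\hdist(\bfs,\bfa^i)\mid\bfs\in\spanS(C)\}\leq\hdist(\bfA,\bfB')\leq k$, so $C$ is a solution in the sense of Observation~\ref{obs:eq-fact}. This produces the desired solution agreeing with $\bfA$.

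I expect the one delicate point to be the rank bound for $\bfB'$: it is exactly here that replacing rows by rows already present in $\bfB$ (rather than averaging coordinate-wise, which could raise the rank) keeps the \GFrank under control, serving as the substitute for the majority rule of Lemma~\ref{lem:agree}. Everything else — the distance estimate and the passage from ``every column agrees with $\bfA$'' to ``some basis agrees with $\bfA$'' via the linear-subspace observation — is routine.
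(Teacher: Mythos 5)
Your proof is correct, and it takes a genuinely different route from the paper's. The paper argues by an extremal exchange directly on a solution set $C=\{\bfc^1,\ldots,\bfc^{r'}\}$: it picks a solution minimizing the number of pairs $p,q$ with $\bfa_p=\bfa_q$ but $c^i_p\neq c^i_q$ for some $i$, and, given such a pair, overwrites coordinate $q$ of every $\bfc^i$ with coordinate $p$ (choosing the direction by comparing the row-wise errors $\hdist(\bfs^p,\bfa_p)$ and $\hdist(\bfs^q,\bfa_q)$), then passes to a maximal linearly independent subset to restore independence --- contradicting minimality. You instead work with the approximating matrix $\bfB$ itself: you replace all rows of $\bfB$ within one equality class of rows of $\bfA$ by the single best row of $\bfB$ from that class, observe that this keeps the row space (hence the \GFrank) and does not increase the Hamming distance, and then extract a basis of the column space, using that the vectors agreeing with $\bfA$ form a \GF-subspace. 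Your version does in one global step what the paper does pair by pair, and it sidesteps the slightly delicate bookkeeping in the paper around the sets $J(i)$ and the re-extraction of a linearly independent subset after each swap; the subspace observation cleanly converts ``all columns agree'' into ``some basis agrees.'' The one point you flag --- that replacing rows by existing rows of $\bfB$ is the rank-safe substitute for the majority rule --- is exactly the right crux, and your handling of the degenerate case $\bfB'=\mathbf{0}$ (a nonzero constant vector is needed, since a set containing $\mathbf{0}$ is not linearly independent) is fine as stated.
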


\begin{proof}
Denote by $\bfa^1,\ldots,\bfa^n$ and $\bfa_1,\ldots,\bfa_m$ the columns and rows of $\bfA$, respectively. Let $C=\{\bfc^1,\ldots,\bfc^{r'}\}$, where $\bfc^i=(c_1^i,\ldots,c_m^i)^\intercal$ for $i\in\{1,\ldots,r'\}$, be a solution such that the total number of pairs of integers $p,q\in \{1,\ldots,m\}$ such that $\bfa_p=\bfa_q$ and there is $i\in\{1,\ldots,n\}$ such that $c_p^i\neq c_q^i$ is minimum.
We claim that each $\bfc^i$ agrees with $\bfA$. To obtain a contradiction, assume that there are $p,q\in \{1,\ldots,m\}$ such that $\bfa_p=\bfa_q$ and there is $i\in\{1,\ldots,n\}$ such that $c_p^i\neq c_q^i$. 
For $i\in\{1,\ldots,n\}$, denote by $J(i)\subseteq \{1,\ldots,n\}$ the subset of indices such that 
$$\hdist(\bigoplus_{j\in J(i)}\bfc^j,\bfa^i)=\min\{\hdist(\bfs,\bfa^i)\mid \bfs=\bigoplus_{j\in I}\bfc^j,~I\subseteq \{1,\ldots,r'\}\}.$$
For each $j\in \{1,\ldots,m\}$, let $\bfs^j=(s_1^j,\ldots,s_n^j)$ where $s_i^j=\bigoplus_{h\in J(i)}c_j^h$ for $i\in\{1,\ldots,n\}$. Assume without loss of generality that $\hdist(\bfs^p,\bfa_p)\leq \hdist(\bfs^q,\bfa_q)$. For $i\in \{1,\ldots,r'\}$, denote by $\hat{\bfc}^i$ the vector obtained from $\bfc^i$ by  replacing  $c_q^i$ with $c_p^i$. We have that
$$\sum_{i=1}^n\hdist(\bigoplus_{h\in I(i)}\bfc^h,\bfa^i)\leq \sum_{i=1}^n\hdist(\bigoplus_{h\in I(i)}\hat{\bfc}^h,\bfa^i).$$
This implies that the set of vectors $\hat{C}$ obtained from $\{\hat{\bfc}^1,\ldots,\hat{\bfc}^{r'}\}$ by taking a maximum set of linearly independent vectors,  is a solution. But this contradicts the choice of $C$, because $\hat{c}_p^i=\hat{c}_q^i=c_p^i$ for $i\in\{1,\ldots,r'\}$.
We conclude that the vectors of $C$ agree with $\bfA$.
\end{proof}

Now we are ready to prove Theorem~\ref{thm:fact-subexp}. 

\begin{proof}[Proof of Theorem~\ref{thm:fact-subexp}]
Let $(\bfA,r,k)$ be an instance of \probFact with $A=(\bfa^1,\ldots,\bfa^n)$. 
First, we preprocess the instance using Lemma~\ref{lem:diff-fact}. That is,  if $\bfA$ has at least $2^r+k+1$ pairwise-distinct columns or at least $2^r+k+1$ distinct rows, we return the answer NO and stop.  Now on we assume that the number of pairwise-distinct columns as well as rows is at most $2^r+k$.

As in the proof of Theorem~\ref{thm:clust-subexp}, we construct an algorithm extending a partial solution. 
Towards that we design a recursive  algorithm \textsc{Extend-Solution}. 
 Input of \textsc{Extend-Solution} is 
 a set $I\subseteq\{1,\ldots,n\}$,  a $p$-sized set of linearly independent vectors 
$S=\{\bfs^1,\ldots,\bfs^p\}\subseteq\{0,1\}^m$ over \GF{} 
that agree with $\bfA$,  and  a nonnegative integer $d$. 
 The output of \textsc{Extend-Solution} is 
 a set of linearly independent vectors $C=\{\bfc^1,\ldots,\bfc^{r'}\}\subseteq\{0,1\}^m$ over \GF, such that each of them   agrees with $\bfA$,   $|C|\leq r$, $S\subseteq C$ and $\sum_{i\in I}\min\{\hdist(\bfs,\bfa^i)\mid \bfs=\bigoplus_{j\in J}\bfc^j,~J\subseteq\{1,\ldots,r'\}\}\leq d$ if it exists or 
 concludes that no such set exists. 
 We say that such a set $C$ is a \emph{solution} and call $(I,S,d)$ an \emph{instance} of \textsc{Extend-Solution}. Then to solve  \probFact, we call $\textsc{Extend-Solution}(\{1,\ldots,n\},\emptyset,k)$. For the  simplicity of explanation, we solve the decision version of the problem. Our algorithm could be easily modified to produce a solution if it exists.

We denote by $\bfS=(\bfs^1,\ldots,\bfs^p)$ the $m\times p$-matrix whose columns are the vectors of $S$ and denote by $\bfs_1,\ldots,\bfs_m$ its rows. For $I\subseteq \{1,\ldots,n\}$, we define
$\bfA^I=\bfA[\{1,\ldots,m\},I]$. We denote by $\bfa_1^I,\ldots,\bfa_m^{I}$ the rows of $\bfA^I$. Our algorithm uses the following properties of $\bfS$ and $\bfA^I$.

\begin{claim}
\label{claim:thm3}
A solution $C$ for an instance $(I,S,d)$ exists if and only if there is  $r'\leq r$ and two $r'$-tuples of vectors $(\bfx^1,\ldots,\bfx^{r'})$ and $(\bfy^1,\ldots,\bfy^{r'})$ with $\bfx^i\in\{0,1\}^p$ and $\bfy^i\in\{0,1\}^{|I|}$ for $i\in\{1,\ldots,r'\}$ such that 
$$\sum_{i=1}^m\min\{\hdist(\bfy,(\bfa^I_i)^\intercal)\mid \bf\bfy=\bigoplus_{j\in J}\bfy^j,~J\subseteq\{1,\ldots,r'\}\text{ and }\bfs_i^\intercal=\bigoplus_{j\in J}\bfx^j\}\leq d.$$
\end{claim}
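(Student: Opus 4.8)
The plan is to read Claim~\ref{claim:thm3} as the row/column duality of a rank-$\le r'$ factorization. A solution $C=\{\bfc^1,\ldots,\bfc^{r'}\}$ for $(I,S,d)$, together with an optimal choice of index set $J(i)$ for each column $\bfa^i$ with $i\in I$, assembles into the $m\times r'$ matrix $\bfC$ with columns $\bfc^1,\ldots,\bfc^{r'}$ and the $r'\times|I|$ coefficient matrix $\bfY$ whose $i$-th column is the indicator vector $\chi_{J(i)}\in\{0,1\}^{r'}$. Writing $\bfy^1,\ldots,\bfy^{r'}$ for the rows of $\bfY$, the cost $\sum_{i\in I}\min\{\hdist(\bfs,\bfa^i)\mid \bfs=\bigoplus_{j\in J}\bfc^j\}$ equals $\hdist(\bfC\bfY,\bfA^I)$, which, counted row by row, is $\sum_{i=1}^m\hdist(\text{row }i\text{ of }\bfC\bfY,\bfa_i^I)$, and row $i$ of $\bfC\bfY$ is exactly $\bigoplus_{j\in J}\bfy^j$ for $J$ the support of row $i$ of $\bfC$. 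The one remaining ingredient is to encode the requirement $S\subseteq C$ as the linear identity $\bfS=\bfC\bfX$ for a suitable $r'\times p$ matrix $\bfX$ with rows $\bfx^1,\ldots,\bfx^{r'}$; the constraint $\bfs_i^\intercal=\bigoplus_{j\in J}\bfx^j$ in the claim is precisely row $i$ of this identity. Throughout I would use the basic fact that over \GF\ the subset sums of any set of vectors coincide with their linear span, so that only $V:=\spanS(\bfc^1,\ldots,\bfc^{r'})$ matters for the cost.

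For the forward direction I would start from a solution $C$, form $\bfY$ as above, and set $\bfx^j=\bfe_j\in\{0,1\}^p$ for $1\le j\le p$ and $\bfx^j=\mathbf{0}$ for $p<j\le r'$, using that the first $p$ columns of $\bfC$ are the vectors of $S$. Then for every row $i$ the subset $J$ equal to the support of row $i$ of $\bfC$ satisfies $\bfs_i^\intercal=\bigoplus_{j\in J}\bfx^j$ and attains $\hdist(\text{row }i\text{ of }\bfC\bfY,\bfa_i^I)$, so the minimum in the claim is at most this value; summing over $i$ gives the required bound $d$.

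The reverse direction is where the real work lies. Given tuples $(\bfx^j)$ and $(\bfy^j)$ of value at most $d$, finiteness of each row's minimum forces a feasible $J_i$ with $\bfs_i^\intercal=\bigoplus_{j\in J_i}\bfx^j$; here I would make the same choice of $J_i$ for any two rows $i,i'$ with $\bfa_i=\bfa_{i'}$, which is legitimate because $S$ agrees with $\bfA$ (so $\bfs_i=\bfs_{i'}$ and the constraint coincides) and $\bfa_i^I=\bfa_{i'}^I$ (so the objective coincides). Letting $\bfC$ be the $m\times r'$ matrix whose $i$-th row is $\chi_{J_i}$, with columns $\bfc^1,\ldots,\bfc^{r'}$, this consistent choice makes each $\bfc^j$ agree with $\bfA$, hence every vector of $V=\spanS(\bfc^1,\ldots,\bfc^{r'})$ agrees with $\bfA$; the constraints give $\bfS=\bfC\bfX$ and therefore $S\subseteq V$; and the value bound reads $\hdist(\bfC\bfY,\bfA^I)\le d$ with all columns of $\bfC\bfY$ lying in $V$.

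To turn this into a genuine solution I would extend the linearly independent set $S$ (of size $p$) to a basis $\tilde C$ of $V$; then $\tilde C\supseteq S$ is linearly independent, $|\tilde C|=\dim V\le r'\le r$, and every vector of $\tilde C$ agrees with $\bfA$. Because the subset sums of the basis $\tilde C$ exhaust $V$ over \GF, the approximations achievable by $\tilde C$ are exactly the vectors of $V$, so $\sum_{i\in I}\min\{\hdist(\bfs,\bfa^i)\mid \bfs\in V\}\le\hdist(\bfC\bfY,\bfA^I)\le d$, and $\tilde C$ is a solution. I expect the main obstacle to be exactly this last passage: the reconstructed $\bfc^j$ only certify that $S$ lies in the span of $C$ and need not be independent or literally contain $S$, so the argument must invoke the \GF-specific identity between subset sums and span, together with the inheritance of the agreement property by the whole subspace, to recover an honest basis $\tilde C\supseteq S$ of the right dimension.
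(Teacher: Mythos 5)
Your proof is correct and follows essentially the same route as the paper's, which simply observes that a solution exists iff there is an $m\times|I|$ matrix $\hat{\bfA}^I$ with $\rank(\bfS\mid\hat{\bfA}^I)\le r$ and $\|\bfA^I-\hat{\bfA}^I\|_F^2\le d$, and declares that this "immediately implies the claim"; your factorization $\bfC\bfY$, $\bfS=\bfC\bfX$ is just the row-space reading of that same reformulation. The one substantive addition is that your backward direction explicitly recovers a bona fide solution — choosing the sets $J_i$ consistently on equal rows of $\bfA$ so the reconstructed vectors agree with $\bfA$, and extending $S$ to a basis of the span — details the paper's one-line proof leaves entirely implicit.
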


\begin{proof}[Proof of Claim~\ref{claim:thm3}]
For an $m\times p$-matrix $\bfX$ and an $m\times q$-matrix $\bfY$, we denote by $(\bfX|\bfY)$ the \emph{augmentation} of $\bfX$ by $\bfY$, that is, the $m\times (p+q)$-matrix whose first $p$ columns are the columns of $\bfX$ and the last $q$ columns are the columns of $\bfY$.

Note that we have a solution $C$ for an instance $(I,S,d)$ if and only if there is an $m\times |I|$-matrix $\hat{\bfA}^I$ such that $\rank(\bfS|\hat{\bfA}^I)\leq r$ and 
$\|\bfA^I-\hat{\bfA}^I\|^2_F \leq d$. This observation immediately implies the claim.
\end{proof}

Now we are ready to describe algorithm \textsc{Extend-Solution}. 
Let $(I,S,d)$ be the input of \textsc{Extend-Solution}.
It performs in 4 steps.

\medskip
\noindent
{\bf Step~1.} If $\sum_{i\in I}\min\{\hdist(\bfs,\bfa^i)\mid \bfs=\bigoplus_{j\in J}\bfs^j,~J\subseteq\{1,\ldots,p\}\}\leq d$, then return YES  and stop.

\medskip
\noindent
{\bf Step~2.} If $p=r$, then return NO and stop.

\medskip
\noindent
{\bf Step~3.} For every $h=0,\ldots,d$, do the following.
\begin{itemize}
\item[(i)] For every $i\in I$, if $\ell=\min\{\hdist(\bfs,\bfa^i)\mid  \bfs=\bigoplus_{j\in J}\bfs^j,~J\subseteq\{1,\ldots,p\}\}\leq h-1$, then set $I=I\setminus\{i\}$ and $d=d-\ell$.
\item[(ii)] If $|I|\leq \sqrt{d\log(2^r+d)}$, then do the following:
\begin{itemize}
\item for each $r'\leq r$, consider all possible $r'$-tuples of vectors $(\bfx^1,\ldots,\bfx^{r'})$ and $(\bfy^1,\ldots,\bfy^{r'})$ with $\bfx^i\in\{0,1\}^p$ and $\bfy^i\in\{0,1\}^{|I|}$ for $i\in\{1,\ldots,r'\}$, and do the following:
if $\sum_{i=1}^m\min\{\hdist(\bfy,(\bfa_i^I)^\intercal)\mid \bfy=\bigoplus_{j\in J}\bfy^j,~J\subseteq\{1,\ldots,r'\}\text{ and }\bfs_i^\intercal=\bigoplus_{j\in J}\bfx^j\}\leq d$, 
then return YES and stop.
\item Return NO and stop.
\end{itemize}
\item[(iii)] If $h\leq d/|I|$, then  consider each vector $\bfs\in \{0,1\}^m$ such that $\bfs$ agrees with $\bfA$, $\bfs$ is linearly independent with the vectors of $S$ and $\hdist(\bfs,\bfa^i)=h$ for some $i\in I$ and do the following: call  $\textsc{Extend-Solution}(I,S\cup\{s\},d)$ and if the algorithm returns YES, then return YES and stop.
\end{itemize}
\medskip
\noindent
{\bf Step~4.} Return NO and stop.

\medskip\noindent\textbf{Correctness.} 
The construction of the algorithm and Claim~\ref{claim:thm3} imply that if the algorithm returns YES, then there is a solution $C$ for the instance $(I,S,d)$.

Now we show the reverse direction. That is, we show that if there is a solution $C$ for $(I,S,d)$, then the algorithm returns YES. We assume that there is a solution $C$ for $(I,S,d)$. The proof is by induction on $r-|S|$. The base case is when $\vert S\vert=r$. In this case $C=S$ and algorithm returns YES in Step~1. 

Now consider the induction step. That is, $\vert S\vert < r$. By the induction hypothesis we have 
that for every $I'\subseteq\{1,\ldots,n\}$, a set of size at most $r$ linearly independent vectors $S'\supset S$ that agree with $\bfA$, and  a nonnegative integer $d'$, the algorithm 
returns YES if a solution for the input $(S',I',d')$ exists. If the algorithm output YES in Step~1, then we are done. 
Otherwise we have that $S$ is not a solution. Since $\vert S\vert<r$, Step~2 is not executed.  
Since $S$ is not a solution, we have that  for every solution $C=\{\bfc^1,\ldots,\bfc^{r'}\}\supset S$ minimizing the sum 
$\sum_{i\in I}\min\{\hdist(\bfc,\bfa^i)\mid \bfc=\bigoplus_{j\in J}\bfc^j,~J\subseteq \{1,\ldots,r'\}\}$  satisfies the following property:  
There exists $i\in I$ such that for every linear combination $\bfs\in\{0,1\}^m$ of vectors from $S$ we have
$\hdist(\bfs,\bfa^i)> h$, where $h=\min\{\hdist(\bfc,\bfa^i)\mid \bfc=\bigoplus_{j\in J}\bfc^j,~J\subseteq \{1,\ldots,r'\}\}$.
 We choose a solution $C=C^*$ and $i=i^*\in I$ in such a way that the value of $h$ is minimum. 
Notice that there is a subset $J^*\subseteq \{1,\ldots,r'\}$ such that 
$\bfc^*=\bigoplus_{j\in J^*}\bfc^j$,  $h=\hdist(\bf\bfc^*,\bfa^{i^*})$ and 
$S\cap \{\bfc^j\colon j\in J^*\}\neq \emptyset$. 
We assume that $\bfc^*\in C^*$.  
Otherwise, we just add $\bfc^*$ to the solution and exclude arbitrary $\bfc^j\notin S$ with $j\in J^*$ from $C^*$ (because $\bfc^*=\bigoplus_{j\in J^*}\bfc^j$). Let $C^*=\{\bfc^1,\ldots,\bfc^{r'}\}$.  
Clearly, $h\leq d$. We claim that the algorithm outputs YES in Step~3 for this value of $h$ unless it already produced the same answer for some smaller value of $h$. 
Let the set $I'\subseteq I$ and the integer $d'$ be constructed as follows: 
\begin{itemize}
\item set $I'=I$ and $d'=d$,
\item for every $i\in I'$, if $\ell=\min\{\hdist(\bfs,\bfa^i)\mid  \bfs=\bigoplus_{j\in J}\bfs^j,~J\subseteq \{1,\ldots,p\}\}\leq h-1$, then set $I'=I'\setminus\{i\}$ and $d'=d'-\ell$.
\end{itemize}
Notice that for every $j\in I\setminus I'$, \[\min\{\hdist(\bfs,\bfa^j)\mid \bfs=\bigoplus_{j\in J}\bfs^j,~J\subseteq \{1,\ldots,p\}\}
\leq \min\{\hdist(\bfc,\bfa^j)\mid \bfc=\bigoplus_{j\in J}\bfc^j,~J\subseteq \{1,\ldots,r'\}\}.\]

Because of the choice of $h$, 
we have a solution for $(I',S,d')$ if and only if there is a solution for  the original input $(I,S,d)$. Moreover, $C^*$ is a solution for $(I',S,d')$.  Observe also that by the choice of $h$ we have that $i^*\in I'$.

Note that for the set $I$ and an integer $d$ constructed in Step~3 (i), we have that $I=I'$ and $d=d'$.
Suppose that $|I'|\leq \sqrt{d'\log (2^r+d')}$. This case is considered in Step~3 (ii). By Claim~\ref{claim:thm3}, there is  $r''\leq r$ and two $r''$-tuples of vectors $(\bfx^1,\ldots,\bfx^{r''})$ and $(\bfy^1,\ldots,\bfy^{r''})$ with $\bfx^i\in\{0,1\}^p$ and $\bfy^i\in\{0,1\}^{|I'|}$ for $i\in\{1,\ldots,r''\}$ such that 
\begin{equation}\label{eq:check}
\sum_{i=1}^m\min\{\hdist(\bfy,\bfa_i^\intercal)\mid \bfy=\bigoplus_{j\in J}\bfy^j,~J\subseteq\{1,\ldots,r''\}\text{ and }\bfs_i^\intercal=\bigoplus_{j\in J}\bfx^j\}\leq d'.
\end{equation}
Since our algorithm considers all $r''\leq r$ and all possible $r''$-tuples of vectors $(\bfx^1,\ldots,\bfx^{r''})$ and $(\bfy^1,\ldots,\bfy^{r''})$ with $\bfx^i\in\{0,1\}^p$ and $\bfy^i\in\{0,1\}^{|I'|}$ for $i\in\{1,\ldots,r''\}$ and verifies (\ref{eq:check}), we obtain that the algorithm outputs YES in Step~3 (ii).

From now we assume that $|I'|>\sqrt{d'\log (2^r+d')}$. This case is analyzed in Step~3 (iii).
Note that for every $j\in I'$ and $J\subseteq\{1,\ldots,r'\}$, $\hdist(\bfc,\bfa^j)\geq h$ for $\bfc=\bigoplus_{i\in J}\bfc^i$. Hence, $h\leq d'/|I'|$. 
In Step~3 (iii), we consider every vector $\bfs\in \{0,1\}^m$ such that $\bfs$ agrees with $\bfA$, $\bfs$ is linearly independent with the vectors of $S$ and $\hdist(\bfs,\bfa^j)=h$ for some $j\in I'$, and call  $\textsc{Extend-Solution}(I',S\cup\{s\},d')$. We have that in some recursive call of the algorithm, we call $\textsc{Extend-Solution}(J,S\cup\{\bfc^*\},d')$, because $\hdist(\bfc^*,\bfa^{i^*})=h$ and $i^*\in I'$. By the induction hypothesis, the algorithm outputs YES for the input $(I',S\cup\{\bfc^*\},d')$. Then we output YES for  $(I,S,d)$. 

To complete the correctness proof, note that the depth of the recursion is at most $r$. It follows that the algorithm performs a finite number of steps and correctly reports that $(I,S,d)$ is a \yesinstance or a no-instance.

\medskip\noindent\textbf{Running time.} 
To evaluate the running time, note that Steps~1 and 2 can be done in polynomial time. In Step~3, we consider $d+1\leq k+1$ values of $h$ and for each $h$, perform Steps~3 (i)--(iii). 
Because $p\leq r$, Step~3 (i) can be done in time $2^{\Oh(r)}\cdot(nm)^{\Oh(1)}$. 
Since $r'\leq r$, $|I|\leq \sqrt{d\log (2^r+d)}\leq \sqrt{k\log(2^r+k)}$ and $p\leq r$ in Step~3 (ii), we consider  $2^{\Oh(r(r+\sqrt{k\log (2^r+k)}))}$ $r'$-tuples of vectors $(\bfx^1,\ldots,\bfx^{r'})$ and $(\bfy^1,\ldots,\bfy^{r'})$. It implies that Step~3 (ii)  takes time $2^{\Oh(r(r+\sqrt{k\log (2^r+k)}))}\cdot (nm)^{\Oh(1)}$.
In Step~3 (iii), we have that $h\leq d/|I|\leq \sqrt{d/\log (2^r+d)}\leq \sqrt{k/\log (2^r+k)}$. Recall that $\bfA$ has at most $2^r+k$ pairwise-distinct columns.
Hence, to construct $\bfs$ is Step~3 (iii), we consider at most $2^r+k$ columns $\bfa^i$ for $i\in I$. 
Recall also that $\bfA$ has at most $2^r+k$ distinct rows. Since $\bfs$ agrees with $\bfA$, by Lemma~\ref{lem:agree2}, there are at most $2^{\Oh(\log (2^r+k)\sqrt{k/\log(2^r+k)}}$ vectors $\bfs$ at distance $h$ from $\bfa^i$. It follows that Step~3 (iii) without recursive calls of \textsc{Extend-Solution} can be done in time $2^{\Oh(\sqrt{k\log (2^r+k)})}\cdot (nm)^{\Oh(1)}$ and we have  $2^{\Oh(\sqrt{k\log (2^r+k)})}$ recursive calls of the algorithm. The depth of the recursion is at most $r$. Therefore, the total running time of our algorithm is $2^{\Oh(r\sqrt{k\log (2^r+k)})}\cdot (nm)^{\Oh(1)}$ which is upper bounded by   $2^{\Oh(r^{3/2}\sqrt{k\log k})}\cdot (nm)^{\Oh(1)}$
\end{proof}

\section{Subexponential algorithms for \probPClust and \probBFact}\label{sec:p-clust}
In this section we give  a parameterized subexponential algorithm for \probBFact. This algorithm is more complicated than the one for \probFact. The main reason to that is that now the  elements of the matrices along with the boolean operations do not form a field and thus many nice properties of matrix-rank cannot be used here. The way we handle this issue is to solve the \probPClust problem. As far as we obtain  a subexponential algorithm for  
\probPClust, a simple reduction will provide an algorithm for \probBFact.

\medskip

Let $(\bfA,\bfP,k)$ be an instance of \probPClust. We say that a matrix $\bfB$ is a \emph{solution} for the instance if $\|\bfA-\bfB\|^2_F\leq k$ and $\bfB$ is a $\bfP$-matrix.%
We need the following lemma.

\begin{lemma}\label{lem:diff-pclust}
Let $(\bfA,\bfP,k)$ be a \yesinstance of \probPClust where $\bfP$ is a $p\times q$-matrix. Then $\bfA$ has at most $p+k$ pairwise-distinct rows and at most $q+k$ pairwise-distinct columns.
\end{lemma}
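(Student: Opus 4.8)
The plan is to reproduce, essentially verbatim, the argument used for Lemma~\ref{lem:diff-fact}, substituting the structural bound for $\bfP$-matrices in place of the rank bound for low-rank matrices. Since $(\bfA,\bfP,k)$ is a \yesinstance, there is a solution, i.e.\ a $\bfP$-matrix $\bfB$ with $\|\bfA-\bfB\|_F^2\leq k$. The first step is to invoke Observation~\ref{obs:p-matr-numberofrows}, which tells us that any $\bfP$-matrix for a $p\times q$ pattern $\bfP$ has at most $p$ pairwise-distinct rows and at most $q$ pairwise-distinct columns. This plays exactly the role that Observation~\ref{obs:rank} played in the proof of Lemma~\ref{lem:diff-fact}.

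The second step is to translate the bound $\|\bfA-\bfB\|_F^2\leq k$ into a bound on the number of rows and columns in which $\bfA$ and $\bfB$ differ. By \eqref{eqn:FrHam}, $\|\bfA-\bfB\|_F^2=d_H(\bfA,\bfB)$ counts the total number of entries in which the two matrices disagree. Any row in which $\bfA$ and $\bfB$ differ must contain at least one such entry, so the number of differing rows is at most $k$; the same argument bounds the number of differing columns by $k$.

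The final step is the counting argument combining the two bounds. I would partition the rows of $\bfA$ into those that coincide with the corresponding row of $\bfB$ and those that do not. The rows in the first group are rows of $\bfB$, hence take at most $p$ distinct values by the first step; the second group has at most $k$ rows by the second step. Therefore $\bfA$ has at most $p+k$ pairwise-distinct rows, and by the symmetric argument at most $q+k$ pairwise-distinct columns.

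I do not anticipate any genuine obstacle: the statement is a direct structural analogue of Lemma~\ref{lem:diff-fact}, and both of the ingredients (the $\bfP$-matrix bound from Observation~\ref{obs:p-matr-numberofrows} and the row/column disagreement bound from the Frobenius norm) are either already established or immediate. The only point requiring a little care is the final combination, where one must note that the ``unchanged'' rows inherit their at most $p$ distinct values from $\bfB$ rather than from $\bfA$, so that the two bounds add rather than multiply.
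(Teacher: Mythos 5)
Your proposal is correct and follows essentially the same route as the paper: take a solution $\bfB$, apply Observation~\ref{obs:p-matr-numberofrows} to bound its distinct rows and columns by $p$ and $q$, and then observe that $\bfA$ differs from $\bfB$ in at most $k$ entries, hence in at most $k$ rows and at most $k$ columns. The paper states this more tersely, but your spelled-out final counting step is exactly the intended argument.
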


\begin{proof}
Let $\bfB$ be a solution to $(\bfA,\bfP,k)$. Then by Observation~\ref{obs:p-matr-numberofrows}, $\bfB$ has at most $p$ pairwise-distinct rows and at most $q$ pairwise-distinct columns. Since $\bfA$ and $\bfB$ differ in at most $k$ elements, the claim follows. 
\end{proof}

Similarly to the algorithms for \probClust and \probFact in Theorems~\ref{thm:clust-subexp} and \ref{thm:fact-subexp} respectively, we construct a  recursive branching  algorithm for \probPClust. 
Recall that in the algorithms for  \probClust and \probFact,  we   solved  auxiliary problems. In these auxiliary problems one has to   extend a partial solution while reducing the set of ``undecided'' columns. In particular, if the set of these undecided columns is sufficiently  small, we use brute force algorithms to solve the problems. Here we use a similar strategy, but the auxiliary extension problem is slightly more complicated. 
Following is the auxiliary  extension problem.

 \defproblema{\textsc{\probPClustExt}}%
{An $m\times n$ binary matrix $\bfA$,  
a pattern $p\times q$-matrix $\bfP$, 
a partition $\{X,Y,Z\}$ of $\{1,\ldots,n\}$, where some sets could be empty, such that $|X|+|Y|=q$,
 and a nonnegative integer $k$.}%
{Decide whether there is an $m\times n$-matrix $\bfB$ such that i)~$\bfB[\{1,\ldots,m\},X]=\bfA[\{1,\ldots,m\},X]$, ii) $\bfB$ and $\bfB[\{1,\ldots,m\},X\cup Y]$ are $\bfP$-matrices, and iii) $\|\bfA-\bfB\|_F^2\leq k$.
}
We call a matrix $\bfB$ satisfying i)--iii) \emph{solution} for \probPClustExt.
The following lemma will be used in the algorithm when the sum  $|Y|+|Z|$ is sufficiently small.

\begin{lemma}\label{lem:ext-Pclust}
\probPClustExt is solvable in time $2^{\Oh(p(r+\log k)+p\log p+q\log q)}\cdot (nm)^{\Oh(1)}$, where $r=|Y|+|Z|$.
\end{lemma}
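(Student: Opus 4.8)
The plan is to reduce the problem to guessing, for each column of $\bfA$, which of the $q$ column-types of $\bfP$ it represents, and then to fill in the rows optimally. Recall that any $\bfP$-matrix $\bfB$ satisfies $b_{st}=p_{i(s)j(t)}$ for a row-type function $i\colon\{1,\ldots,m\}\to\{1,\ldots,p\}$ and a column-type function $j\colon\{1,\ldots,n\}\to\{1,\ldots,q\}$. Since $|X|+|Y|=q$, condition (ii) forces $\bfB[\{1,\ldots,m\},X\cup Y]$ to be a $\bfP$-matrix on exactly $q$ columns, so its column blocks are singletons and $j$ restricts to a bijection $\sigma\colon X\cup Y\to\{1,\ldots,q\}$; in particular every column-type of $\bfP$ is already used by $X\cup Y$. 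Each column $t\in Z$ contributes only through its profile $(p_{1,j(t)},\ldots,p_{p,j(t)})^{\intercal}$, which must be one of the at most $2^{p}$ columns of $\bfP$. Thus a solution is completely described by $\sigma$, by the profiles of the $Z$-columns, and by a row-type function $i(\cdot)$, and the whole algorithm is an exhaustive search over the first two together with an optimal computation of the third.

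First I would enumerate the $q!=2^{\Oh(q\log q)}$ bijections $\sigma$ for $X\cup Y$. Second, for every $t\in Z$ I would guess its $p$-bit profile directly (discarding the guess if the vector is not a column of $\bfP$); since $|Z|\le r$ there are at most $(2^{p})^{|Z|}=2^{p|Z|}\le 2^{pr}$ such guesses, and the profile is all that the cost and the feasibility depend on. Together this gives $2^{\Oh(q\log q+pr)}$ candidate column-type assignments, and each fixes the profile $v_i=(p_{i,j(t)})_{t=1}^{n}$ of every row-type $i$.

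For a fixed column-type assignment I would then choose the row-type function optimally. Condition (i) forces that a row $s$ placed into type $i$ obeys $p_{i,j(t)}=a_{st}$ for all $t\in X$; let $F(s)$ be the set of types satisfying this, and note that for $i\in F(s)$ the columns of $X$ contribute $0$ to $\hdist(\bfA[\{s\},\{1,\ldots,n\}],v_i^{\intercal})$. By the characterization behind Observation~\ref{obs:p-matr-brute}, $\bfB$ is a $\bfP$-matrix precisely when each distinct row of $\bfP$ is realized by at least as many rows of $\bfB$ as it has copies in $\bfP$ (the matching column condition is automatic because $\sigma$ places exactly one representative per column of $\bfP$). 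These are lower bounds on the number of rows sent to each group of row-types, so the minimum-cost feasible assignment of rows to types is the optimum of a min-cost flow with lower bounds and is found in polynomial time; the algorithm returns YES iff for some column-type assignment this optimum is at most $k$. Correctness is then routine: a genuine solution $\bfB$ yields a bijective $\sigma$, $Z$-profiles equal to columns of $\bfP$, and a feasible row assignment of cost $\|\bfA-\bfB\|_F^2\le k$; conversely any column-type assignment with all $F(s)\neq\emptyset$ and assignment cost at most $k$ reconstructs a matrix $\bfB$ satisfying (i)--(iii).

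Multiplying the $2^{\Oh(q\log q+pr)}$ guesses by the polynomial per-guess work gives total time $2^{\Oh(q\log q+pr)}\cdot(nm)^{\Oh(1)}$, which is bounded by $2^{\Oh(p(r+\log k)+p\log p+q\log q)}\cdot(nm)^{\Oh(1)}$. The step I expect to be the main obstacle is the decoupling argument that justifies optimizing the rows only after the column-types are fixed, and in particular that guessing $p$-bit profiles (rather than type indices) for the $Z$-columns simultaneously keeps the count at $2^{pr}$ and loses no information; a secondary subtlety is that the hard constraint (i) must be encoded through the feasibility sets $F(s)$, while the structural requirement that $\bfB$ be an honest $\bfP$-matrix must be encoded as lower-bound constraints in the row assignment rather than handled by naive per-row minimization.
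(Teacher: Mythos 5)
Your proof is correct, but it takes a genuinely different route from the paper's. The paper works \emph{row-first}: using the bound of at most $p+k$ pairwise-distinct rows of $\bfA$, it enumerates the at most $(p+k)^p$ candidate sets $I$ of $p$ representative row indices (up to row-equivalence of $\bfA[I,\{1,\ldots,n\}]$), then enumerates the $2^{(|Y|+|Z|)p}$ ways to fill the entries of $\bfB[I,\{1,\ldots,n\}]$ in the columns of $Y\cup Z$ (the $X$-entries being forced by condition i)), checks the two $\bfP$-matrix conditions by the $2^{\Oh(p\log p+q\log q)}$ brute force of Observation~\ref{obs:p-matr-brute}, and finally assigns each remaining row greedily to its cheapest representative under a modified Hamming distance that is $+\infty$ when the $X$-entries disagree. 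Since the $p$ representatives already realize every row type of $\bfP$, per-row greedy assignment suffices there and the multiplicity constraints come for free; the price is the $(p+k)^p$ enumeration, which is exactly where the $p\log k$ and $p\log p$ terms in the exponent come from. You instead work \emph{column-first}: the bijection $\sigma$ on $X\cup Y$ together with the at most $2^{p|Z|}$ profile guesses for $Z$ determine every candidate row value $v_i$, and you recover the optimal row-typing by a transportation/min-cost-flow computation with lower bounds --- which is precisely where the nonemptiness and multiplicity constraints that the paper gets for free must be enforced, and your remark that naive per-row minimization would be wrong at this point is the right one. The one step you should make explicit in the completeness direction is that in any solution every $Z$-column of $\bfB$ coincides with some column of $\bfB[\{1,\ldots,m\},X\cup Y]$ (hence its profile is a column of $\bfP$ with respect to the row-typing extracted from $\bfB[\{1,\ldots,m\},X\cup Y]$); this follows from the characterization behind Observation~\ref{obs:p-matr-brute}, because both $\bfB$ and $\bfB[\{1,\ldots,m\},X\cup Y]$ have exactly as many pairwise-distinct columns as $\bfP$, so the $X\cup Y$ columns already exhaust all distinct columns of $\bfB$. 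With that filled in, your bound $2^{\Oh(q\log q+p(|Y|+|Z|))}\cdot(nm)^{\Oh(1)}$ is in fact slightly stronger than the stated one (it avoids the $p\log k$ and $p\log p$ terms), so it establishes the lemma.
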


\begin{proof}
Let $(\bfA,\bfP,\{X,Y,Z\},k)$ be an instance of \probPClustExt. Let $ \bfa^1,\ldots,\bfa^n$ be the columns   and  $\bfa_1,\ldots,\bfa_m$ be the rows of matrix $\bfA=(a_{ij})\in \{0,1\}^{m\times n}$.  

If $\bfA$ has at least $p+k+1$ pairwise-distinct rows or $q+k+1$ pairwise-distinct columns we return NO,  and stop.  This is correct by Lemma~\ref{lem:diff-pclust}.  Also if $m<p$ or $n<q$, we have a trivial no-instance and we again can safely return NO and stop. From now we assume that $m\geq p$, $n\geq q$, and that $\bfA$ has at most $p+k$ pairwise-distinct rows and at most $q+k$ pairwise-distinct columns.

Suppose that matrix $\bfB$ with the rows $\bfb_1,\ldots,\bfb_m$ is a solution to $(\bfA,\bfP,\{X,Y,Z\},k)$. We say that a set $I\subseteq\{1,\ldots,m\}$ \emph{represents} $\bfP$ with respect to $\bfB$ if 
a) $|I|=p$, b) $\bfB[I,\{1,\ldots,n\}]$ and $\bfB[I,X\cup Y]$ are $\bfP$-matrices, and c) for every $i\in\{1,\ldots,m\}\setminus I$, there is $j\in I$ such that $\bfb_i=\bfb_j$.
Clearly, for every solution $\bfB$, there is $I\subseteq\{1,\ldots,m\}$ that represents $\bfP$ with respect to $\bfB$.

We say that two sets of indices $I,I'\subseteq \{1,\ldots,m\}$ are \emph{equivalent} with respect to $\bfA$ if the matrices $\bfA[I,\{1,\ldots,n\}]$ and $\bfA[I',\{1,\ldots,n\}]$ are isomorphic. 
Observe that if $I,I'$ are equivalent with respect to $\bfA$, then $(\bfA,\bfP,\{X,Y,Z\},k)$ has a solution $\bfB$ such that $I$ represents $\bfP$ with respect to $\bfB$ if and only if the same instance has a solution 
$\bfB'$ such that $I'$ represents $\bfP$ with respect to $\bfB'$. 

Since $\bfA$ has at most $p+k$ pairwise-distinct rows, there are at most $(p+k)^p$ pairwise nonequivalent with respect to $\bfA$ sets of indices $I\subseteq\{1,\ldots,m\}$ of size $p$. We consider such sets and for each $I$, we check whether there is a solution $\bfB$ for  $(\bfA,\bfP,\{X,Y,Z\},k)$ with $I$ representing $\bfP$ with respect to $\bfB$. If we find that there is a solution, we return YES and stop, and we return NO if there is no solution for every choice of $I$. From now we assume that $I$ is given.

Our aim now is to check the existence of a solution $\bfB =(b_{ij})\in \{0,1\}^{m\times n}$ with $I$ representing $\bfP$ with respect to $\bfB$. 
 We denote by $\bfb_1,\ldots,\bfb_m$ the rows of $\bfB$.

We consider all possible matrices $\bfB[I,\{1,\ldots,n\}]$. Recall that for each $i\in \{1,\ldots,m\}$, we should have that $b_{ij}=a_{ij}$ for $j\in X$. It implies that there are at most $2^{|Y|+|Z|}$ possibilities to restrict $\bfb_i$ for $i\in I$, and there are at most $2^{(|Y|+|Z|)p}$ possible matrices $\bfB[I,\{1,\ldots,n\}]$. Since the  matrix should satisfy the condition b),  we use Observation~\ref{obs:p-matr-brute} to check this condition in time $2^{\Oh(p\log p+q\log q)}\cdot(nm)^{\Oh(1)}$. If it is violated, we discard the current choice of  $\bfB[I,\{1,\ldots,n\}]$. Otherwise, we check whether $\bfB[I,\{1,\ldots,n\}]$ can be extended to a solution $\bfB$ satisfying c). For $i,j\in \{1,\ldots,m\}$, we define
$$\hdist^*(\bfb_i,\bfa_j)=
\begin{cases}
\hdist(\bfb_i,\bfa_j)&\mbox{ if } b_{ih}=a_{jh}\text{ for }h\in X,\\
+\infty&\mbox{ otherwise.} 
\end{cases}
$$
Then we observe that $\bfB[I,\{1,\ldots,n\}]$ can be extended to a solution $\bfB$ satisfying c) if and only if
$$
\sum_{i\in I}\hdist(\bfb_i,\bfa_i)+\sum_{j\in\{1,\ldots,m\}\setminus I}\min\{\hdist^*(\bfb_i,\bfa_j)\mid i\in I\}\leq k.
$$
We verify the condition and return YES and stop if it holds. Otherwise we discard the current choice of $\bfB[I,\{1,\ldots,n\}]$. If we fail for all choices of $\bfB[I,\{1,\ldots,n\}]$, we return NO  and stop.

It remains to evaluate the running time. We can check in polynomial time whether $\bfA$ has at most $p+k$ rows and at most $q+k$ columns. Then we construct at most $(p+k)^p$ pairwise nonequivalent (with respect to $\bfA$) sets of indices $I\in\{1,\ldots,m\}$ of size $p$ and for each $I$, we consider at most  $2^{(|Y|+|Z|)p}$ possible matrices $\bfB[I,\{1,\ldots,n\}]$. Then for each choice of matrix $\bfB[I,\{1,\ldots,n\}]$, we first check in time $2^{\Oh(p\log p+q\log q)}\cdot(nm)^{\Oh(1)}$ whether this matrix satisfies b) and then 
check in polynomial time whether $\bfB[I,\{1,\ldots,n\}]$ can be extended to a solution. We obtain that the total running time is $2^{\Oh(p(\log k+|Y|+|Z|)+p\log p+q\log q)}\cdot(nm)^{\Oh(1)}$.
\end{proof}

Now we are ready to prove the main technical result of this section.
 \begin{theorem}\label{thm:pclust-subexp}
 \probPClust is  solvable in
 $2^{\Oh((p+q)\sqrt{k\log (p+k)}+p\log p+q\log q+q\log(q+k))}\cdot (nm)^{\Oh(1)}$  time.
\end{theorem}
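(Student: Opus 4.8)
The plan is to follow the same two-phase strategy as in the proofs of Theorems~\ref{thm:clust-subexp} and~\ref{thm:fact-subexp}, but using the solver for \probPClustExt from Lemma~\ref{lem:ext-Pclust} in place of the brute-force base case. First I would run the preprocessing of Lemma~\ref{lem:diff-pclust}: if $\bfA$ has more than $p+k$ pairwise-distinct rows or more than $q+k$ pairwise-distinct columns, report \no; otherwise we may assume at most $p+k$ distinct rows and at most $q+k$ distinct columns. By Observation~\ref{obs:p-matr-numberofrows} a solution $\bfB$ has at most $q$ distinct columns, so producing $\bfB$ amounts to selecting $q$ representative column-vectors (the column types of $\bfB$) that share one common partition of the rows into $p$ groups, and then assigning every column of $\bfA$ to a representative at a cost equal to the Hamming distance. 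The shared $p$-row partition is exactly what couples the representatives together, and it is this coupling that the \probPClustExt solver is designed to absorb.

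The core is a recursive routine, say \textsc{Extend-Pattern}$(S,Z,d)$, where $S$ is a set of representative column-vectors already committed (each \emph{agreeing} with $\bfA$ in the sense of the definition preceding Lemma~\ref{lem:agree2}, i.e.\ constant on equal rows), $Z\subseteq\{1,\ldots,n\}$ is the set of still-undecided columns, and $d$ is the remaining budget; columns outside $Z$ have already been charged to the budget. The quantity we drive down is $(q-|S|)+|Z|$, precisely the parameter $r=|Y|+|Z|$ controlling Lemma~\ref{lem:ext-Pclust}. Exactly as in Step~3 of the two previous proofs, I would iterate over a threshold $h\in\{0,\ldots,d\}$ for the minimum distance from an unassigned column to a not-yet-present representative, first removing from $Z$ (and charging) every column within distance $h-1$ of some vector of $S$. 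If $(q-|S|)+|Z|\le\sqrt{k\log(p+k)}$ the algorithm passes to the base case; otherwise $h\le d/|Z|\le\sqrt{k/\log(p+k)}$, and it branches by adding one new representative $\bfs\in\{0,1\}^m$ that agrees with $\bfA$ and satisfies $\hdist(\bfs,\bfa^i)=h$ for some $i\in Z$, recursing on $(S\cup\{\bfs\},Z,d)$. Since $\bfA$ has at most $p+k$ distinct rows, Lemma~\ref{lem:agree2} bounds the number of such candidate vectors anchored at a fixed column by $2^{\Oh(h\log(p+k))}=2^{\Oh(\sqrt{k\log(p+k)})}$, and there are at most $q+k$ columns to anchor at. An analogue of Lemma~\ref{lem:agree}, argued as in Lemma~\ref{lem:agree-fact}, guarantees that it suffices to search among representatives agreeing with $\bfA$, so the branching is exhaustive; the recursion depth is at most $q$ because $\bfB$ has at most $q$ distinct columns.

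The base case is reached once $(q-|S|)+|Z|\le\sqrt{k\log(p+k)}$. Here I would hand the partially built object to the solver of Lemma~\ref{lem:ext-Pclust}: append the committed representatives of $S$ to $\bfA$ as extra columns and place them into the pinned set $X$ (freezing their entries at cost $0$), let the at most $q-|S|$ missing column types form the free set $Y$, and let the remaining undecided columns form $Z$, producing a \probPClustExt instance with $|Y|+|Z|=\Oh(\sqrt{k\log(p+k)})$. Lemma~\ref{lem:ext-Pclust} then simultaneously reconstructs the shared $p$-row partition, fixes the free representatives, and assigns the undecided columns within the budget, or reports failure. Correctness of the whole scheme follows by induction on $q-|S|$, exactly as in Theorems~\ref{thm:clust-subexp} and~\ref{thm:fact-subexp}: choosing the next representative and the anchor $i^{*}$ so as to minimise $h$ ensures that whenever a solution exists, either $S$ already realises it (caught in the base case) or an optimal next representative appears among the enumerated agreeing candidates.

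For the running time, the recursion has depth at most $q$, and at each level it tries $\Oh(k)$ thresholds, at most $q+k$ anchor columns and $2^{\Oh(\sqrt{k\log(p+k)})}$ agreeing candidates per anchor, so it produces $2^{\Oh(q\sqrt{k\log(p+k)}+q\log(q+k))}$ leaves. Each leaf invokes Lemma~\ref{lem:ext-Pclust} with $|Y|+|Z|=\Oh(\sqrt{k\log(p+k)})$, costing $2^{\Oh(p\sqrt{k\log(p+k)}+p\log p+q\log q)}\cdot(nm)^{\Oh(1)}$ (the additive $p\log k$ being absorbed). Multiplying the two bounds yields the claimed $2^{\Oh((p+q)\sqrt{k\log (p+k)}+p\log p+q\log q+q\log(q+k))}\cdot (nm)^{\Oh(1)}$ time. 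I expect the main obstacle to be precisely the coupling of the representatives through the shared row partition, which blocks the clean greedy/majority arguments available for \probAtMostClust and \probFact: one cannot read off an optimal representative locally. The resolution is to keep the recursion purely on the column side, where the ``agrees with $\bfA$'' bound of Lemma~\ref{lem:agree2} still applies, and to offload the row structure, together with the few free representatives and leftover columns, to the \probPClustExt solver, whose running time degrades only in $|Y|+|Z|$. The delicate point, more subtle than the simple ``small $|I|$'' base case of the earlier proofs, is to argue that the recursion can drive $(q-|S|)+|Z|$ below $\sqrt{k\log(p+k)}$ before stopping while correctly accounting for unused or duplicated column types of $\bfP$, so that $|Y|$ really is kept subexponentially small at the moment Lemma~\ref{lem:ext-Pclust} is called.
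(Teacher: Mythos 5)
Your high-level architecture (recursive branching on distance thresholds $h$, a base case handled by Lemma~\ref{lem:ext-Pclust} once the number of undetermined column types plus undecided columns drops below $\sqrt{k\log(p+k)}$) matches the paper's, but there is a genuine gap in the step that makes the branching degree subexponential. You bound the number of candidate representatives via Lemma~\ref{lem:agree2} after restricting to vectors that \emph{agree} with $\bfA$, justified by ``an analogue of Lemma~\ref{lem:agree}, argued as in Lemma~\ref{lem:agree-fact}.'' No such analogue holds for \probPClust: a $\bfP$-matrix must realize a partition of the rows into exactly $p$ nonempty classes whose block pattern matches $\bfP$, and this can force identical rows of $\bfA$ into different classes (e.g.\ $\bfP=\bigl(\begin{smallmatrix}0\\1\end{smallmatrix}\bigr)$ against an all-zero $\bfA$ with equal rows). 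Consequently the column types of an optimal $\bfB$ need not be constant on the equal-row classes of $\bfA$, the exchange argument of Lemma~\ref{lem:agree-fact} breaks (merging two equal rows of $\bfA$ inside $\bfB$ can destroy the $\bfP$-structure), and your branching can miss every solution. The paper avoids agreement entirely: it first applies a reduction rule deleting duplicate rows beyond multiplicity $k+p+1$, so that $m\leq(p+k)^2$, and then branches over \emph{all} vectors at Hamming distance exactly $h$ from $\bfa^i$, of which there are at most $\binom{m}{h}\leq (p+k)^{2h}=2^{\Oh(\sqrt{k\log(p+k)})}$. You need this reduction rule (or some substitute) for the branching degree to be controlled.

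Two further points are under-specified in a way that matters. First, your base-case hand-off to Lemma~\ref{lem:ext-Pclust} puts the committed representatives into $X$ as appended frozen columns and lets ``the at most $q-|S|$ missing column types form the free set $Y$''; but in \probPClustExt the sets $Y$ and $Z$ are actual columns of the input matrix whose modification is charged against $\|\bfA-\bfB\|_F^2$, so abstract ``missing types'' have no well-defined cost anchor. Second, you flag but do not resolve how unused or duplicated column types of $\bfP$ are accounted for. The paper handles both issues with one device you omit: before any recursion it guesses, up to column-equivalence in $\bfA$ (at most $(q+k)^q$ choices, which is where the $q\log(q+k)$ term in the exponent comes from), a set $J$ of exactly $q$ column indices that \emph{represents} $\bfP$ with respect to some solution --- meaning $\bfB[\{1,\ldots,m\},J]$ is itself a $\bfP$-matrix, every other column of $\bfB$ duplicates one indexed by $J$, and the cost on $J$ is minimized. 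The recursion then only ever corrects columns of $J$ (moving them from $Y$ to $X$ one at a time, each anchored at its own original column $\bfa^i$), so $Y$ and $Z$ always remain genuine column sets of the working matrix and the multiplicities of the column types are fixed from the start.
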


\begin{proof}
Let $(\bfA,\bfP,k)$ be an instance of \probPClust. Denote by $\bfa_1,\ldots,\bfa_m$ and $\bfa^1,\ldots,\bfa^n$ the rows and columns of $\bfA$ respectively, and let $\bfP$ be a $p\times q$-matrix. 

First, we preprocess the instance using Lemma~\ref{lem:diff-pclust}. If $\bfA$ has at least $p+k+1$ pairwise-distinct rows or at least $q+k+1$ pairwise-distinct columns, we return NO and stop. 
We do the same if $m<p$ or $n<q$.
Assume from now that this is not the case, that is, $p\leq m$, $q\leq n$ and $\bfA$ has at most $p+k$ pairwise-distinct rows and at most $q+k$ distinct columns.
Further, we exhaustively apply the following reduction rule.

\begin{redrule}
If $\bfA$ contains at least $k+p+1$ rows that are pairwise equal, then delete one of these rows.
\end{redrule}

To see that the rule is safe, let $\bfB$ be an $m\times n$-matrix such that $\|\bfA-\bfB\|_F^2\leq k$. If for $I\subseteq\{1,\ldots,m\}$ it holds that $|I|\geq p+k+1$ and the rows $\bfa_i$ for $i\in I$ are the same, then there is $J\subseteq I$ of size at least $p+1$ such that the rows $\bfb_i $ of $\bfB$ for $i\in J$ are the same. Then $\bfB$ is a $\bfP$-matrix if and only if the matrix obtained by the deletion of an arbitrary row $\bfb_i$ for $i\in J$ is a $\bfP$-matrix.

For simplicity, we keep the same notations for the matrix  obtained from $\bfA$ by the exhaustive application of the rule, i.e., we assume that $\bfA$ is an $m\times n$-matrix with the columns $\bfa^1,\ldots,\bfa^n$. Note that now we have that $m\leq (p+k)^2$.

Let $\bfB=(\bfb^1,\ldots,\bfb^n)$ be a solution to  $(\bfA,\bfP,k)$. We say that a set $J\subseteq\{1,\ldots,n\}$ \emph{represents} $\bfP$ with respect to $\bfB$ if a) $|J|=q$, b) $\bfB[\{1,\ldots,m\},J]$ is a $\bfP$-matrix, c) for each $j\in \{1,\ldots,n\}\setminus J$, $\bfb^j=\bfb^i$ for some $i\in J$, and d) the value of $\|\bfA[\{1,\ldots,m\},J]-\bfB[\{1,\ldots,m\},J]\|_F^2$ is minimum over all $J\subseteq\{1,\ldots,n\}$ satisfying a)--c). Observe that for every solution $\bfB$, there is $J\subseteq\{1,\ldots,n\}$ that represents $\bfP$ with respect to $\bfB$. 

We say that two sets of indices $J,J'\subseteq \{1,\ldots,n\}$ are \emph{equivalent} with respect to $\bfA$ if the matrices $\bfA[\{1,\ldots,m\},J]$ and $\bfA[\{1,\ldots,n\},J']$ are isomorphic. 
Observe that if $J,J'$ are equivalent with respect to $\bfA$, then $(\bfA,\bfP,k)$ has a solution $\bfB$ such that $J$ represents $\bfP$ with respect to $\bfB$ if and only if the same instance has a solution 
$\bfB'$ such that $J'$ represents $\bfP$ with respect to $\bfB'$. 

Because $\bfA$ has at most $q+k$ pairwise-distinct columns, there are at most $(q+k)^q$  sets of indices $J\in\{1,\ldots,n\}$ of size $q$ which are pairwise nonequivalent with respect to $\bfA$. We consider such sets and for each $J$, we check whether there is a solution $\bfB$ for  $(\bfA,\bfP,k)$ with $J$ representing $\bfP$ with respect to $\bfB$. If we find that there is a solution, we return YES and stop, and we return NO if there is no solution for every choice of $I$. From now we assume that $J$ is given.

We construct the instance $(\bfA,\bfP,\{X,Y,Z\},k)$ of \probPClustExt with $X=\emptyset$, $Y=J$ and $Z=\{1,\ldots,n\}\setminus J$. We have that $\bfB$ is a solution to  $(\bfA,\bfP,\{X,Y,Z\},k)$ of \probPClustExt if and only if $\bfB$ is a solution  for the instance $(\bfA,\bfP,k)$ of \probPClust with $J$ representing $\bfP$ with respect to $\bfB$. 
Therefore, in order to solve \probPClust it suffices to 
 solve \probPClustExt.

We construct the recursive branching algorithm for \probPClustExt called \textsc{Extend-$\bfP$-Solution}. The algorithm takes as an  input  matrix $\bfA$, disjoint sets of indices $X,Y,Z\subseteq \{1,\ldots,n\}$ such that $|X|+|Y|=q$,  and a nonnegative integer $k$.
\textsc{Extend-$\bfP$-Solution} executes in three steps.

\medskip
\noindent
{\bf Step~1.} If $Y=\emptyset$, then do the following.
\begin{itemize}
\item[(i)] Check whether $\bfA[\{1,\ldots,m\},X]$ is a $\bfP$-matrix  using Observation~\ref{obs:p-matr-brute}. If it's not, then return NO and stop.
\item[(ii)] If $Z=\emptyset$, then return YES and stop.
\item[(iii)] Verify whether
\begin{equation}\label{eq:Z}
\sum_{i\in Z}\min\{\hdist(\bfa^i,\bfa^j)\mid j\in X\}\leq k,
\end{equation}
return YES if (\ref{eq:Z}) holds, return NO  otherwise;  then  stop.
\end{itemize}

\medskip
\noindent
{\bf Step~2.} For every $h=0,\ldots,k$, do the following.
\begin{itemize}
\item[(i)]  For every $i\in Z$, if $\ell=\min\{\hdist(\bfa^i,\bfa^j)\mid j\in X\}\leq h$, then set $Z=Z\setminus\{i\}$ and $k=k-\ell$.
\item[(ii)] If $|Y|+|Z|\leq \sqrt{k\log (p+k)}$, then solve $(\bfA[\{1,\ldots,m\},X\cup Y\cup Z],\bfP,\{X,Y,Z\},k)$ using Lemma~\ref{lem:ext-Pclust}.
\item[(iii)] If $h\leq k/(|Y|+|Z|)$, then for each $i\in Y$ and each vector $\hat{\bfa}^i\in\{0,1\}^m$ such that $\hdist(\bfa^i,\hat{\bfa}^i)= h$,  let $\hat{\bfA}$ be the matrix obtained from $\bfA$ by t replacing   column $\bfa^i$ by $\hat{\bfa}^i$, call  \textsc{Extend-$\bfP$-Solution}$(\hat{\bfA},X\cup\{i\},Y\setminus \{i\},Z,k-h)$  and if the algorithm returns YES, then return YES and stop.
\end{itemize}

\medskip
\noindent
{\bf Step~3.} Return NO and stop.

\medskip
We call \textsc{Extend-$\bfP$-Solution}$(\bfA,\emptyset,J,\{1,\ldots,n\}\setminus J,k)$ to solve the instance of \probPClustExt that was constructed above for the considered set of indices $J$.

\medskip\noindent\textbf{Correctness.}
To prove correctness, we show first that if the algorithm returns  YES, then $(\bfA[\{1,\ldots,m\},X\cup Y\cup Z],\bfP,\{X,Y,Z\},k)$ is a \yesinstance of \probPClustExt. 

Suppose that $Y=\emptyset$. This case is considered in Step~1. If the algorithms returns YES, then it does not stop in Step~1 (i). Hence, $\bfA[\{1,\ldots,m\},X]$ is a $\bfP$-matrix. If $Z=\emptyset$, we have that $\bfB=\bfA[\{1,\ldots,m\},X]$ is a solution for the instance  $(\bfA[\{1,\ldots,m\},X\cup Y\cup Z],\bfP,\{X,Y,Z\},k)$ of \probPClustExt and the algorithm correctly returns YES. Let $Z\neq \emptyset$.
For every $i\in X$, we set $\bfb^i=\bfa^i$, and for each $i\in Z$, we define the vector $\bfb^i=\bfa^h$ for $h\in X$, where $\hdist(\bfa^i,\bfa^h)=\min\{\hdist(\bfa^i,\bfa^j)\mid j\in X\}$. Consider  matrix $\bfB$ composed of the columns $\bfb^i$ for $i\in X\cup Y\cup Z$. It is easy to verify that $\bfB$ is a solution to $(\bfA[\{1,\ldots,m\},X\cup Y\cup Z],\bfP,\{X,Y,Z\},k)$. Hence, if the algorithm returns YES in Step~1 (iii), then $(\bfA[\{1,\ldots,m\},X\cup Y\cup Z],\bfP,\{X,Y,Z\},k)$ is a \yesinstance of \probPClustExt. 
 
Let $Y\neq \emptyset$. Assume that the algorithm returns YES for some $h\in\{0,\ldots,k\}$ in Step~2. 

Denote by $Z'$ the set obtained from $Z$  in Step~2 (i) and let $k'$ be the value of $k$ obtained in Step~2 (i). Observe that if  $(\bfA[\{1,\ldots,m\},X\cup Y\cup Z'],\bfP,\{X,Y,Z'\},k')$ is a \yesinstance of \probPClustExt, then $(\bfA[\{1,\ldots,m\},X\cup Y\cup Z],\bfP,\{X,Y,Z\},k)$ is a \yesinstance as well. Indeed,  let  $\hat{\bfB}$ be  a solution to   $(\bfA[\{1,\ldots,m\},X\cup Y\cup Z'],\bfP,\{X,Y,Z'\},k')$ with the columns $\hat{\bfb}^i$ for $i\in X\cup Y\cup Z'$.
We define $\bfB$ with the columns $\bfb^i$ for $i\in X\cup Y\cup Z$ as follows: for $i\in X\cup Y\cup Z'$, $\bfb^i=\hat{\bfb}^i$, and for $i\in Z\setminus Z'$, $\bfb^i=\bfa^h=\hat{\bfb}^h$ for $h\in X$ such that 
$\hdist(\bfa^i,\bfa^h)=\min\{\hdist(\bfa^i,\bfa^j)\mid j\in X\}$. It is easy to see that $\bfB$ is a solution for $(\bfA[\{1,\ldots,m\},X\cup Y\cup Z],\bfP,\{X,Y,Z\},k)$.

Suppose that the algorithm returns YES in Step~2 (ii). Then $(\bfA[\{1,\ldots,m\},X\cup Y\cup Z'],\bfP,\{X,Y,Z'\},k')$,  and therefore $(\bfA[\{1,\ldots,m\},X\cup Y\cup Z],\bfP,\{X,Y,Z\},k)$,  are \yesinstances of \probPClustExt. 

Suppose that the algorithm returns YES in Step~2 (iii) for $i\in Y$. Then $(\hat{\bfA}[\{1,\ldots,m\},X\cup Y\cup Z],\bfP,\{X\cup\{i\},Y\setminus\{i\},Z'\},k')$ is a \yesinstance of \probPClustExt. 
Since $\hdist(\bfa^i,\hat{\bfa}^i)= h$, we have that every solution to  $(\hat{\bfA}[\{1,\ldots,m\},X\cup Y\cup Z],\bfP,\{X\cup\{i\},Y\setminus\{i\},Z'\},k')$ is also a  solution to 
 $(\bfA[\{1,\ldots,m\},X\cup Y\cup Z],\bfP,\{X,Y,Z'\},k')$. This implies that $(\bfA[\{1,\ldots,m\},X\cup Y\cup Z'],\bfP,\{X,Y,Z'\},k')$ and, therefore, $(\bfA[\{1,\ldots,m\},X\cup Y\cup Z],\bfP,\{X,Y,Z\},k)$ is a \yesinstance of \probPClustExt. 

We proved that if the algorithm returns the answer YES, then $(\bfA[\{1,\ldots,m\},X\cup Y\cup Z],\bfP,\{X,Y,Z\},k)$ is a \yesinstance of \probPClustExt. Recall that we 
call \textsc{Extend-$\bfP$-Solution}$(\bfA,\emptyset,J,\{1,\ldots,n\}\setminus J,k)$ to solve the instance of \probPClustExt  constructed for the considered set of indices $J\subseteq\{1,\ldots,n\}$ of size $q$ from the instance $(\bfA,\bfP,k)$ of \probPClust. It follows that if \textsc{Extend-$\bfP$-Solution}$(\bfA,\emptyset,J,\{1,\ldots,n\}\setminus J,k)$ outputs YES, then $(\bfA,\bfP,k)$ is a \yesinstance of  
\probPClust.

Let us note  that \textsc{Extend-$\bfP$-Solution}$(\bfA,X,Y,Z,k)$ can return NO  even if \linebreak  $(\bfA[\{1,\ldots,m\},X\cup Y\cup Z],\bfP,\{X,Y,Z\},k)$ is a \yesinstance of \probPClustExt. This can occur when  $X$, $Y$ and $Z$ are arbitrary disjoint subsets of $\{1,\ldots,n\}$ such that $|X|+|Y|=q$. Nevertheless, because we call 
call \textsc{Extend-$\bfP$-Solution}$(\bfA,\emptyset,J,\{1,\ldots,n\}\setminus J,k)$, we are able to show the following claim. 

\begin{claim}\label{claim:correct}
If $(\bfA,\bfP,k)$ is a \yesinstance of  \probPClust and a set $J\subseteq\{1,\ldots,n\}$ of size $q$ is selected  such   that $(\bfA,\bfP,k)$ has a solution $\bfB$ and  
  $J$ represents $\bfP$ with respect to $\bfB$, then \textsc{Extend-$\bfP$-Solution}$(\bfA,\emptyset,J,\{1,\ldots,n\}\setminus J,k)$ returns YES.
\end{claim}

\begin{proof}[Proof of Claim~\ref{claim:correct}] By making use of induction on $|Y|$, we show   that \linebreak \textsc{Extend-$\bfP$-Solution}$(\hat{\bfA},X,Y,Z,k')$ returns YES if 
  $(\hat{\bfA}[\{1,\ldots,m\},X\cup Y\cup Z],\bfP,\{X,Y,Z\},k')$ is a \yesinstance of \probPClustExt,  where
\begin{itemize}
\item  $X\cup Y=J$, 
\item  $\hat{\bfA}$ is the matrix with the columns $\hat{\bfa}^i$ for $i\in X\cup Y\cup Z$, where $\hat{\bfa}^i=\bfb^i$ for $i\in X$ and $\hat{\bfa}^i=\bfa^i$ for $i\in Y\cup Z$, and 
\item
$k'\geq k-\sum_{i\in X\cup (\{1,\ldots,n\}\setminus (X\cup Y\cup Z))}\hdist(\bfa^i,\bfb^i)$. 
 \end{itemize}
 Note that for $X=\emptyset$, this would imply the claim.

The base of the induction is the case $|Y|=0$, i.e., $Y=\emptyset$. This case is considered in Step~1 of  \textsc{Extend-$\bfP$-Solution}. Since $\hat{\bfA}[\{1,\ldots,m\},X]=\bfB[\{1,\ldots,m\},X]$, we have that  $\bfB[\{1,\ldots,m\},X]$ is a $\bfP$-matrix. In particular, the algorithm does not stop in Step~1 (i). If $Z=\emptyset$, then the algorithm returns YES. Let $Z\neq\emptyset$. 
Notice that because 
$\hat{\bfA}[\{1,\ldots,m\},X]=\bfB[\{1,\ldots,m\},X]$ and $J$ represents $\bfP$ with respect to $\bfB$, we have that 
$$\sum_{i\in Z}\min\{\hdist(\bfa^i,\bfa^j)\mid j\in X\}\leq \sum_{i\in Z}\hdist(\bfa^i,\bfb^i)\leq k'.$$
Hence, the algorithm returns YES in Step~2 (iii).

Assume that $|Y|>0$ and we proved our statement for smaller sets $Y$. Let $h^*=\min\{\hdist(\bfa^i,\bfb^i)\mid i\in Y\}$. Let also $i^*\in Y$ be such that $h^*=\hdist(\bfa^{i^*},\bfb^{i^*})$.
We claim that  \textsc{Extend-$\bfP$-Solution} returns YES in Step~3 for $h=h^*$ unless it does not  return YES before. 

Denote by $Z^*$ the set obtained from $Z$ in Step~2 (i) and let $k^*$ the value of $k$ obtained in the same step for $h=h^*$. 

Recall that $J$ represents $\bfP$ with respect to $\bfB$. By the condition d) of the definition, we have that the value of $\|\bfA[\{1,\ldots,m\},J]-\bfB[\{1,\ldots,m\},J]\|_F^2$ is minimum over all $J\subseteq\{1,\ldots,n\}$ satisfying a)--c). In particular, it implies that for each $i\in Z$, either $\bfb^i=\bfb^j$ for some $j\in X$ or $\hdist(\bfa^i,\bfb^i)\geq h^*$. It follows that 
$(\hat{\bfA}[\{1,\ldots,m\},X\cup Y\cup Z^*],\bfP,\{X,Y,Z\},k^*)$ is a \yesinstance of \probPClustExt and $k^*\geq k-\sum_{i\in X\cup (\{1,\ldots,n\}\setminus (X\cup Y\cup Z^*))}\hdist(\bfa^i,\bfb^i)$.

If $|Y|+|Z^*|\leq \sqrt{k^*\log (p+k^*)}$, we solve \probPClustExt for \linebreak $(\bfA[\{1,\ldots,m\},X\cup Y\cup Z^*],\bfP,\{X,Y,Z^*\},k^*)$ directly using Lemma~\ref{lem:ext-Pclust}. Hence, the algorithm returns YES.

Assume that $|Y|+|Z^*|> \sqrt{k^*\log(p+k^*)}$. Notice that for each $i\in Z^*$, we have that $\hdist(\bfa^i,\bfb^i)\geq h^*$. Hence, 
$$k^*\geq \sum_{i\in Y\cup Z}\hdist(\bfa^i,\bfb^i)\geq h^* \cdot (|Y|+|Z|)$$
and $h^*\leq k^*/(|Y|+|Z|)\leq \sqrt{k^*/\log (p+k^*)}$. In Step~2 (iii), we consider every  $i\in Y$ and each vector $\hat{\bfa}^i\in\{0,1\}^m$ such that $\hdist(\bfa^i,\hat{\bfa}^i)= h^*$. In particular, we consider $i=i^*$ and $\hat{\bfa}^{i^*}=\bfb^{i^*}$. Then $(\hat{\bfA}[\{1,\ldots,m\},X\cup Y\cup Z^*],\bfP,\{X\cup\{i^*\},Y\setminus\{i^*\},Z^*\},k^*-h^*)$ for the matrix $\hat{\bfA}^*$ 
 obtained from $\hat{\bfA}$ by the replacing   column $\hat{\bfa}^{i^*}=\bfa^{i^*}$ by $\bfb^{i^*}$ is a \yesinstance of \probPClustExt. By the inductive assumption, we have that the algorithms returns YES for this branch. Recall that whenever the algorithm in Step~2 (iii) obtains YES for some branch, it returns YES and stops.  Since we have such an answer for at least one branch, the algorithm returns YES.
This concludes the proof of Claim.
\end{proof}

Summarizing, we obtain that $(\bfA,\bfP,k)$ is a \yesinstance of  \probPClust if and only if \textsc{Extend-$\bfP$-Solution}$(\bfA,\emptyset,J,\{1,\ldots,n\}\setminus J,k)$ returns YES for some choice of $J\subseteq\{1,\ldots,n\}$ of size $q$. This competes the correctness proof.

\medskip\noindent\textbf{Running time.}
Now we evaluate the running time. The preprocessing is done in polynomial time. Then we consider all  pairwise nonequivalent with respect to $\bfA$ sets of indices $J\in\{1,\ldots,n\}$ of size $q$. There are at most $(q+k)^q$ such sets. Then for each $J$, we run \textsc{Extend-$\bfP$-Solution}$(\bfA,\emptyset,J,\{1,\ldots,n\}\setminus J,k)$. By Observation~\ref{obs:p-matr-brute}, 
Step~1 (i) can be done in time  $2^{\Oh(p\log p+q\log q)}\cdot (nm)^{\Oh(1)}$. 
Parts (ii) and (iii) of Step~1 are performed in polynomial time.
In Step~2, we consider $k+1$ values of $h$ and for each $h$ perform Step 2 (i)--(iii). Step~2 (i) takes polynomial time. Step~2 (ii) can be done in time  $2^{\Oh(p(\sqrt{k\log (p+k)}+\log k)+p\log p+q\log q)}\cdot (nm)^{\Oh(1)}$ by Lemma~\ref{lem:ext-Pclust}.
In Step~2 (iii), we consider at most $|Y| \leq q$ values of $i$, and for each $i$ construct all vectors  $\hat{\bfa}^i\in\{0,1\}^m$ such that $\hdist(\bfa^i,\hat{\bfa}^i)= h\leq \sqrt{k/\log (p+k)}$. Recall that after the preprocessing, we have that $m\leq(p+k)^2$. Hence, we have at most $(p+k)^{2\sqrt{k/\log (p+k)}}$ vectors $\hat{\bfa}^i$. It means that in Step~2 (iii) we have $2^{\Oh(\sqrt{k\log(p+k)})}$ branches. On each recursive call, we reduce the size of $Y$. It means that the depth of the recursion is at most $q$. 
Then the total running time is \[2^{\Oh(p\sqrt{k\log (p+k)}+q\log\sqrt{k\log (p+k)}+p\log p+q\log q+q\log(q+k))}\cdot (nm)^{\Oh(1)}.\]
\end{proof}

Note that the running time in Theorem~\ref{thm:pclust-subexp} is asymmetric in $p$ and $q$ due to the fact that we treat rows and columns in different way but, trivially, the instances $(\bfA,\bfP,k)$ and $(\bfA^\intercal,\bfP^\intercal,k)$ of \probPClust are equivalent.
 If $p$ and $q$ are assumed to be constants,  then \probPClust is solvable in time 
 $2^{\Oh(\sqrt{k\log k})}\cdot (nm)^{\Oh(1)}$.

Notice that we can invoke  Theorem~\ref{thm:pclust-subexp} to solve \probFact as follows.
 We use Observation~\ref{obs:rank} and observe that $(\bfA,r,k)$ is a \yesinstance of \probFact if and only if there is a $2^r\times 2^r$-matrix $\bfP$ of \GF-$\rank $ at most $r$ such that $(\bfA,\bfP,k)$ is a \yesinstance of \probPClust. 
 Matrix $\bfP$ is of  \GF-$\rank$  $r$ if and only if it can be represented as a product $\bfP=\bfU \cdot \bfV$, 
 where  $\bfU$ is $2^r\times r$ and $ \bfV$ is $r\times 2^r$ binary matrix and arithmetic operations are over \GF. 
 There are at most $2^{r2^{r}}$ different 
 binary  $2^r\times r$-matrices $\bfU$,   and at most $2^{r2^{r}}$ different  binary  $r\times 2^r$ -matrices $\bfV$.  Thus there are at most  $2^{2r2^{r}}$  candidate  matrices $\bfP$.
 For each such matrix $\bfP$, we check whether   $(\bfA,\bfP,k)$ is a \yesinstance of \probPClust by invoking Theorem~\ref{thm:pclust-subexp}. However this approach gives a double exponential dependence in $r$, which is much worse the bound provided by   Theorem~\ref{thm:fact-subexp}. Still, this approach is useful if we consider the variant of \probFact for Boolean matrices.

Let us remind that  binary matrix $\bfA$ has the Boolean rank $1$ if $A=\bfx\wedge \bfy^\intercal$ where $\bfx\in\{0,1\}^m$ and $\bfy\in\{0,1\}^n$ are nonzero vectors and the product is  Boolean and that 
  the Boolean rank of $\bfA$ is the minimum integer $r$ such that $\bfA=\bfA^{(1)}\vee\cdots\vee \bfA^{(r)}$ where $\bfA^{(1)},\ldots,\bfA^{(r)}$  are matrices of Boolean rank 1 and the sum is Boolean.

\begin{theorem}\label{cor:brank}
\probBFact   is solvable  in  $2^{\Oh(r2^r\sqrt{k\log k})}\cdot(nm)^{\Oh(1)}$ time.
\end{theorem}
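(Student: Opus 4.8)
The plan is to reduce \probBFact to a modest number of instances of \probPClust and then invoke Theorem~\ref{thm:pclust-subexp}, exactly as sketched in the paragraph preceding the statement. The structural fact driving the reduction is the Boolean analogue of Observation~\ref{obs:rank}: if a Boolean matrix $\bfB$ has Boolean rank at most $r$, then $\bfB=\bfU\wedge\bfV$ for a Boolean $m\times r$ matrix $\bfU$ and a Boolean $r\times n$ matrix $\bfV$, so each row of $\bfB$ equals the Boolean sum $\bigvee_{j\in S}\bfV_j$ of those rows $\bfV_j$ of $\bfV$ indexed by the support $S\subseteq\{1,\ldots,r\}$ of the corresponding row of $\bfU$. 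Since there are only $2^r$ possible supports, $\bfB$ has at most $2^r$ pairwise-distinct rows, and symmetrically at most $2^r$ pairwise-distinct columns. Hence every Boolean-rank-$\le r$ matrix is a $\bfP$-matrix whose pattern $\bfP$ is itself a Boolean matrix of Boolean rank at most $r$ with at most $2^r$ rows and at most $2^r$ columns.

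Accordingly, I would enumerate all candidate patterns: for every $p,q\in\{1,\ldots,2^r\}$ and every pair $(\bfU,\bfV)$ with $\bfU\in\{0,1\}^{p\times r}$ and $\bfV\in\{0,1\}^{r\times q}$, form the Boolean product $\bfP=\bfU\wedge\bfV$. Each such $\bfP$ has Boolean rank at most $r$ by construction, and the number of patterns produced is at most $\sum_{p,q\le 2^r}2^{pr}2^{rq}=2^{\Oh(r2^r)}$. For each candidate $\bfP$ I would run the algorithm of Theorem~\ref{thm:pclust-subexp} on $(\bfA,\bfP,k)$, answering YES iff at least one call returns YES. Correctness has two directions. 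If some call succeeds, there is a $\bfP$-matrix $\bfB$ with $\hdist(\bfA,\bfB)\le k$; as $\bfB$ arises from $\bfP$ by duplicating rows and columns, and such duplication does not increase the Boolean rank, $\bfB$ has Boolean rank at most $r$, so $(\bfA,r,k)$ is a \yesinstance of \probBFact. Conversely, if $(\bfA,r,k)$ is a \yesinstance, fix a witness $\bfB$ of Boolean rank at most $r$ with $\hdist(\bfA,\bfB)\le k$ and compress it to its $p\le 2^r$ distinct rows and $q\le 2^r$ distinct columns; the resulting $p\times q$ matrix $\bfP$ is a submatrix of $\bfB$ and hence has Boolean rank at most $r$ (deleting repeated rows and columns preserves Boolean rank), $\bfB$ is a $\bfP$-matrix, and padding the inner factorization dimension up to exactly $r$ with zero rows and columns shows $\bfP=\bfU\wedge\bfV$ for suitable $\bfU,\bfV$. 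Thus $\bfP$ is among the enumerated patterns and its call returns YES.

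For the running time, every candidate has $p,q\le 2^r$, so a single call of Theorem~\ref{thm:pclust-subexp} costs $2^{\Oh(2^r\sqrt{k\log(2^r+k)}\,+\,r2^r\,+\,2^r\log(2^r+k))}\cdot(nm)^{\Oh(1)}$. Using $\log(2^r+k)=\Oh(r+\log k)$ together with $\sqrt{r+\log k}=\Oh(r\sqrt{\log k})$, each summand in the exponent is $\Oh(r2^r\sqrt{k\log k})$, so a single call runs in time $2^{\Oh(r2^r\sqrt{k\log k})}\cdot(nm)^{\Oh(1)}$. Multiplying by the $2^{\Oh(r2^r)}$ candidate patterns merely rescales the exponent's constant (the enumeration factor being dominated once $\sqrt{k\log k}\ge 1$, with the few remaining small-$k$ cases handled directly), which yields the claimed bound $2^{\Oh(r2^r\sqrt{k\log k})}\cdot(nm)^{\Oh(1)}$.

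I expect the main obstacle to lie not in the reduction mechanics but in matching the pattern enumeration to the Boolean-rank constraint on both sides: one must check that $\bfU\wedge\bfV$ always has Boolean rank at most $r$ (immediate from the factorization) and, crucially, that the \emph{compression} of an arbitrary witness $\bfB$ produces a pattern of Boolean rank at most $r$ that genuinely appears in the enumeration. The delicate aspect is that Boolean rank, unlike \GFrank, is not governed by row or column rank, so the only handle available is the bound of at most $2^r$ distinct rows and columns; this is exactly what forces the pattern size to be $2^r\times 2^r$ and is the origin of the double-exponential dependence on $r$, in line with the \textsc{Biclique Cover}–based lower bound noted in the introduction.
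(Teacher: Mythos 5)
Your proposal is correct and follows essentially the same route as the paper: both bound the number of distinct rows and columns of a Boolean-rank-$\le r$ matrix by $2^r$, enumerate the $2^{\Oh(r2^r)}$ candidate patterns via Boolean factorizations $\bfP=\bfU\wedge\bfV$, and solve \probPClust for each candidate using Theorem~\ref{thm:pclust-subexp}. Your write-up is if anything slightly more careful than the paper's on the two directions of correctness (duplication and compression preserving Boolean rank) and on absorbing the $\log(2^r+k)$ terms into the claimed exponent.
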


\begin{proof}
Let $\bfA$ be a Boolean $m\times n$-matrix with the Boolean rank $r\geq 1$. Then  $\bfA=\bfA^{(1)}\vee \ldots\vee \bfA^{(r)}$ where $\bfA^{(1)},\ldots,\bfA^{(r)}$  are matrices of Boolean rank 1. It implies that $\bfA$ has at most $2^r$ pairwise-distinct rows and at most $2^r$ pairwise-distinct columns. Hence, the Boolean rank of $\bfA$ is at most $r$ if and only if there is a $2^r\times 2^r$-matrix $\bfP$ of Boolean rank at most $r$ such that $\bfA$ is a $\bfP$-matrix. Respectively,  the \probBFact problem can be reformulated as follows: 
Decide whether there is a $2^r\times 2^r$-pattern matrix $\bfP$ with the Boolean rank at most $r$ and an $m\times n$ $\bfP$-matrix $\bfB$ such that $\|\bfA-\bfB\|_F^2\leq k$.

We generate all  $2^r\times 2^r$-matrices $\bfP$ of Boolean rank at most $r$, and then for each matrix $\bfP$, we solve \probPClust for the instance $(\bfA,\bfP,k)$. We return YES if we obtain at least one \yesinstance of \probPClust, and we return NO otherwise. 
 
 By definition, the {Boolean rank} of $\bfP$ is $r$ if and only if  $\bfP=\bfU\wedge \bfV$ for a Boolean $2^r\times r$ matrix $\bfU$ and a Boolean $r\times 2^r$ matrix $\bfV$
Since there are at most $2^{r2^{r}}$ $2^r\times r$-matrices, we construct all the $2^r\times 2^r$-matrices $\bfP$ with the Boolean rank at most $r$ in time $2^{\Oh(r2^{r})}$.
By Theorem~\ref{thm:pclust-subexp},    the considered instances of \probPClust is solvable in time  $2^{\Oh(r2^r\sqrt{k\log k})}\cdot(nm)^{\Oh(1)}$.
\end{proof}

In the conclusion of the section we observe that we hardly can avoid the double exponential dependence on $r$ for \probBFact. Chandran, Issac and Karrenbauer proved in~\cite{ChandranIK16}
that the \textsc{Biclique Cover} problem that asks, given a bipartite graph $G$, whether the set of edges of $G$ could be covered by at most $r$ bicliques (that is, complete bipartite graphs) cannot be solved in time $2^{2^{o(r)}}\cdot |V(G)|^{\Oh(1)}$ unless the \emph{Exponential Time Hypothesis} (ETH) is false (we refer to~\cite{CyganFKLMPPS15} for the introduction to the algorithmic lower bounds based on ETH). Since \textsc{Biclique Cover} is equivalent to deciding whether the bipartite adjacency matrix of $G$ has the Boolean rank at most $r$, \probBFact cannot be solved in time 
$2^{2^{o(r)}}\cdot(nm)^{\Oh(1)}$ for $k=0$ unless ETH fails.

\section{Conclusion and open problems}\label{sec:conclusion}
In this paper we provide a number of parameterized algorithms for a number of binary matrix-approximation problems. Our results uncover some parts of the  complexity landscape of these fascinating  problems. We hope that our work will facilitate further investigation of this important and exciting area. We   conclude with the following concrete open problems about bivariate complexity of \probClust, \probFact, and \probBFact.  

For \probClust we have shown that the problem is solvable in time $2^{\Oh(k\log k)}\cdot(nm)^{\Oh(1)}$. A natural question is whether this  running time is optimal. While  the lower bound of the kind
$2^{o(k)}\cdot(nm)^{\Oh(1)}$ or $2^{o(k\log{k})}\cdot(nm)^{\Oh(1)}$ seems to be  most plausible here,  we do not know any strong argument against, say   a  $2^{o({k})}\cdot(nm)^{\Oh(1)}$-time algorithm. At least for the number of distinct columns  $r\in \Oh(k^{1/2 -\varepsilon})$ with $\varepsilon>0$,
we have a subexponential in $k$ algorithm, so maybe we can solve the problem in time subexponential in $k$  for any value of $r$?

For \probFact we have an algorithm solving the problem in time   $2^{\Oh(r^{ 3/2}\cdot \sqrt{k\log{k}})}(nm)^{\Oh(1)}$. Here, shaving off the $\sqrt{\log{k}}$ factor in  the exponent seems to be a reasonable thing.   However, we do not know how to do it even by the cost of the worse dependence in $r$. In other words, could the problem be solvable in time  $2^{\Oh(f(r)\cdot \sqrt{k})}(nm)^{\Oh(1)}$ for some function $f$? On the other hand, we also do not know how to rule out  algorithms running  in  time $2^{o(r)\cdot o(k))}(nm)^{\Oh(1)}$.
 
For \probBFact, how far is our upper bound $2^{\Oh(r2^r\cdot \sqrt{k\log k})}(nm)^{\Oh(1)}$  from the optimal?  For example, we know that for any function $f$,  the solvability of the problem in  time $2^{2^{o(r)}} f(k) (nm)^{\Oh(1)}$ implies the failure of ETH. Could we rule out  $2^{o(\sqrt{k})} f(r) (nm)^{\Oh(1)}$ algorithm?

From kernelization perspective, we proved that  \probClust  admits a polynomial kernel when parameterized by $k+r$. On the other hand, due to its connection to  \textsc{Biclique Cover}, we  know that already for $k=0$,  \probBFact does not admit a subexponential kernel when parameterized by $r$~\cite{ChandranIK16}. This rules out the existence of  a polynomial in $r+k$ kernel for  \probBFact. However, for  \probFact the existence of a polynomial in $r+k$ kernel is open.

\paragraph{Acknowledgments.} We thank Daniel Lokshtanov, Syed Mohammad Meesum and Saket Saurabh for helpful discussions on the topic of the paper.


\begin{thebibliography}{10}

\bibitem{DBLP:journals/jacm/AgarwalHV04}
{\sc P.~K. Agarwal, S.~Har{-}Peled, and K.~R. Varadarajan}, {\em Approximating
  extent measures of points}, J. {ACM}, 51 (2004), pp.~606--635.

\bibitem{AhoUY83}
{\sc A.~V. Aho, J.~D. Ullman, and M.~Yannakakis}, {\em On notions of
  information transfer in {VLSI} circuits}, in Proceedings of the 15th Annual
  ACM Symposium on Theory of Computing (STOC), {ACM}, 1983, pp.~133--139.

\bibitem{AlonS99}
{\sc N.~Alon and B.~Sudakov}, {\em On two segmentation problems}, J.
  Algorithms, 33 (1999), pp.~173--184.

\bibitem{AlonYZ95}
{\sc N.~Alon, R.~Yuster, and U.~Zwick}, {\em Color-coding}, J. {ACM}, 42
  (1995), pp.~844--856.

\bibitem{AroraGKM12}
{\sc S.~Arora, R.~Ge, R.~Kannan, and A.~Moitra}, {\em Computing a nonnegative
  matrix factorization - provably}, in Proceedings of the 44th Annual ACM
  Symposium on Theory of Computing (STOC), {ACM}, 2012, pp.~145--162.

\bibitem{DBLP:conf/stoc/BadoiuHI02}
{\sc M.~Badoiu, S.~Har{-}Peled, and P.~Indyk}, {\em Approximate clustering via
  core-sets}, in Proceedings of the 34th Annual ACM Symposium on Theory of
  Computing (STOC), {ACM}, 2002, pp.~250--257.

\bibitem{Bartl2010}
{\sc E.~Bartl, R.~Belohl{\'{a}}vek, and J.~Konecny}, {\em Optimal
  decompositions of matrices with grades into binary and graded matrices},
  Annals of Mathematics and Artificial Intelligence, 59 (2010), pp.~151--167.

\bibitem{BasuDL16}
{\sc A.~Basu, M.~Dinitz, and X.~Li}, {\em Computing approximate {PSD}
  factorizations}, CoRR, abs/1602.07351 (2016).

\bibitem{BelohlavekV10}
{\sc R.~Belohl{\'{a}}vek and V.~Vychodil}, {\em Discovery of optimal factors in
  binary data via a novel method of matrix decomposition}, J. Computer and
  System Sciences, 76 (2010), pp.~3--20.

\bibitem{BodlaenderDFH09}
{\sc H.~L. Bodlaender, R.~G. Downey, M.~R. Fellows, and D.~Hermelin}, {\em On
  problems without polynomial kernels}, J. Computer and System Sciences, 75
  (2009), pp.~423--434.

\bibitem{BoucherLL11}
{\sc C.~Boucher, C.~Lo, and D.~Lokshtanov}, {\em Outlier detection for {DNA}
  fragment assembly}, CoRR, abs/1111.0376 (2011).

\bibitem{BringmannKW17}
{\sc K.~Bringmann, P.~Kolev, and D.~P. Woodruff}, {\em Approximation algorithms
  for $\ell_0$-low rank approximation}, in Advances in Neural Information
  Processing Systems 30 (NIPS), 2017, pp.~6651--6662.

\bibitem{ChandranIK16}
{\sc L.~S. Chandran, D.~Issac, and A.~Karrenbauer}, {\em On the parameterized
  complexity of biclique cover and partition}, in Proceedings of the 11th
  International Symposium on Parameterized and Exact Computation (IPEC),
  vol.~63 of LIPIcs, Schloss Dagstuhl - Leibniz-Zentrum fuer Informatik, 2016,
  pp.~11:1--11:13.

\bibitem{cichocki2009nonnegative}
{\sc A.~Cichocki, R.~Zdunek, A.~H. Phan, and S.-i. Amari}, {\em Nonnegative
  matrix and tensor factorizations: applications to exploratory multi-way data
  analysis and blind source separation}, John Wiley \& Sons, 2009.

\bibitem{CilibrasiIK07}
{\sc R.~Cilibrasi, L.~van Iersel, S.~Kelk, and J.~Tromp}, {\em The complexity
  of the single individual {SNP} haplotyping problem}, Algorithmica, 49 (2007),
  pp.~13--36.

\bibitem{ClarksonW15}
{\sc K.~L. Clarkson and D.~P. Woodruff}, {\em Input sparsity and hardness for
  robust subspace approximation}, in Proceedings of the 56th Annual Symposium
  on Foundations of Computer Science (FOCS), {IEEE} Computer Society, 2015,
  pp.~310--329.

\bibitem{cohen1993nonnegative}
{\sc J.~E. Cohen and U.~G. Rothblum}, {\em Nonnegative ranks, decompositions,
  and factorizations of nonnegative matrices}, Linear Algebra and its
  Applications, 190 (1993), pp.~149--168.

\bibitem{CyganFKLMPPS15}
{\sc M.~Cygan, F.~V. Fomin, L.~Kowalik, D.~Lokshtanov, D.~Marx, M.~Pilipczuk,
  M.~Pilipczuk, and S.~Saurabh}, {\em Parameterized Algorithms}, Springer,
  2015.

\bibitem{DanHJWZ15}
{\sc C.~Dan, K.~A. Hansen, H.~Jiang, L.~Wang, and Y.~Zhou}, {\em On low rank
  approximation of binary matrices}, CoRR, abs/1511.01699 (2015).

\bibitem{DowneyF13}
{\sc R.~G. Downey and M.~R. Fellows}, {\em Fundamentals of Parameterized
  Complexity}, Texts in Computer Science, Springer, 2013.

\bibitem{drange2015fastarxiv}
{\sc P.~G. Drange, F.~Reidl, F.~S. Villaamil, and S.~Sikdar}, {\em Fast
  biclustering by dual parameterization}, CoRR, abs/1507.08158 (2015).

\bibitem{Feige14b}
{\sc U.~Feige}, {\em {NP}-hardness of hypercube 2-segmentation}, CoRR,
  abs/1411.0821 (2014).

\bibitem{FioriniMPTW15}
{\sc S.~Fiorini, S.~Massar, S.~Pokutta, H.~R. Tiwary, and R.~de~Wolf}, {\em
  Exponential lower bounds for polytopes in combinatorial optimization}, J.
  ACM, 62 (2015), p.~17.

\bibitem{fomin2011subexponential}
{\sc F.~V. Fomin, S.~Kratsch, M.~Pilipczuk, M.~Pilipczuk, and Y.~Villanger},
  {\em Tight bounds for parameterized complexity of cluster editing with a
  small number of clusters}, J. Computer and System Sciences, 80 (2014),
  pp.~1430--1447.

\bibitem{FominLMSZ17}
{\sc F.~V. Fomin, D.~Lokshtanov, S.~M. Meesum, S.~Saurabh, and M.~Zehavi}, {\em
  Matrix rigidity from the viewpoint of parameterized complexity}, in
  Proceedings of the 34th International Symposium on Theoretical Aspects of
  Computer Science (STACS), vol.~66 of Leibniz International Proceedings in
  Informatics (LIPIcs), 2017, pp.~32:1--32:14.

\bibitem{book:1292686}
{\sc Y.~Fu}, {\em Low-Rank and Sparse Modeling for Visual Analysis}, Springer
  International Publishing, 1~ed., 2014.

\bibitem{DBLP:conf/dis/GeertsGM04}
{\sc F.~Geerts, B.~Goethals, and T.~Mielik{\"{a}}inen}, {\em Tiling databases},
  in , Proceedings of the 7th International Conference on Discovery Science,
  (DS), 2004, pp.~278--289.

\bibitem{GillisV15}
{\sc N.~Gillis and S.~A. Vavasis}, {\em On the complexity of robust {PCA} and
  \mbox{$\ell_1$}-norm low-rank matrix approximation}, CoRR, abs/1509.09236
  (2015).

\bibitem{GrammGHN08}
{\sc J.~Gramm, J.~Guo, F.~H{\"{u}}ffner, and R.~Niedermeier}, {\em Data
  reduction and exact algorithms for clique cover}, {ACM} Journal of
  Experimental Algorithmics, 13 (2008).

\bibitem{GregoryPJL91}
{\sc D.~A. Gregory, N.~J. Pullman, K.~F. Jones, and J.~R. Lundgren}, {\em
  Biclique coverings of regular bigraphs and minimum semiring ranks of regular
  matrices}, J. Combinatorial Theory Ser. B, 51 (1991), pp.~73--89.

\bibitem{grigoRigid}
{\sc D.~Grigoriev}, {\em Using the notions of separability and independence for
  proving the lower bounds on the circuit complexity (in russian)}, Notes of
  the Leningrad branch of the Steklov Mathematical Institute, Nauka,  (1976).

\bibitem{grigoRigidTra}
\leavevmode\vrule height 2pt depth -1.6pt width 23pt, {\em Using the notions of
  separability and independence for proving the lower bounds on the circuit
  complexity}, Journal of Soviet Math., 14 (1980), pp.~1450--1456.

\bibitem{GutchGYT12}
{\sc H.~W. Gutch, P.~Gruber, A.~Yeredor, and F.~J. Theis}, {\em {ICA} over
  finite fields - separability and algorithms}, Signal Processing, 92 (2012),
  pp.~1796--1808.

\bibitem{Guterman08}
{\sc A.~E. Guterman}, {\em Rank and determinant functions for matrices over
  semirings}, in Surveys in contemporary mathematics, vol.~347 of London Math.
  Soc. Lecture Note Ser., Cambridge Univ. Press, Cambridge, 2008, pp.~1--33.

\bibitem{inaba1994applications}
{\sc M.~Inaba, N.~Katoh, and H.~Imai}, {\em Applications of weighted voronoi
  diagrams and randomization to variance-based k-clustering}, in Proceedings of
  the 10th annual symposium on Computational Geometry, ACM, 1994, pp.~332--339.

\bibitem{DBLP:conf/icdm/JiangH13a}
{\sc P.~Jiang and M.~T. Heath}, {\em Mining discrete patterns via binary matrix
  factorization}, in {ICDM} Workshops, {IEEE} Computer Society, 2013,
  pp.~1129--1136.

\bibitem{Jiang2014}
{\sc P.~Jiang, J.~Peng, M.~Heath, and R.~Yang}, {\em A Clustering Approach to
  Constrained Binary Matrix Factorization}, Springer Berlin Heidelberg, Berlin,
  Heidelberg, 2014, pp.~281--303.

\bibitem{RavindrV08}
{\sc R.~Kannan and S.~Vempala}, {\em Spectral algorithms}, Foundations and
  Trends in Theoretical Computer Science, 4 (2009), pp.~157--288.

\bibitem{KleinbergPR04}
{\sc J.~Kleinberg, C.~Papadimitriou, and P.~Raghavan}, {\em Segmentation
  problems}, J. ACM, 51 (2004), pp.~263--280.

\bibitem{Koyuturk2003}
{\sc M.~Koyut\"{u}rk and A.~Grama}, {\em Proximus: A framework for analyzing
  very high dimensional discrete-attributed datasets}, in Proceedings of the
  9th {ACM SIGKDD} International Conference on Knowledge Discovery and Data
  Mining (KDD), New York, NY, USA, 2003, ACM, pp.~147--156.

\bibitem{KumarSS10}
{\sc A.~Kumar, Y.~Sabharwal, and S.~Sen}, {\em Linear-time approximation
  schemes for clustering problems in any dimensions}, J. ACM, 57 (2010),
  pp.~5:1--5:32.

\bibitem{lee1999learning}
{\sc D.~D. Lee and H.~S. Seung}, {\em Learning the parts of objects by
  non-negative matrix factorization}, Nature, 401 (1999), pp.~788--791.

\bibitem{Lokam2009}
{\sc S.~V. Lokam}, {\em Complexity lower bounds using linear algebra}, Found.
  Trends Theor. Comput. Sci., 4 (2009), pp.~1--155.

\bibitem{LovaszS88}
{\sc L.~Lov{\'{a}}sz and M.~E. Saks}, {\em Lattices, m{\"{o}}bius functions and
  communication complexity}, in Proceedings of the 29th Annual Symposium on
  Foundations of Computer Science (FOCS), {IEEE}, 1988, pp.~81--90.

\bibitem{DBLP:conf/icde/LuVA08}
{\sc H.~Lu, J.~Vaidya, and V.~Atluri}, {\em Optimal boolean matrix
  decomposition: Application to role engineering}, in Proceedings of the 24th
  International Conference on Data Engineering, (ICDE), 2008, pp.~297--306.

\bibitem{LuVAH12}
{\sc H.~Lu, J.~Vaidya, V.~Atluri, and Y.~Hong}, {\em Constraint-aware role
  mining via extended boolean matrix decomposition}, {IEEE} Trans. Dependable
  Sec. Comput., 9 (2012), pp.~655--669.

\bibitem{MahonyM11}
{\sc M.~W. Mahoney}, {\em Randomized algorithms for matrices and data},
  Foundations and Trends in Machine Learning, 3 (2011), pp.~123--224.

\bibitem{Marx08}
{\sc D.~Marx}, {\em Closest substring problems with small distances}, {SIAM} J.
  Comput., 38 (2008), pp.~1382--1410.

\bibitem{MeesumMS}
{\sc S.~M. Meesum, P.~Misra, and S.~Saurabh}, {\em Reducing rank of the
  adjacency matrix by graph modification}, Theoret. Comput. Sci., 654 (2016),
  pp.~70--79.

\bibitem{MeesumS16}
{\sc S.~M. Meesum and S.~Saurabh}, {\em Rank reduction of directed graphs by
  vertex and edge deletions}, in Proceedings of the 12th Latin American
  Symposium on (LATIN), vol.~9644 of Lecture Notes in Comput. Sci., Springer,
  2016, pp.~619--633.

\bibitem{MiettinenMGDM08}
{\sc P.~Miettinen, T.~Mielik{\"{a}}inen, A.~Gionis, G.~Das, and H.~Mannila},
  {\em The discrete basis problem}, {IEEE} Trans. Knowl. Data Eng., 20 (2008),
  pp.~1348--1362.

\bibitem{DBLP:conf/kdd/MiettinenV11}
{\sc P.~Miettinen and J.~Vreeken}, {\em Model order selection for boolean
  matrix factorization}, in Proceedings of the 17th {ACM} {SIGKDD}
  International Conference on Knowledge Discovery and Data Mining (KDD), {ACM},
  2011, pp.~51--59.

\bibitem{Mitra:2016}
{\sc B.~Mitra, S.~Sural, J.~Vaidya, and V.~Atluri}, {\em A survey of role
  mining}, ACM Comput. Surv., 48 (2016), pp.~50:1--50:37.

\bibitem{Moitra16}
{\sc A.~Moitra}, {\em An almost optimal algorithm for computing nonnegative
  rank}, {SIAM} J. Comput., 45 (2016), pp.~156--173.

\bibitem{naiknon}
{\sc G.~R. Naik}, {\em Non-negative Matrix Factorization Techniques}, Springer,
  2016.

\bibitem{NaorSS95}
{\sc M.~Naor, L.~J. Schulman, and A.~Srinivasan}, {\em Splitters and
  near-optimal derandomization}, in Proceedings of the 36th Annual Symposium on
  Foundations of Computer Science (FOCS), IEEE, 1995, pp.~182--191.

\bibitem{Orlin77}
{\sc J.~Orlin}, {\em Contentment in graph theory: covering graphs with
  cliques}, Nederl. Akad. Wetensch. Proc. Ser. A {\bf 80}=Indag. Math., 39
  (1977), pp.~406--424.

\bibitem{OstrovskyR02}
{\sc R.~Ostrovsky and Y.~Rabani}, {\em Polynomial-time approximation schemes
  for geometric min-sum median clustering}, J. ACM, 49 (2002), pp.~139--156.

\bibitem{PainskyRF16}
{\sc A.~Painsky, S.~Rosset, and M.~Feder}, {\em Generalized independent
  component analysis over finite alphabets}, {IEEE} Trans. Information Theory,
  62 (2016), pp.~1038--1053.

\bibitem{raz89}
{\sc A.~A. Razborov}, {\em On rigid matrices}, Manuscript in russian,  (1989).

\bibitem{RazenshteynSW16}
{\sc I.~P. Razenshteyn, Z.~Song, and D.~P. Woodruff}, {\em Weighted low rank
  approximations with provable guarantees}, in Proceedings of the 48th Annual
  ACM Symposium on Theory of Computing (STOC), {ACM}, 2016, pp.~250--263.

\bibitem{Shen2009}
{\sc B.-H. Shen, S.~Ji, and J.~Ye}, {\em Mining discrete patterns via binary
  matrix factorization}, in Proceedings of the 15th {ACM SIGKDD} International
  Conference on Knowledge Discovery and Data Mining (KDD), New York, NY, USA,
  2009, ACM, pp.~757--766.

\bibitem{DBLP:conf/icde/Vaidya12}
{\sc J.~Vaidya}, {\em Boolean matrix decomposition problem: Theory, variations
  and applications to data engineering}, in Proceedings of the 28th {IEEE}
  International Conference on Data Engineering (ICDE), {IEEE} Computer Society,
  2012, pp.~1222--1224.

\bibitem{VaidyaAG07}
{\sc J.~Vaidya, V.~Atluri, and Q.~Guo}, {\em The role mining problem: finding a
  minimal descriptive set of roles}, in Proceedings of the 12th {ACM} Symposium
  on Access Control Models and (SACMAT), 2007, pp.~175--184.

\bibitem{Valiant77}
{\sc L.~G. Valiant}, {\em Graph-theoretic arguments in low-level complexity},
  in Mathematical Foundations of Computer Science (MFCS), vol.~53 of Lecture
  Notes in Comput. Sci., Springer, 1977, pp.~162--176.

\bibitem{Woodr14}
{\sc D.~P. Woodruff}, {\em Sketching as a tool for numerical linear algebra},
  Foundations and Trends in Theoretical Computer Science, 10 (2014),
  pp.~1--157.

\bibitem{WulffUB13}
{\sc S.~Wulff, R.~Urner, and S.~Ben{-}David}, {\em Monochromatic
  bi-clustering}, in Proceedings of the 30th International Conference on
  Machine Learning, (ICML), vol.~28 of {JMLR} Workshop and Conference
  Proceedings, JMLR.org, 2013, pp.~145--153.

\bibitem{Yannakakis91}
{\sc M.~Yannakakis}, {\em Expressing combinatorial optimization problems by
  linear programs}, J. Comput. Syst. Sci., 43 (1991), pp.~441--466.

\bibitem{Yeredor11}
{\sc A.~Yeredor}, {\em Independent component analysis over {G}alois fields of
  prime order}, {IEEE} Trans. Information Theory, 57 (2011), pp.~5342--5359.

\end{thebibliography}
\end{document}